\def\orcidID#1{}
\newif\iftechreport
\newif\ifpaper
\begin{document}

\title{Deciding Boolean Separation Logic via Small Models
    \thanks{The work was supported by the Czech Science Foundation project
    GA23-06506S. Basic research funding of the Czech team was provided by the FIT BUT
    internal project FIT-S-23-8151 and the ERC.CZ project LL1908. Tomáš Dacík
    was supported by the Brno Ph.D. Talent Scholarship funded by the Brno City
    Municipality.}
}
\iftechreport
    \subtitle{(Technical Report) \vspace*{-4mm}}
\fi
\iftechreport
    \titlerunning{Deciding Boolean Separation Logic via Small Models (Technical Report)}
\else
    \titlerunning{Deciding Boolean Separation Logic via Small Models}
\fi
    
%
\author{Tomáš Dacík\inst{1}\orcidID{0000-0003-4083-8943} \and Adam
Rogalewicz\inst{1}\orcidID{0000-0002-7911-0549} \and Tomáš
Vojnar\inst{1}\orcidID{0000-0002-2746-8792} \and Florian
Zuleger\inst{2}\orcidID{0000-0003-1468-8398} }
\authorrunning{T. Dacík, A. Rogalewicz, T. Vojnar, and F. Zuleger}

%
\institute{
    Faculty of Information Technology, Brno University of Technology, Czech Republic \and
    Faculty of Informatics, Vienna University of Technology, Austria}

\newcommand{\todo}[1]{\color{red}{TODO: #1}\color{black}}
\newcommand{\anonymize}[2]{{#2}}

\newcommand{\td}[1]{\textcolor{blue}{#1}}
\newcommand{\fz}[1]{\textcolor{cyan}{#1}}
\newcommand{\tv}[1]{\textcolor{magenta}{#1}}
\newcommand{\ar}[1]{\textcolor{green}{#1}}

\newcommand{\BSL}{BSL\xspace}

\newcommand{\arrow}{\xrightarrow{}}
\newcommand{\dom}{\mathsf{dom}}
\newcommand{\img}{\mathsf{img}}
\renewcommand{\implies}{\xrightarrow{}}
\DeclarePairedDelimiter\ceil{\lceil}{\rceil}
\DeclarePairedDelimiter\floor{\lfloor}{\rfloor}
\newcommand{\wildcard}{\underline{\hspace{0.2cm}}}

\newcommand{\vs}{\overline{v}}
\newcommand{\xs}{\overline{x}}
\newcommand{\ys}{\overline{Y}}

\newcommand{\Vars}{\mathsf{Vars}}
\newcommand{\Sort}{\mathsf{Sort}}
\renewcommand{\S}{S}
\newcommand{\lsSort}{\mathbb{S}}
\newcommand{\dlsSort}{\mathbb{D}}
\newcommand{\nlsSort}{\mathbb{N}}

\newcommand{\sh}{(s,h)}
\newcommand{\Locs}{\mathsf{Loc}}
\newcommand{\locations}{\mathsf{locs}}
\newcommand{\Fields}{\mathsf{Field}}
\newcommand{\f}{\mathsf{f}}
\newcommand{\n}{\mathsf{n}}
\newcommand{\p}{\mathsf{p}}
\renewcommand{\t}{\mathsf{t}}

\renewcommand{\phi}{\varphi}
\newcommand{\true}{\mathsf{true}}
\newcommand{\false}{\mathsf{false}}
\newcommand{\pto}{\mapsto}
\newcommand{\cls}[1]{\langle \mathsf{n}\!: #1 \rangle}
\newcommand{\cdls}[2]{\langle \mathsf{n}\!:#1, \mathsf{p}\!:#2 \rangle}
\newcommand{\cnls}[2]{\langle \mathsf{n}\!:#1, \mathsf{t}\!:#2 \rangle}
\renewcommand{\star}{\ast}
\renewcommand{\bigstar}{\mathop{\scalebox{1.5}{\raisebox{-0.1ex}{$\ast$}}}}
\newcommand{\gneg}{\land \neg}
\newcommand{\nil}{\mathsf{nil}}
\newcommand{\emp}{\mathsf{emp}}
\newcommand{\vars}{\mathsf{vars}}

\newcommand{\ls}{\mathsf{sls}}
\newcommand{\lsp}{\ls_{\geq 1}}
\newcommand{\lspp}{\ls_{\geq 2}}
\newcommand{\lspred}{\ls(x,y)}

\newcommand{\dls}{\mathsf{dls}}
\newcommand{\dlsp}{\mathsf{dls}_{\geq 1}}
\newcommand{\dlspp}{\mathsf{dls}_{\geq 2}}
\newcommand{\dlsppp}{\mathsf{dls}_{\geq 3}}
\newcommand{\dlspred}{\dls(x,y,x',y')}

\newcommand{\nls}{\mathsf{nls}}
\newcommand{\nlsp}{\mathsf{nls}_{\geq 1}}
\newcommand{\nlspp}{\mathsf{nls}_{\geq 2}}
\newcommand{\nlspred}{\nls(x,y,z)}

\newcommand{\pheader}{\overline{v}}

\newcommand{\nesteddom}{\dom_N}
\newcommand{\gpath}[4]{#1: #2 {\;\leadsto}_{#4}\; #3}
\newcommand{\gpathG}[5]{#1: #2 {\;\leadsto}_{#4}\; #3}
\newcommand{\plower}{\ell}
\newcommand{\pupper}{u}

\newcommand{\M}{\mathcal{M}}
\newcommand{\ite}[3]{\mathsf{if} \;(#1)\; \mathsf{then} \;(#2)\; \mathsf{else} \;(#3)}
\newcommand{\iite}[3]{\mathsf{if} \;#1\; \mathsf{then} \;#2\; \mathsf{else} \;#3}
\newcommand{\ifthen}[2]{\mathsf{if} \;(#1)\; \mathsf{then} \;(#2)}
\newcommand{\elseif}[2]{\mathsf{else\;if} \;(#1)\; \mathsf{then} \;(#2)}
\newcommand{\ielse}[1]{\mathsf{else} \;(#1)}

\newcommand{\chunks}{\mathsf{chunks}}
\newcommand{\atoms}{\mathsf{atoms}}
\newcommand{\reduction}{\downarrow}
\newcommand{\reduce}[2]{\reduction^{#1}{\!#2}}
\newcommand{\reducesh}{\reduce{s}{h}}
\newcommand{\bound}{\mathsf{bound}}
\newcommand{\reducemodel}[1]{\reduction^{#1}\!(s,h)}
\newcommand{\A}{\mathsf{\alpha}}

\renewcommand{\L}{\mathsf{L}}
\newcommand{\lslocs}{\mathsf{SLS}}
\newcommand{\dlslocs}{\mathsf{DLS}}
\newcommand{\nlslocs}{\mathsf{NLS}}
\newcommand{\arrnext}{h_\n}
\newcommand{\arrprev}{h_\p}
\newcommand{\arrtop}{h_\t}
\newcommand{\arrfield}{t_\f}
\newcommand{\ES}{D_\lsSort}
\newcommand{\ED}{D_\dlsSort}
\newcommand{\EN}{D_\nlsSort}
\newcommand{\heap}{\mathbf{h}}
\newcommand{\translate}[2]{\mathsf{T}(#1, #2)}
\newcommand{\itranslate}[2]{\mathsf{T}_{#1}^{-1}(#2)}
\newcommand{\ttranslate}[1]{\translate{#1}{D}}
\newcommand{\reach}{\mathsf{reach}}
\renewcommand{\path}{\mathsf{path}}
\newcommand{\allpathBounds}[3]{\mathbb{P}^{\leq n}_{(#1, #2)}\;#3.\;}
\newcommand{\allpathBoundsNoDot}[3]{\mathbb{P}^{\leq n}_{(#1, #2)}\;#3\;}
\newcommand{\allpath}[3]{\mathbb{P}_{(#1, #2)}\;#3.\;}

\newcommand{\footprints}[1]{\mathsf{FP}_{(s,h)}(#1)}
\newcommand{\ffootprints}[1]{\mathsf{FP}_{\M}^{\#}(#1)}
\newcommand{\Footprints}[1]{\mathsf{FP}(#1)}
\newcommand{\FFootprints}[1]{\mathsf{FP}^{\#}(#1)}

\newcommand{\NP}{\mathsf{NP}}
\newcommand{\PSPACE}{\mathsf{PSPACE}}
\newcommand{\NEXP}{\mathsf{NEXP}}
\newcommand{\red}[1]{\mathsf{R}(#1)}

\newcommand{\astral}{\textsc{Astral}\xspace}
\newcommand{\cvc}{\textsc{cvc5}\xspace}
\newcommand{\harrsh}{\textsc{Harrsh}\xspace}
\newcommand{\sloth}{\textsc{Sloth}\xspace}
\newcommand{\sls}{\textsc{SLS}\xspace}
\newcommand{\sss}{\textsc{S2S}\xspace}
\newcommand{\songbird}{\textsc{Songbird}\xspace}
\newcommand{\grasshopper}{\textsc{GRASShopper}\xspace}
\newcommand{\benchexec}{\textsc{Benchexec}\xspace}
\newcommand{\bitwuzla}{\textsc{Bitwuzla}\xspace}
\newcommand{\zzz}{\textsc{Z3}\xspace}
\newcommand{\asterix}{\textsc{Asterix}\xspace}

\renewcommand{\H}{\mathcal{H}}
\newcommand{\corresponds}[1]{\approx_{\phi}}

\newcommand{\bvand}{\mathbin{\&}}
\newcommand{\bvor}{\mathbin{|}}
\newcommand{\shiftLeft}{\ll}
\newcommand{\shiftRight}{\gg}
\newcommand{\bitneg}{\ensuremath{\mathord{\sim}}}


\newcommand{\squplus}{%
  \mathrel{\vbox{\offinterlineskip\ialign{%
    \hfil##\hfil\cr
    $\scriptscriptstyle+$\cr
    \noalign{\kern-1ex}
    $\normalsize\sqcup$\cr
}}}}

\newcommand{\auxneq}{\clipbox{0pt 1.5pt 0pt 1.5pt}{$\neq$}}
\newcommand{\auxnotin}{{$\notin$}}
\newlength{\notinlength}
\settowidth{\notinlength}{$\notin$}

\newlength{\neqlength}
\settowidth{\neqlength}{$\auxneq$}

\newcommand{\musteq}{\mathrel{\makebox{\makebox[\neqlength]{$\bigcirc$}\hspace*{-\neqlength}$=$}}}
\newcommand{\mustneq}{\mathrel{\makebox{\makebox[\neqlength]{$\bigcirc$}\hspace*{-\neqlength}$\auxneq$}}}

\newcommand{\shortpto}{\raisebox{0.7pt}{\scalebox{0.67}{$\hspace{1.pt}\mapsto$}}}
\newcommand{\mustpto}[1]{\mathrel{\makebox{\makebox[\neqlength]{$\bigcirc$}\hspace*{-\neqlength}$\shortpto$}}_{\hspace{1pt}#1}}

\newcommand{\shortleadsto}{\raisebox{0.7pt}{\scalebox{0.67}{$\hspace{1.pt}\leadsto$}}}
\newcommand{\mustpath}[1]{\mathrel{\makebox{\makebox[\neqlength]{$\bigcirc$}\hspace*{-\neqlength}$\shortleadsto$}}_{\hspace{1pt}#1}}

\newcommand{\alloc}{\mathsf{alloc}}
\newcommand{\roots}{\mathsf{roots}}
\newcommand{\paths}{\mathsf{paths}}

\newcommand{\muststar}[1]{\stackMath\mathbin{\stackinset{c}{0ex}{c}{0ex}{\star}{\bigcirc}_#1}}
\newcommand{\mustnotin}{\mathrel{\makebox{\makebox[\neqlength]{$\bigcirc$}\hspace*{-\notinlength}\hspace{-1pt}$\auxnotin$}}}

\newcommand{\chunksize}[1]{\lvert\lvert #1 \rvert \rvert}

\newcommand{\gmin}{G^{\mathrm{\ell}}_\sigma}
\newcommand{\gmax}{G^{\mathrm{u}}_\sigma}

\newcommand{\stackbound}{\mathsf{bound}_s}
\newcommand{\locsbound}{\mathsf{bound}}

\maketitle              

\vspace*{-6mm}
\begin{abstract} We present a novel decision procedure for a
fragment of separation logic (SL) with arbitrary nesting of separating
conjunctions with boolean conjunctions, disjunctions, and guarded negations
together with a support for the most common variants of linked lists. Our method
is based on a model-based translation to SMT for which we introduce several
optimisations---the most important of them is based on bounding the size of
predicate instantiations within models of larger formulae, which leads to a much
more efficient translation of SL formulae to SMT. Through a series of
experiments, we show that, on the frequently used symbolic heap fragment, our
decision procedure is competitive with other existing approaches, and it can
outperform them outside the symbolic heap fragment. Moreover, our decision
procedure can also handle some formulae for which no decision procedure has been
implemented so far. \end{abstract}


\vspace*{-8mm}\section{Introduction}\vspace*{-2mm}

In the last decade, separation logic (SL) \cite{SL,OHearn:BI:01} has become one of the
most popular formalisms for reasoning about programs working with
dynamically-allocated memory, including approaches based on deductive
verification \cite{DBLP:journals/sttt/SummersM20}, abstract interpretation \cite{InvaderCAV08}, symbolic
execution \cite{GilPartI20}, or bi-abductive analysis
\cite{Dino-BiAbd11,SecOrderBiAbd14,Broom}. The key ingredients of SL used in these
approaches include the separating conjunction $*$, which allows modular
reasoning by stating that the program heap can be decomposed into disjoint parts
satisfying operands of the separating conjunction, along with inductive
predicates describing shapes of data structures, such as lists, trees, or their
various combinations. 


The high expressive power of SL comes with the price of high complexity and even
undecidability when several of its features are combined together. The existing
decision procedures are usually limited to the so-called \emph{symbolic heap
fragment} that disallows any boolean structure of spatial assertions.

\enlargethispage{6mm}

In this paper, we present a novel decision procedure for a fragment of SL that
we call \emph{boolean separation logic} (\BSL). The fragment allows arbitrary
nesting of separating conjunctions and boolean connectives of conjunction,
disjunction, and a limited form of negation of the form $\phi \gneg \psi$ called
\textit{guarded negation}. To the best of our knowledge, no existing,
practically applicable decision procedure supports a fragment with such a rich
boolean structure and at least basic inductive predicates. The decision
procedure for SL in \cvc \cite{cvc_sl} supports arbitrary nesting of boolean
connectives (including even unguarded negation, which is considered very
expensive in the context of SL) but no inductive predicates. A support for
conjunctions and disjunctions under separating conjunctions is available in the
backend solver of the \grasshopper verifier \cite{automating_sl,
automating_sl_trees} though not described in the papers. In our experimental
evaluation, we outperform both of these approaches on some benchmarks (and
can decide some formulae beyond the capabilities of both of them).
We further
show that adding guarded negations to BSL makes its satisfiability problem $\PSPACE$-hard.

To motivate the usefulness of the fragment we consider, we now give several
examples when SL formulae with a rich boolean structure are useful.
First, in symbolic execution of heap manipulating programs, one usually needs to
consider functions that involve some non-determinism---typically, at least the
\texttt{malloc} statement has the non-deterministic contract $\{\emp\} \;\texttt{x
= malloc()}\; \{x \pto f \lor (x = \nil \land \emp)\}$ (where $f$ is a fresh variable) stating that when the statement is
started in the empty heap, once it finishes, $x$ is either allocated, or the
allocation had failed and the heap is empty. Such contracts typically need a
dedicated (and usually incomplete) treatment when no support of disjunctions is
available.\footnote{Note that, while the post-condition with a single
disjunction might seem simple, the formulae typically start growing in the
further symbolic execution.}
Further, the guarded negation semantically represents the set of counterexamples
of the entailment $\phi \models \psi$, and hence allows one to reduce entailment
queries to UNSAT checking.
Guarded negation can also be used when one needs to obtain several models of a
formula $\phi$ by joining formulae representing the already obtained models to
$\phi$ using guarded negations.
One can also use the guarded negation to express interesting properties
such as the fact that given a list $\ls(x,y)$ and a pointer $y \pto z$, the
pointer does not point back somewhere into the list closing a lasso. This can be
expressed through the formula $\bigl(\ls(x,y) \gneg \bigl(\ls(x, z) \star \ls(z,
y) \bigr)\bigr) \star y \pto z$.
Finally, boolean connectives can be introduced by translating quantitative separation logic into the classical SL \cite{quantitative_sl}.

In this work, we consider BSL with three fixed, built-in inductive predicates
representing the most-common variants of lists: singly-linked (SLL),
doubly-linked (DLL), and nested singly-linked (NLL). Our results can be easily
extended for their variations such as nested doubly-linked lists of
singly-linked lists and the like, but for the price of manually defining their
semantics in the SMT encoding. We do, however, believe that our approach of
bounding the sizes of models and instantiations of the individual predicates can
be lifted to more complex inductive definitions and can serve as a starting
point for allowing integration of SL with inductive definitions into SMT.

\vspace*{-2mm}\paragraph{Contributions.}

Our approach to deciding BSL formulae is inspired by previous works on
translation of SL to SMT. The early works \cite{automating_sl} and
\cite{automating_sl_trees} translate SL to intermediate theories first.  Our
approach is closer to the more recent approach of \cite{sl_data}, which builds
on small-model properties and axiomatizes reachability through pointer links
directly. We extend the SL fragment considered in \cite{sl_data} by going beyond
the so-called unique footprint property (under which it is much easier to obtain
an efficient translation). Further, we define a more precise way to obtain
global bounds on models of entire formulae, and, most importantly, we modify the
translation of inductive predicates in a way that allows us to encode them
succinctly by computing local bounds on their instantiations. According to our
experiments, this makes the decision procedure efficient and competitive with
the state-of-the-art approaches on the symbolic heap fragment (despite the
increased decisive power).
\ifpaper
    The claims we make in this paper are proven~in~\cite{tech_report}.
\fi

\enlargethispage{6mm}

\vspace*{-2mm}\paragraph{Related work.}

In \cite{decidable_sl}, a proof system for deciding entailments of symbolic
heaps with lists was proposed. This problem was later shown to be solvable in
polynomial time in \cite{sl_graphs} via graph homomorphism checking. A
superposition-based calculus for the fragment was presented in
\cite{SL-superposition}, and a model-based approach enhancing SMT solvers was
proposed in \cite{sl_mod_theories}. In \cite{sl_mod_theories}, a combination of
SL with SMT theories is considered but still limited to the symbolic heap
fragment. A more expressive boolean structure and integration with SMT theories
was developed in \cite{automating_sl} for lists and extended for trees in
\cite{automating_sl_trees} but still without a support for guarded
negations.

Other decision procedures are focusing on more general, \textit{user-defined}
inductive predicates (usually of some restricted form). They are based, e.g., on
\textit{cyclic proof systems} (\textsc{Cyclist}
\cite{Brotherston2012AGC}, \textsc{S2S} \cite{S2S, S2S2}); lemma synthesis (\textsc{Songbird}\cite{10.1145/3158097}); or
automata---tree automata are used in the tools \textsc{Slide}
\cite{SL_tree_automata} and \textsc{Spen} \cite{spen}, and a specialised type of
automata, called \textit{heap automata}, is used in \textsc{Harrsh}
\cite{harrsh}.
These procedures do, however, not support nested use of
boolean connectives and separating conjunctions.

There also exist works on deciding much more expressive fragments of SL such as
\cite{bsr_sl, Guarded-SL, strong-sl, conf/concur/IosifZ23} but they do not lead to practically
implementable decision procedures.

\vspace*{-2mm}\section{Preliminaries}

\vspace*{-1.5mm}\paragraph{Partial functions.}

We write $f : X \rightharpoonup Y$ to denote a \textit{partial function} from
$X$ to~$Y$. For a partial function $f$, $\dom(f)$ and $\img(f)$ denote its
domain and image, respectively; $|f| = |\dom(f)|$ denotes its size, and $f(x) =
\bot$ denotes that $f$ is undefined for $x$. A~restriction $f|_A$ of $f$ to $A
\subseteq X$ is defined as $f(x)$ for $x \in A$ and undefined otherwise. To
represent a finite partial function~$f$, we often use the set notation $f =
\{x_1 \mapsto y_1, \ldots, x_n \mapsto y_n\}$ meaning that $f$ maps each $x_i$
to $y_i$, and is undefined for other values. We call partial functions $f_1$ and
$f_2$ \textit{disjoint} if $\dom(f_1) \cap \dom(f_2) = \emptyset$ and define
their \textit{disjoint union} $f_1 \uplus f_2$ as $f_1 \cup f_2$, which is
otherwise undefined.

\vspace*{-1.5mm}\paragraph{Graphs and paths.}

Let $G = (V, \arrow_1, \ldots, \arrow_m)$ be a directed graph with vertices~$V$
and edges $\arrow = \arrow_1 \cup \cdots \cup \arrow_m$. For $1 \leq \f \leq m$,
a sequence $\sigma = \langle v_0, v_1, \ldots, v_n \rangle \in V^{+}$ is a path
from $v_0$ to $v_n$ via $\arrow_\f$ in $G$, denoted as
$\gpathG{\sigma}{v_0}{v_n}{\f}{G}$, if all elements of $\sigma$ are distinct, and
for all $0 \leq i < n$, it holds that $v_i \arrow_\f v_{i+1}$. By the
definition, paths cannot be cyclic. The \textit{domain} of the path~$\sigma$ is
the set $\dom(\sigma) = \{v_0, v_1, ..., v_{n-1}\}$, and the length of the path
is defined as $|\sigma| = |\dom(\sigma)| = n$.

\vspace*{-1.5mm}\paragraph{Formulae.}

For a first-order formula $\phi$, we denote by $\phi[t /
x]$ the formula obtained by simultaneously replacing all free occurrences of the
variable $x$ in $\phi$ with the term $t$.
For a first-order model~$\M$ and a term $t$, we
write $t^{\M}$ to denote the evaluation of $t$ in $\M$ defined as usual.


\enlargethispage{6mm}

\vspace*{-2mm}\section{Separation Logic}\vspace*{-1mm} \label{section:sl}

\vspace*{-1mm}\paragraph{Syntax.}

Let $\Vars$ be a countably infinite set of \textit{sorted variables}. We denote
by $x^S$ a variable~$x$ of a sort $S \in \Sort = \{\lsSort, \dlsSort,
\nlsSort\}$ representing a location in an SLL, DLL, or NLL,
respectively. We omit the sorts when they are not relevant or clear from the
context. We further assume that there exists a distinguished, unsorted
variable~$\nil$. We write $\vars(\phi)$ to denote the set of all variables in $\phi$ plus $\nil$ (even when it does not appear in $\phi$). Analogically, $\vars_S(\phi)$ stands for all variables of the sort $S$ plus $\nil$.

The syntax of our fragment is given by the following grammar:
\begin{align*}
    p   &\Coloneqq x^{\lsSort} \mapsto \cls{n} \;|\;
    x^{\dlsSort} \mapsto \cdls{n}{p} \;|\;
    x^{\nlsSort} \mapsto \cnls{n}{t} && \text{(points-to predicates)}\\
    \pi &\Coloneqq \ls(x^{\lsSort}, y^{\lsSort}) \;|\;
    \dls(x^{\dlsSort}, y^{\dlsSort}, x^{\dlsSort}_b, y^{\dlsSort}_b) \;|\;
    \nls(x^{\nlsSort}, y^{\nlsSort}, z^{\lsSort}) && \text{(inductive predicates)}\\
    \phi_{A} &\Coloneqq x = y \;|\; x \neq y \;|\; p \;|\; \pi
    && \text{(atomic formulae)}\\
    \phi &\Coloneqq \phi_{A} \;|\; \phi \star \phi \;|\; \phi \land \phi \;|\;
    \phi \lor \phi \;|\; \phi \gneg \phi && \text{(formulae)}\vspace*{-0.5mm}
\end{align*}

The \emph{points-to} predicate $x \mapsto \langle \f_1
: f_1, \ldots, \f_n: f_n\rangle$ denotes that $x$ is a structure whose fields
$\f_i$ point to values $f_i$. We often write $x \mapsto n$ instead of $x \mapsto
\cls{n}$ and $x \mapsto \wildcard$ if the right-hand side is not relevant.
We call $x$ the \emph{root} of the points-to predicate.
If $\pi$ is an inductive predicate $\lspred$, $\dlspred$, or $\nlspred$, we
again call $x$ the root of $\pi$, $y$ is the \emph{sink} of $\pi$, and we write
$\pi(x, y)$ to denote the root and the sink. We define the \emph{sort} of the
predicate $\pi$, denoted as $S_\pi$, as the sort of its root. Then, there is a
one-to-one correspondence of predicates and sorts, which we often implicitly
use.


\vspace*{-1.5mm}\paragraph{Memory model.}

Let $\Locs$ be a countably infinite set of memory locations, and let $\Fields =
\{ \n, \p, \t \}$ be the set of fields. A \textit{stack} is a finite partial
function $s: \Vars \rightharpoonup \Locs$. A \textit{heap} is a finite partial
function $h : \Locs \rightharpoonup (\Fields \rightharpoonup \Locs)$. For
succinctness, we write $h(\ell, \f)$ instead of $h(\ell)(\f)$. To represent heap
elements in a readable way, we write functions $\Fields \rightharpoonup \Locs$
as vectors with labels, i.e., $h(\ell) = \langle \f : h(\ell, \f) \;|\;  \f \in \Fields \;\land\; h(\ell, \f) \neq
\bot \rangle$ and we write $\img(h)$ for $\{\ell \in \Locs \;|\; \exists \ell', \f.\; h(\ell', \f) = \ell\}$. Moreover, we use $h(\ell) = n$ when
$h(\ell) = \langle \n : n \rangle$. A \textit{stack-heap model} is a pair
$(s,h)$ where $s$ is stack and $h$ is a heap such that $s(\nil) \neq \bot$ and $h(s(\nil)) = \bot$.
We define the set of locations of the model $(s,h)$ as $\locations{(s,h)} = \img(s) \cup \dom(h) \cup \img(h)$.

\begin{figure}[t]
\vspace*{-6mm}
\begin{align*}
    (s,h) &\models x \bowtie y
        &\text{iff }& s(x) \bowtie s(y) \text{ and } \dom(h) = \emptyset
        \text{ for} \bowtie \;\in \{=, \neq\}\\
    (s,h) &\models x \mapsto \langle \f_i : f_i \rangle_{i \in I}
        &\text{iff }& h = \{s(x) \mapsto \langle \f_i : s(f_i) \rangle_{i \in I}\} \\
    (s,h) &\models \psi_1 \bowtie \psi_2
        &\text{iff }& (s,h) \models \psi_1 \bowtie (s,h) \models \psi_2
        \text{ for} \bowtie \;\in \{\land, \gneg, \lor\}\\
    (s,h) &\models \psi_1 \star \psi_2
        &\text{iff }& \exists h_1, h_2.\; h = h_1 \uplus h_2 \neq \bot
        \text{ and } (s,h_i) \models \psi_i \text{ for } i =1, 2\\
\\
    (s,h) &\models \exists x.\; \psi
        &\text{iff }& \text{there exists $\ell$ such that }
        (s[x \mapsto \ell],h) \models \psi\\
    (s,h) &\models \lspred
        &\text{iff }& (s,h) \models x = y \text{, or } s(x) \neq s(y)\\
        &&&\text{and } (s, h) \models \exists n.\; x \mapsto n \star \ls(n, y)\\
    (s,h) &\models \dlspred
        &\text{iff }& (s,h) \models x = y \star x' = y' \text{, or }
        s(x) \neq s(y), s(x') \neq s(y'),\\
        &&& \text{and } (s,h) \models \exists n.\; x \mapsto \cdls{n}{y'}
        \star \dls(n, y, x', x) \\
    (s,h) &\models \nlspred
        &\text{iff }& (s,h) \models x = y \text{, or } s(x) \neq s(y)\\
        &&&\text{and } (s, h) \models \exists n, t.\; x \mapsto \cnls{n}{t}
        \star \ls(n, z) \star \nls(t, y, z)
\end{align*}
\vspace*{-6mm}
  \caption{The semantics of the separation logic. The existential quantifier is
  used for the definition of the semantics of inductive predicates and it is not
  a part of our fragment.}
\vspace*{-3mm}
\label{fig:sl_semantics}
\end{figure}

\enlargethispage{6mm}

\vspace*{-1.5mm}\paragraph{Semantics.}

The semantics of our SL over stack-heap models is given in
Fig.~\ref{fig:sl_semantics}. For pure formulae, we use the so-called
\textit{precise semantics}, which additionally requires that the heap must be
empty\footnote{This is a common approach to avoid the atom $\true$ to be
expressed as $\nil = \nil$. In our fragment, we forbid $\true$ in order not to
introduce ``unbounded'' negations as $\neg \phi \triangleq \true \gneg \phi$. Due to this change, symbolic heaps are formulae of form $\bigstar \psi_i$ where each $\psi_i$ is an atom.}.
The semantics of pointer assertions, boolean connectives, and separating
conjunctions is as usual. The intuition behind the semantics of the inductive
predicates is as follows.
An SLL segment $\ls(x,y)$ is either empty or represents an acyclic sequence of
allocated locations starting from $x$ and leading via the $\n$ field to $y$,
which is not allocated.
A DLL segment $\dls(x, y, x', y')$ is either empty with $x = y$ and $x' = y'$,
or it represents an acyclic sequence that is doubly-linked via the $\n$ and $\p$
fields and leads from the first allocated location $x$ of the segment to its
last allocated location $x'$ ($x$ and $x'$ may coincide) with $y$/$y'$ being the
$\n$/$\p$-successors of $x'$/$x$, respectively. Both $y$ and $y'$ are not allocated.
An NLL segment $\nls(x, y, z)$ is a (possibly empty) acyclic sequence of
locations starting from $x$ and leading to $y$ via the $\t$ (top) field in which
successor of each locations starts a disjoint inner SLL to $z$ via $\n$.

\vspace*{-1mm}\paragraph{Stack-heap graphs.}

We frequently identify stack-heap models with their graph representation. A
stack-heap model $(s,h)$ defines a graph $G[(s,h)] = (V, (\arrow_\f)_{\f \in \Fields})$ where
$V = \locations(s,h)$ and
$u \arrow_\f v$ iff $h(u, \f) = v$. We frequently use the fact that if
there exists a path $\gpath{\sigma}{x}{y}{\f}$ in a stack-heap graph,
then it is uniquely determined because $\f$-edges are given by a partial function.

\vspace*{-1mm}\section{Small-Model Property}\label{section:small-models}

Small-model properties, which state that each satisfiable formula has a model of bound\-ed size, are frequently used for various fragments of SL to prove their
decidability~\cite{CC_results} or to design decision procedures \cite{cvc_sl,
strong-sl,sl_data}. The latter is also the case of our translation-based
decision procedure which will heavily rely on enumeration over all locations,
and, for its efficiency, it is therefore necessary to obtain location
bounds that are as small as possible.

The way we obtain our small-model property is inspired by the approach of
\cite{sl_data} and by insights from the so-called \textit{strong-separation
logic} \cite{strong-sl}. The main idea is to define a satisfiability-preserving
reduction $\reduce{s}{h}$ which takes a heap~$h$ (referenced from a stack~$s$),
decomposes it into basic sub-heaps (which we call \textit{chunks}), and reduces it per the sub-heaps in such a
way that its size can be easily bounded by a linear expression. To define the
reduction, we first need to introduce some auxiliary notions related to
stack-heap models.


We say that a model $(s,h)$ is \emph{positive} if there exists $\phi$ with
$(s,h) \models \phi$.
A positive model $(s,h)$ is \emph{atomic} if it is non-empty, and for all
positive models $(s, h_1)$ and $(s, h_2)$, $h = h_1 \uplus h_2$ implies that
$h_1 = \emptyset$ or $h_2 = \emptyset$.
In other words, atomic models cannot be decomposed into two non-empty positive
models. Several examples of atomic models are shown in Fig.~\ref{fig:reduction}.
Observe that the models of $\dls$ (Figure \ref{fig:reduction-dls}) and $\nls$
(Figure \ref{fig:reduction-nls}) are indeed atomic as any of their
decomposition, in particular the split at the location $u$, does not give two
positive models.

\begin{figure*}[t]
    \centering
    \begin{minipage}{0.5\textwidth}
        \begin{subfigure}{1\textwidth}
            \centering
            \begin{tikzpicture}[
    node distance={9mm}, 
    thick, 
    main/.style = {draw, circle}]
    
    \tikzstyle{node}=[thick,draw=blue!75,fill=blue!20,minimum size=5mm,rounded corners=0.15cm]
    \tikzstyle{rnode}=[thick,draw=red!75,fill=red!20,minimum size=5mm,rounded corners=0.15cm]
    \tikzstyle{empty}=[]
    
    \node[node]  (1)                 {$x$};
    \node[node]  (2) [right of=1]    {$\;$};
    \node[empty] (l) [above of=2, yshift=-10pt] {$\ell$};
    \node[rnode] (3) [right of=2]    {$\;$};
    \node[rnode] (4) [right of=3]    {$\;$};
    \node[node]  (5) [right of=4]    {$y$};
    
    \draw[-stealth'] (1) -- (2);
    
    \draw[-stealth', dotted] (2) -- (3);
    \draw[-stealth', dotted] (3) -- (4);
    \draw[-stealth', dotted] (4) -- (5);

    \draw[-stealth', preaction={draw, blue!30,-, double=blue!30, double distance=3\pgflinewidth,}] (2) to [out=40, in=140] (5);
    
\end{tikzpicture}
            \caption{A singly-linked list $\lspred$.}
            \label{fig:reduction-ls}
        \end{subfigure}
        \begin{subfigure}{1\textwidth}
            \centering
             \begin{tikzpicture}[node distance={9mm}, thick, main/.style = {draw, circle}]
    
    \tikzstyle{node}=[thick,draw=blue!75,fill=blue!20,minimum size=5mm,rounded corners=0.15cm]
    \tikzstyle{rnode}=[thick,draw=red!75,fill=red!20,minimum size=5mm,rounded corners=0.15cm]
    \tikzstyle{empty}=[]
    
    \node[node]  (1)                 {$y'$};
    \node[node]  (2) [right of=1]    {$x$};
    \node[node]  (3) [right of=2]    {$\;$};
    \node[empty] (l) [above of=3, yshift=-10pt] {$\ell$};
    \node[rnode] (4) [right of=3]    {$u$};
    \node[rnode] (5) [right of=4]    {$\;$};
    \node[node]  (6) [right of=5]    {$x'$};
    \node[node]  (7) [right of=6]    {$y$};

    \draw[-stealth'] (2.200) to (1.340);
    \draw[-stealth'] (2.20) to (3.160);
    \draw[-stealth'] (3.200) to (2.340);
    \draw[-stealth', dotted] (3.20) to (4.160);
    \draw[-stealth', dotted] (4.200) to (3.340);
    \draw[-stealth', dotted] (4.20) to (5.160);
    \draw[-stealth', dotted] (5.200) to (4.340);
    \draw[-stealth', dotted] (5.20) to (6.160);
    \draw[-stealth', dotted] (6.200) to (5.340);
    \draw[-stealth'] (6.20) to (7.160);

    \draw[-stealth', preaction={draw, blue!30,-, double=blue!30, double distance=3\pgflinewidth,}] (3) to [out=50, in=130] (6);
    \draw[-stealth', preaction={draw, blue!30,-, double=blue!30, double distance=3\pgflinewidth,}] (6) to [out=230, in=310] (3);
    
\end{tikzpicture}
            \vspace*{-2mm}
            \caption{A doubly-linked list $\dlspred$.}
            \label{fig:reduction-dls}
        \end{subfigure}
    \end{minipage}
    \begin{minipage}{0.45\textwidth}
        \begin{subfigure}{1\textwidth}
            \centering
             \begin{tikzpicture}[node distance={11mm}, thick, main/.style = {draw, circle}]
    
    \tikzstyle{node}=[thick,draw=blue!75,fill=blue!20,minimum size=5mm,rounded corners=0.15cm]
    \tikzstyle{rnode}=[thick,draw=red!75,fill=red!20,minimum size=5mm,rounded corners=0.15cm]
    \tikzstyle{empty}=[]
    
    \node[node]  (1)                 {$x$};
    \node[node]  (2) [right of=1]    {$\;$};
    \node[empty] (l) [above of=2, yshift=-15pt] {$\ell$};
    \node[rnode] (3) [right of=2]    {$\;$};
    \node[rnode] (4) [right of=3]    {$\;$};
    \node[node]  (5) [right of=4]    {$y$};

    \draw[-stealth'] (1) -- (2);
    \draw[-stealth', dotted] (2) -- (3);
    \draw[-stealth', dotted] (3) -- (4);
    \draw[-stealth', dotted] (4) -- (5);

    \node[empty]  (21) [below of=2]   {$\;$};
    \node[node]   (z)  [below of=21]   {$z$};
    \draw[-stealth'] (2) -- (z);
    
    \node[rnode] (11) [below of=1]   {$\;$};
    \node[rnode] (12) [below of=11]   {$u$};
    \draw[-stealth', dotted] (1) -- (11);
    \draw[-stealth', dotted] (11) -- (12);
    \draw[-stealth', dotted] (12) -- (z);

    \draw[-stealth', preaction={draw, blue!30,-, double=blue!30, double distance=3\pgflinewidth,}] (1) -- (z);
    
    \draw[-stealth', dotted] (3) -- (z);

    \node[rnode] (41) [below of=4]   {$\;$};
    \draw[-stealth', dotted] (4) -- (41);
    \draw[-stealth', dotted] (41) -- (z);

    \draw[-stealth', preaction={draw, blue!30,-, double=blue!30, double distance=3\pgflinewidth,}] (2) to [out=40, in=140] (5);

\end{tikzpicture}
            \vspace{3pt}
            \caption{A nested singly-linked list $\nlspred$.}
            \label{fig:reduction-nls}
        \end{subfigure}
    \end{minipage}

    \caption{An illustration of reductions of atomic models of inductive
    predicates. Removed heap locations are red, removed edges are
    dotted, and added edges are highlighted.}
    \vspace*{-5mm}

  \label{fig:reduction}
\end{figure*}
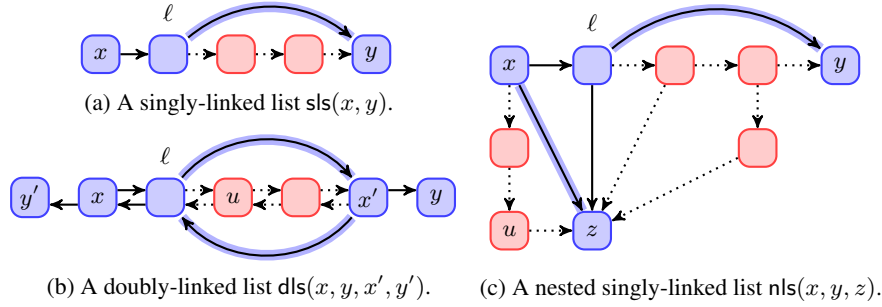


\enlargethispage{6mm}

A sub-heap $c \subseteq h$ is a \emph{chunk} of a model $(s,h)$ if $c$ is a
maximal sub-heap of $h$ such that $(s, c)$ is an atomic positive model.
Notice that the way the definition of chunks is constructed excludes the
possibility of using as a chunk a sub-heap of a heap that itself forms an atomic
model. The reason is that otherwise the remaining part of the larger atomic
model could not described by the available predicates. For example, in nested
lists as shown in Fig.~\ref{fig:reduction-nls}, one cannot take as a chunk a
part of some inner list (e.g., the pointer $u \mapsto z$) as the heap shown in
the figure itself forms an atomic model. Indeed, if  $u \mapsto z$ was removed,
one would need a more general version of the NLL predicate to cover the remaining heap
by atomic models.\vspace*{-1mm}


\begin{lemma}[Chunk decomposition] \label{lemma:chunk-decomposition} A positive
model $(s,h)$ can be uniquely decomposed into the set of its chunks, denoted
$\chunks(s,h)$, i.e., $h=\biguplus\chunks(s,h)$. \end{lemma}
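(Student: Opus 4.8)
The plan is to derive the identity $h=\biguplus\chunks(s,h)$ from two facts about the set $\chunks(s,h)$ of all maximal atomic positive sub-heaps of $(s,h)$: (i) the chunks \emph{cover} the heap, i.e.\ $\bigcup_{c\in\chunks(s,h)}\dom(c)=\dom(h)$, and (ii) \emph{distinct chunks are disjoint}. Since every chunk is a sub-heap of $h$, any two of them agree on the intersection of their domains; hence (i) and (ii) together say exactly that the chunks partition $h$, which is the claimed equation. The word ``uniquely'' is then essentially definitional --- $\chunks(s,h)$ is by construction canonical --- but one can strengthen it to: \emph{any} decomposition of $h$ into atomic positive sub-heaps equals $\chunks(s,h)$, which follows from (ii) together with the maximality argument used to prove it (a non-maximal atomic piece of such a decomposition would sit inside a chunk that then overlaps another piece).

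For (i), I would first show by induction on $|h|$ that \emph{every} positive model decomposes into some finite family of pairwise disjoint atomic positive sub-heaps. If $h=\emptyset$ or $(s,h)$ is atomic, this is immediate; otherwise the definition of atomicity provides a splitting $h=h_1\uplus h_2$ with both $(s,h_1)$ and $(s,h_2)$ positive and non-empty, and the induction hypothesis applies to each --- positivity and atomicity are intrinsic properties of a model, so a piece that is atomic positive relative to $h_i$ is also atomic positive relative to $h$. Given $\ell\in\dom(h)$, choose from such a family a sub-heap $c_0\subseteq h$ with $\ell\in\dom(c_0)$ and extend it (finiteness of $h$) to a $\subseteq$-maximal atomic positive sub-heap $c$; any atomic positive sub-heap strictly containing $c$ would also contain $c_0$, contradicting maximality, so $c\in\chunks(s,h)$ witnesses $\ell\in\dom(c)$.

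The core of the argument is (ii), which I would reduce to the claim that \emph{two atomic positive sub-heaps of $h$ sharing a location are $\subseteq$-comparable}; two \emph{maximal} such sub-heaps that are comparable must then coincide. To prove the comparability claim I would establish a \emph{structure lemma} classifying atomic positive models: such a model is either a single points-to heap $\{s(x)\mapsto v\}$ respecting the sort of $x$, or the heap of a non-empty $\ls$-, $\dls$-, or $\nls$-segment in which no interior boundary is named in a way permitting a clean split into two positive sub-models (for $\dls$ this constrains consecutive interior nodes along the $\n$/$\p$ chain; for $\nls$ it constrains both the top chain and the nested inner lists). With this in hand, comparability is a finite case analysis: overlapping single points-to heaps are equal; a points-to heap overlapping a segment heap is contained in it, since the shared location fixes it; and two overlapping segment heaps, by determinism of $\f$-paths in a stack-heap graph, follow the same chain through the shared location in each direction, and the ``no named interior boundary'' property of each prevents one chain from stopping strictly inside the other --- forcing their domains to be nested.

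I expect the structure lemma, and in particular its $\dls$ and $\nls$ cases, to be the main obstacle: one must check, for every possible way of cutting such a segment, which side fails to be positive --- typically because a required sink or an exposed entry location is an \emph{anonymous} location outside $\img(s)$ --- and conversely that the model really is decomposable whenever the relevant boundary locations are all named. The nested predicate is the most delicate, since deleting a suffix of an inner list leaves a shape expressible by no available predicate unless the cut coincides with the common inner sink, so the interaction between atomicity and naming has to be worked out with care there. Once the structure lemma is available, the remaining steps --- the induction in (i), the comparability case analysis, and the final assembly --- are routine bookkeeping with finiteness and $\subseteq$-maximality.
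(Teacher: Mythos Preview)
Your overall plan---cover plus disjointness---is sound, and your structure lemma is exactly the paper's atom classification (Lemma~\ref{lemma:atom-classification}). But the comparability argument in~(ii) has a real gap. You claim that two overlapping segment sub-heaps ``follow the same chain through the shared location in each direction'' by determinism of $\f$-paths. Forward determinism is fine, but for singly-linked structures ($\ls$, and the $\n$- and $\t$-chains inside $\nls$) there is no backward determinism in $h$: nothing in the memory model prevents two distinct locations $u_1\neq u_2$ with $h(u_1,\n)=h(u_2,\n)=\ell$. If such a merge occurred at an \emph{unnamed} allocated $\ell$, one could build two atomic $\ls$-sub-heaps that share $\ell$ yet are incomparable, each continuing backward along its own branch. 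So determinism plus ``no named interior'' alone does not force nesting.

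What rescues the claim is the positivity of the ambient $(s,h)$, which you never invoke. A short induction on a witnessing formula (in the spirit of Lemma~\ref{lemma:dangling-labelled}) shows that in any positive model, if $u_1\neq u_2$ both $\f$-point to an allocated $\ell$, then $\ell\in\img(s)$. Combined with your no-named-interior property this rules out a merge at an unnamed interior location, and then overlapping atomic $\ls$-sub-heaps must share a root and coincide by forward determinism. The paper sidesteps the issue entirely by a different route: it proves, by induction on $|h|$, the stronger statement that $\chunks(s,h)=\chunks(s,h_1)\uplus\chunks(s,h_2)$ for every positive splitting $h=h_1\uplus h_2$, and in its case analysis it always compares a chunk $c$ of $h$ with the chunk of $h_1$ containing the \emph{root} of $c$---a location that is named by construction---so backward determinism is never needed. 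That stronger splitting property is also what the paper reuses later to show the reduction $\reduction^s$ is a homomorphism, so even after you patch~(ii) you would still owe something like it.
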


\vspace*{-4mm}\paragraph{Minimal atomic models of inductive predicates.}

The key reason why the small-model property that we are going to state holds is
that our fragment of SL cannot distinguish atomic models of the considered
predicates beyond certain small sizes---namely, two for $\ls$ and $\nls$, and
three for $\dls$.
For further use, we will now state predicates describing exactly the sets of the
indistinguishable lists of the different kinds.

We start with SLLs and use a disequality to exclude empty lists: $\lsp(x,y)  \triangleq \lspred \star x \neq y$, and a guarded
negation to exclude lists of length one consisting of a single pointer only:
$\lspp(x,y) \triangleq
\lsp(x,y) \gneg (x \pto y)$.
A similar predicate can be defined for NLLs too: $\nlspp(x,y,z) \triangleq
\bigl(\nls(x,y,z) \star x \neq y\bigr) \gneg (x \pto \cnls{z}{y})$.


\enlargethispage{6mm}

For DLLs, we 
define $\dlspp(x, y, x',
y') \triangleq \dlspred \star x \neq  y \star x \neq x'$ to exclude models that are either empty or consist of a single pointer; and $\dlsppp(x, y, x', y') \triangleq \dlspp(x, y, x', y') \gneg (x \pto \cdls{x'}{y'} \star x' \pto \cdls{y}{x})$ to also exclude models consisting
of exactly two pointers.

%


It holds that atomic models, and consequently also chunks, are
precisely either models of single pointers or of the above predicates.

\begin{lemma}
\label{lemma:atom-classification}
For atomic model $(s,h)$, exactly one of the following
conditions holds. \begin{enumerate}

  \item $(s,h) \models x \mapsto \wildcard\;$ for some $x$.
  \hfill{(pointer-atom)}

  \item $(s,h) \models \lspp(x,y)$ for some $x$ and $y$. \hfill{($\ls$-atom)}

  \item $(s,h) \models \dlsppp(x, y, x', y')$ for some $x$, $y$, $x'$, and y'.
  \hfill{($\dls$-atom)}

  \item $(s,h) \models \nlspp(x, y, z)$ for some $x$, $y$, and $z$.
  \hfill{($\nls$-atom)}

\end{enumerate}
\end{lemma}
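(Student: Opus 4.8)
The plan is to prove this by analyzing the structure of an arbitrary atomic model $(s,h)$, using the fact that $(s,h)$ is positive and non-empty, and that it cannot be split into two non-empty positive sub-models. First I would establish mutual exclusivity: the four conditions describe models of sizes $1$, $\geq 2$, $\geq 3$ (in fact DLL-shaped), and $\geq 2$ (NLL-shaped) respectively, but beyond size alone one must check that, e.g., an $\ls$-atom of size $2$ is not also a $\dls$-atom or $\nls$-atom — this follows from the distinct field signatures ($\langle\n\rangle$ vs. $\langle\n,\p\rangle$ vs. $\langle\n,\t\rangle$) forced on the root by the respective unfoldings, together with the precise semantics of points-to. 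So exclusivity is essentially a syntactic bookkeeping argument on which fields are allocated.

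For the main (existence) direction, since $(s,h)$ is positive there is some $\phi$ with $(s,h)\models\phi$; I would argue that without loss of generality $\phi$ can be taken to be built from atomic formulae, separating conjunctions, and boolean connectives, and then by induction on $\phi$ reduce to the case where $\phi$ is a single spatial atom — the key point being that $(s,h)$ being \emph{atomic} collapses all the $\star$, $\land$, $\lor$, $\gneg$ structure: a disjunction/conjunction/guarded negation is satisfied via one of its disjuncts/conjuncts, and a non-trivial $\star$ would split $h$ into two positive sub-models, at least one of which must be empty by atomicity, so effectively $(s,h)$ satisfies some atom $\phi_A$ (a pure atom gives the empty heap, contradicting non-emptiness, so $\phi_A$ is a points-to or an inductive predicate). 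If $\phi_A$ is a points-to predicate we are in case 1. If $\phi_A$ is an inductive predicate, I unfold its semantics (Fig.~\ref{fig:sl_semantics}): the empty case gives the empty heap (excluded), so the predicate is non-empty, and then I would use atomicity to pin down the length — for $\ls$, a list of length $\geq 3$ splits as $x\pto n \star \ls(n,y)$ with both parts non-empty positive models, contradicting atomicity; hence length is $1$ (case 1) or $2$ (case 2, i.e. $\ls_{\geq 2}$); symmetrically for $\nls$ and $\dls$, where the relevant split is the one at location $u$ indicated in Fig.~\ref{fig:reduction}, giving the bound $2$ for $\nls$ and $3$ for $\dls$.

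The main obstacle I expect is the $\dls$ case: arguing that a doubly-linked segment of length $\geq 4$ (or even the precise claim that length exactly $3$ is atomic but length $\geq 4$ is not) is genuinely decomposable into two \emph{positive} sub-models requires exhibiting, for the tail part, an explicit formula it satisfies — and because the split of a DLL segment produces two DLL segments with shifted boundary parameters, one has to check the boundary pointers ($\p$ of the new root, $\n$ of the new last element) line up so that each half is again described by a $\dls$ atom (or a pointer) over suitable fresh/existing variables. A related subtlety is that atomicity is defined over models $(s,h_i)$ with the \emph{same} stack $s$, so the witnessing formula for a sub-heap may need variables not present originally; I would handle this by noting positivity only requires \emph{some} formula over \emph{some} variables, so we may freely name the relevant locations. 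Finally I would double-check the edge cases where $x=y$, $x'=y'$, or locations coincide, to confirm the minimal atomic sizes stated (two for $\ls$/$\nls$, three for $\dls$) are tight and that the predicates $\lspp$, $\dlsppp$, $\nlspp$ capture exactly these.
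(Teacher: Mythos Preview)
Your induction on $\phi$ to collapse the boolean and separating structure down to a single spatial atom is exactly the paper's argument, and your mutual-exclusivity remark via field signatures is fine. The gap is in your analysis of the inductive-predicate cases: you are trying to prove more than the lemma states, and the extra claim is false.

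You argue that an atomic $\ls$-model must have length at most $2$ because a longer list ``splits as $x\pto n \star \ls(n,y)$ with both parts non-empty positive models.'' This is not true. Positivity is relative to the \emph{fixed} stack $s$, and your proposed fix (``positivity only requires some formula over some variables, so we may freely name the relevant locations'') does not work: a formula may only mention variables in $\dom(s)$, so if the successor location $h(s(x),\n)$ is not in $\img(s)$, then no formula over $s$ can describe the sub-heap rooted there, and that sub-heap is \emph{not} positive. Hence an $\ls$-model of length $3$ (or $17$) with no intermediate locations named by $s$ is perfectly atomic. The same objection applies to your $\nls$ and $\dls$ length bounds.

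Fortunately the lemma does not ask for a length bound: condition (2) is $(s,h)\models\lspp(x,y)$, i.e.\ length $\geq 2$, not length $=2$. So the argument for $\ls$ is simply that a non-empty list model is either a single pointer (case 1) or has length $\geq 2$ (case 2)---atomicity is not even invoked here. Likewise for $\nls$. The one place atomicity \emph{is} used is $\dls$: a non-empty $\dls(x,y,x',y')$-model of length exactly $2$ can be split into $x\mapsto\cdls{x'}{y'}$ and $x'\mapsto\cdls{y}{x}$, and both halves are positive precisely because $x$ and $x'$ are already in $\dom(s)$; hence atomicity excludes length $2$, leaving length $1$ (case 1) or length $\geq 3$ (case 3).
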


We can now define the reduction in the way we have already sketched.

\begin{definition} The heap of a positive model $(s, h)$ reduces to $\reducesh =
\biguplus_{c \in \mathsf{chunks}(s,h)} \reduce{s}{c}$ where the reduction of a
chunk $c$ with a root $x$ as follows:\begin{itemize}
  \vspace*{-2mm}

  \item $\reduce{s}{c} = c$ if $(s,c) \models x \mapsto \wildcard\;$.

  \item $\reduce{s}{c} = \{s(x) \mapsto \ell, \ell \mapsto s(y) \}$ where $\ell = c(s(x),\n)$
  if \mbox{$(s,c)\!\models\!\lspp(x, y)$ for some $y$}.

  \item $\reduce{s}{c} = \{ s(x) \mapsto \langle \mathsf{n}\!:\!\ell,$
  $\mathsf{p}\!:\!s(y') \rangle, \ell \mapsto \langle \mathsf{n}\!:\!s(x'),
  \mathsf{p}\!:\!s(x) \rangle, s(x') \mapsto \langle \mathsf{n}\!:\!s(y), \mathsf{p}\!:\!
  \ell \rangle \}$ where $\ell = c(s(x),\n)$ if $(s,c) \models \dlsppp(x, y, x', y')$ for some $x'$, $y'$ and
  $y$.

  \item $\reduce{s}{c} = \{s(x) \mapsto \langle \mathsf{t}: \ell, \mathsf{n}: s(z)
  \rangle, \ell \mapsto \langle \mathsf{t}: s(y), \mathsf{n}: s(z)\rangle \}$
  where $\ell = c(s(x),\t)$ if $(s,c) \models \nlspp(x, y, z)$ for some $y$ and $z$.

  \vspace*{-2mm}
\end{itemize}
\end{definition}

We lift the reduction to stack-heap models as $\reducemodel{X} = (s', \reduce{s'}{h})$ where $s' = s|_X$ for some set of variables $X$ and show that it preserves satisfiability when $X = \vars(\phi)$.

\begin{theorem}
\label{theorem:reduction}
For a positive model $(s,h)$, it holds that $(s,h) \models \phi$
iff $\;\reducemodel{\vars(\phi)} \models \phi$. \end{theorem}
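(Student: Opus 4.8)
The plan is to prove Theorem~\ref{theorem:reduction} by structural induction on $\phi$. Because a boolean connective passes a larger variable set to the reduction than its operands require, and because $\reducemodel{X}$ is computed relative to the \emph{restricted} stack $s|_X$ (so that shrinking $X$ only coarsens the chunk decomposition, producing a genuinely different reduced heap), I would first strengthen the statement and then induct: for every $\phi$, every positive model $(s,h)$, and every finite $X\supseteq\vars(\phi)$, $(s,h)\models\phi$ iff $(s|_X,\reduce{s|_X}{h})\models\phi$; the theorem is the instance $X=\vars(\phi)$. I would use freely the coincidence property that satisfaction of $\phi$ depends only on $s|_{\vars(\phi)}$. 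Two lemmas about chunks carry the argument. The first, \emph{chunk-local correctness}, states that for every chunk $c$ (with root $x$) of a positive model, the reduced sub-heap $\reduce{s}{c}$ is again atomic and positive, has root $x$, and satisfies the same distinguished atom as $c$ in the classification of Lemma~\ref{lemma:atom-classification}: $\reduce{s}{c}=c$ when $(s,c)\models x\pto\wildcard$, and $(s,\reduce{s}{c})\models\lspp(x,y)$ whenever $(s,c)\models\lspp(x,y)$ (and analogously for $\dlsppp$ and $\nlspp$). This is checked by unfolding the predicate semantics against the explicit definition of $\reduce{s}{c}$; the only point to verify is that the fresh internal location introduced by the reduction is distinct from every stack-named location and from every location of the remaining chunks, which holds because such a location lying in $\img(s)$, or shared with another chunk, would split $c$ and contradict atomicity (again via Lemma~\ref{lemma:atom-classification}). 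It follows that $\reducesh$ is positive and that its chunks are precisely the reduced chunks of $h$.

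The second lemma, \emph{splitting compatibility}, is the crux of the proof: if $h=h_1\uplus h_2$ with $(s,h_1)$ and $(s,h_2)$ both positive, then $\chunks(s,h)=\chunks(s,h_1)\uplus\chunks(s,h_2)$, and hence $\reducesh=\reduce{s}{h_1}\uplus\reduce{s}{h_2}$. I would prove this by showing that no chunk $c$ of $h$ can straddle $h_1$ and $h_2$. A pointer-chunk occupies a single cell and hence lies in one side. If an $\ls$-, $\dls$-, or $\nls$-chunk $c$ straddled, then tracing its skeleton would exhibit a location $\ell$ that is allocated on one side, say $\ell\in\dom(h_2)$, while the link reaching $\ell$ lies in $h_1$. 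Such $\ell$ cannot be an internal location of $c$ named by $s$ (this would split $c$, contradicting atomicity, as in the first lemma), and it cannot be a sink or boundary location of $c$ either (these are named by the shape of $\lspp$, $\dlsppp$, $\nlspp$ and by Lemma~\ref{lemma:atom-classification}); so $\ell$ is unnamed. Locating the $h_1$-link into $\ell$ within $\chunks(s,h_1)$, the chunk carrying it is either a pointer-atom — forcing $\ell$ to be named, a contradiction — or an $\ls$/$\dls$/$\nls$-atom in which $\ell$ is either a named boundary (contradiction) or an internal location, hence in $\dom(h_1)$, contradicting $\ell\in\dom(h_2)$. Therefore every chunk of $h$ lies inside some $h_i$; since $\chunks(s,h_1)\cup\chunks(s,h_2)$ is then a decomposition of $h$ into atomic positive sub-heaps, it must be \emph{the} chunk decomposition by uniqueness (Lemma~\ref{lemma:chunk-decomposition}). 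I expect this case analysis — carried out for all three list predicates, and mirrored on the reduced side for the converse of the $\star$-case — to be the most laborious part.

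Given the two lemmas, the induction is routine. Pure atoms $x\bowtie y$ force $h=\emp$, which the reduction leaves unchanged, while $s|_X$ agrees with $s$ on $x,y$; a points-to atom is a single pointer-chunk, reduced to itself. For an inductive-predicate atom, say $\lspred$, the empty-heap disjunct is handled as for pure atoms, and otherwise $h$ is an acyclic $\n$-path from $s(x)$ to $s(y)$ whose chunks are pointers and $\lspp$-segments glued at named locations; by chunk-local correctness, reducing each chunk and re-gluing produces again an acyclic $\n$-path from $s(x)$ to $s(y)$ — acyclicity is preserved because every introduced internal location is fresh, and conversely a cycle in $h$ would persist under the reduction — so $(s|_X,\reduce{s|_X}{h})\models\lspred$; the cases of $\dls$ and $\nls$ are similar. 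For $\land$, $\lor$, and $\gneg$, the claim follows at once from the induction hypotheses applied with the \emph{same} set $X$ (here the generalisation over $X$ is essential, since $X\supseteq\vars(\psi_i)$ but $X\neq\vars(\psi_i)$ in general). For $\psi_1\star\psi_2$: in the forward direction, split $h=h_1\uplus h_2$ with $(s,h_i)\models\psi_i$, apply splitting compatibility to obtain $\reduce{s|_X}{h}=\reduce{s|_X}{h_1}\uplus\reduce{s|_X}{h_2}$, and conclude by the induction hypotheses; in the backward direction, a positive split of $\reduce{s|_X}{h}$ is, by the two lemmas, a union of whole reduced chunks, which pulls back to a split $h=h_1\uplus h_2$ with $(s|_X,\reduce{s|_X}{h_i})\models\psi_i$, whence $(s,h_i)\models\psi_i$ by the induction hypotheses and $(s,h)\models\psi_1\star\psi_2$.
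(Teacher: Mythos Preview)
Your proposal is correct and follows essentially the same architecture as the paper: structural induction on $\phi$, driven by the fact that the reduction is a homomorphism with respect to disjoint union of positive heaps. Your ``splitting compatibility'' lemma is exactly the paper's Lemma~\ref{lemma:chunk-decomposition_aux}(2) together with Lemma~\ref{lemma:reduction-homomorphism}; the paper proves the chunk part by induction on $|h|$ via an explicit notion of \emph{split point}, whereas you sketch a direct case analysis showing no chunk can straddle a positive split --- the two arguments are interchangeable, and the paper's organisation is arguably tidier here since the split-point machinery handles $\ls$, $\dls$, and $\nls$ uniformly.

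There is one genuine organisational difference worth noting. The paper fixes $s'=s|_{\vars(\phi)}$ once and inducts with that stack held constant; its base case for an inductive predicate then argues that the model consists of a \emph{single} chunk ``because the only allocated variable is $x$'', which tacitly assumes $\dom(s')=\vars(\pi)$ --- an assumption that does not survive the inductive step for boolean connectives, where subformulae see the same larger $s'$. Your explicit generalisation over $X\supseteq\vars(\phi)$ resolves this tension, at the cost of a more involved base case (the model of $\pi$ may now split into several chunks glued at $X$-named locations, and you must argue that reducing each piece and re-gluing still yields a model of $\pi$). Both routes lead to the same place; yours is the more carefully stated induction hypothesis, while the paper's single-chunk base case keeps the predicate reasoning shorter once the gap is patched.
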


\enlargethispage{6mm}

The final step to show our small-model property is to find an upper bound on the size of the reduced models. We define the size of a variable $x^S$, $\chunksize{x^S}$, which represents its contribution to the location bound, and is defined as $2$ if $S \in \{\lsSort, \nlsSort\}$ and $1.5$ if $S = \dlsSort$ (this corresponds to the size of a reduced chunk of sort $S$ divided by the number of variables which are allocated in it). We further define $\chunksize{\nil} = 0$. The location bound of~$\phi$ is then given as $\locsbound(\phi) = 1 + \floor{\sum_{x \in \vars(\phi)} \chunksize{x}}$ (the additional location is for $\nil$). Analogically, the location bound for a sort $S$ is $\locsbound_S(\phi) = \floor{\sum_{x \in \vars_S(\phi)} \chunksize{x}}$.

\begin{theorem}[Small-model property] \label{theorem:small-models}
If a formula $\phi$ is satisfiable, then there exists a model $(s, h) \models \phi$ such that $|\locations(s,h)| \leq \locsbound(\phi)$.
\end{theorem}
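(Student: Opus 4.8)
The plan is to combine Theorem~\ref{theorem:reduction} with a counting argument on the reduced model. By Theorem~\ref{theorem:reduction}, if $\phi$ is satisfiable, say $(s,h) \models \phi$ for some positive model, then $\reducemodel{\vars(\phi)} \models \phi$ as well; so it suffices to bound $|\locations(s', \reducesh')|$ where $s' = s|_{\vars(\phi)}$ and $h' = \reduce{s'}{h}$. The heap $h'$ is the disjoint union of the reduced chunks $\reduce{s'}{c}$ over $c \in \chunks(s',h)$, and by Lemma~\ref{lemma:atom-classification} every chunk falls into one of four types (pointer, $\ls$, $\dls$, $\nls$). First I would read off from the definition of the reduction the exact location footprint of each reduced chunk type: a pointer-chunk contributes its single root (plus whatever it points to, which is a stack location or $\nil$); a reduced $\ls$- or $\nls$-chunk has the shape $\{s'(x)\mapsto \ell, \ell \mapsto \cdots\}$, so it introduces at most the root $s'(x)$ and one fresh intermediate location $\ell$, with the remaining endpoints being images of stack variables; a reduced $\dls$-chunk $\{s'(x)\mapsto\cdots, \ell\mapsto\cdots, s'(x')\mapsto\cdots\}$ introduces the two roots $s'(x), s'(x')$ and one fresh location $\ell$.

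The key bookkeeping step is to charge every location of $\reducesh'$ either to $\nil$ or to some variable of $\vars(\phi)$ in a way that matches the weights $\chunksize{\cdot}$. Every location in $\locations(s', h') = \img(s') \cup \dom(h') \cup \img(h')$ is either (i) in $\img(s')$, hence equal to $s'(x)$ for some $x \in \vars(\phi)$, or (ii) a ``fresh'' location created by the reduction, i.e.\ one of the intermediate $\ell$'s. For (ii), note that the root $x$ of a chunk is allocated in that chunk and, because chunks have disjoint domains, distinct chunks have distinct roots; moreover the reduction picks $\ell$ as an image of an allocated location, and one checks these $\ell$'s are pairwise distinct and distinct from all $s'(x)$ with $x$ a root, so each fresh $\ell$ can be charged to the root of its chunk. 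Thus: a pointer-chunk with root $x$ contributes $1$ location chargeable to $x$ (matching $\chunksize{x}\ge 1$ — here I should double-check the sort/weight assignment for pointer roots, see below); an $\ls$- or $\nls$-chunk with root $x$ of sort in $\{\lsSort,\nlsSort\}$ contributes $2$ locations ($s'(x)$ and $\ell$) chargeable to $x$, matching $\chunksize{x}=2$; a $\dls$-chunk with roots $x,x'$ of sort $\dlsSort$ contributes $3$ locations chargeable to $\{x,x'\}$, i.e.\ $1.5$ each, matching $\chunksize{x}=\chunksize{x'}=1.5$. All other endpoints appearing in reduced chunks are images $s'(v)$ of stack variables and were already counted under (i). Summing over all chunks, $|\locations(s',h')| \le 1 + \sum_{x\in\vars(\phi)}\chunksize{x}$; since the left side is an integer we may take the floor, giving $|\locations(s',h')| \le \locsbound(\phi)$.

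There is one subtlety I would need to nail down carefully, and I expect it to be the main obstacle: making the charging scheme genuinely injective/disjoint across chunks, and in particular handling the weight of \emph{pointer}-chunks correctly. A pointer-chunk's root $x$ can have any sort, so $\chunksize{x}$ might be $1.5$ (if $x$ has sort $\dlsSort$) rather than $1$; since a pointer-chunk only needs weight $1$, this is fine for the inequality, but I must make sure I am not double-charging: if the same variable $x$ is the root of a pointer-chunk \emph{and} appears as an endpoint of another chunk, the endpoint occurrence is a location already equal to $s'(x)$ and hence not an independent contribution. The cleanest way to organize this is: (a) show $\dom(h') \supseteq \img(s')\cap(\text{roots})$ and that the fresh locations are exactly $\dom(h')\setminus\img(s')$ together with possibly images outside $\img(s')$; (b) show every location in $\img(h')\setminus\img(s')$ is one of the fresh $\ell$'s; (c) bound the number of fresh locations per chunk by its type as above; (d) conclude $|\locations(s',h')| \le |\img(s')| + \#\{\text{fresh locations}\} \le |\vars(\phi)| + \sum_{\text{chunks}}(\text{fresh count})$, then re-express the right-hand side via the weights by the one-to-one correspondence between chunks and (sets of) their roots. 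The disjointness of chunk domains from Lemma~\ref{lemma:chunk-decomposition} is what makes (d) work. Everything else is a routine case analysis over the four reduced-chunk shapes in the definition of $\reduce{s}{c}$.
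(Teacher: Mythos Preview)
Your proposal is correct and follows the same route as the paper: apply Theorem~\ref{theorem:reduction} to pass to the reduced model, then count its locations chunk by chunk via Lemma~\ref{lemma:chunk-decomposition} and Lemma~\ref{lemma:atom-classification}. The paper factors the count into two separate lemmas---one showing that every dangling location of a positive model lies in $\img(s)$ (so $\locations(s,h)=\dom(h)\cup\img(s)$), proved by a short structural induction on $\phi$, and one bounding $|\dom(h)|$ by $\lfloor\sum_x\chunksize{x}\rfloor$ via the chunk/root correspondence---which cleanly sidesteps the injectivity bookkeeping you flag as the main obstacle, but the underlying arithmetic is exactly your charging scheme.
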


We conjecture that the bound can be further improved, e.g., by showing that each
model can be transformed to an equivalent one (indistinguishable by BSL
formulae) such that the number of its chunks is bounded by the number of roots
of spatial predicates in $\phi$. We demonstrate this on the formula $\ls(x, y)
\star y \mapsto z$ and its model in which~$y$ points back into the middle of the
list segment (thus splitting it into two chunks). Clearly, this model can be
transformed by redirecting $z$ \mbox{outside of the list domain.}





\vspace*{-2mm}\section{Translation-Based Decision Procedure}\vspace*{-1mm}
\label{section:translation}

In this section, we present our translation of SL to SMT. We first present an SMT encoding of our memory model and a translation of basic predicates and boolean connectives. Then we discuss methods for efficient translation of separating conjunctions and inductive predicates with the focus on avoiding quantifiers by replacing them by small enumerations of their instantiations.

We fix an input formula $\phi$ and let $n_S = \locsbound_S(\phi)$ for each sort $S \in \Sort$.

\vspace*{-2mm} \subsection{Encoding the Memory Model in SMT} \vspace*{-1mm}
\label{section:encoding-memory-model}

To encode the heap, we use a classical approach which encodes its mapping and domain separately \cite{automating_sl, sl_data, cvc_sl}. Namely, we use arrays to encode mappings and sets to encode domains.
We also use the theory of datatypes to represent a finite sort of locations by a datatype
%
$\L \triangleq\;
    \mathsf{loc}^{\nil}
     \,\,|\,\, \mathsf{loc}^{\lsSort}_1  \;|\; \ldots \;|\; \mathsf{loc}^{\lsSort}_{n_\lsSort}
     \,\,|\,\, \mathsf{loc}^{\dlsSort}_1 \;|\; \ldots \;|\; \mathsf{loc}^{\dlsSort}_{n_\dlsSort}
     \,\,|\,\, \mathsf{loc}^{\nlsSort}_1 \;|\; \ldots \;|\; \mathsf{loc}^{\nlsSort}_{n_\nlsSort}.
$

Now, we define the signature of the translation's language over the sort $\L$. For each $x~\in~\vars(\phi)$, we introduce a constant $x$ of the same name---its interpretation represents the stack image~$s(x)$. To represent the heap, we introduce a set symbol $D$ representing the domain and an array symbol $h_\f$ for each field $\f \in \Fields$ which represents the mapping of the partial function $\lambda \ell.\; h(\ell, \f)$. To distinguish sorts of locations, we further introduce a set symbol $D_S$ for each sort $S \in \Sort$. We define meaning of these symbols by showing how a stack-heap model can be reconstructed from a first-order model.

%
%
%

\begin{definition}[Inverse translation]
\label{def:model_translation}
Let $\M$ be a first-order model. We define its inverse translation $\itranslate{\phi}{\M} = (s,h)$ where $s(x)$ = $x^\M$ if $x \in \vars(\phi)$ and
\begin{align*}
     h(\ell) &= 
     \begin{cases}
         \cls{\arrnext[\ell]^\mathcal{M}} &\text{ if $\ell \in (D \cap \ES)^{\M}$}\\
         \cdls{\arrnext[\ell]^\mathcal{M}}{\arrprev[\ell]^\mathcal{M}} &\text{ if $\ell \in (D \cap \ED)^{\M}$}\\
         \cnls{\arrnext[\ell]^\mathcal{M}}{\arrtop[\ell]^\M} &\text{ if $\ell \in (D \cap \EN)^{\M}$}.
     \end{cases}
 \end{align*}
\end{definition}

To ensure consistency of the translation with the memory model used, we define
the following axioms that a result of translation needs to satisfy:
%
%
%
\begin{align*}
     \mathcal{A}_\phi \triangleq
         \nil = \mathsf{loc}^{\nil} \land\;
         \nil \not \in D \;\land\;
         \!\!\!\!\bigwedge_{S \in \Sort}\!\! 
            \bigl(D_S = \{\mathsf{loc}^{\nil}, \mathsf{loc}^{S}_1, \ldots, \mathsf{loc}^{S}_{n_S}\}
         \;\land\; \!\!\!\!\!\!\!\!\!\bigwedge_{x \in \vars_S(\phi)}\!\!\!\!\!\! x \in D_S\bigl).
\end{align*}
\vspace*{-3mm}

\noindent The axioms ensure that $\nil$ is never allocated, that each variable is
interpreted as a location of the corresponding sort and they fix the
interpretation of the sets $\ES, \ED, \EN$, which we will later use in the
translation to assign sorts to locations.


\vspace*{-3mm}
\subsection{Translation of SL to SMT}
\enlargethispage{6mm}
We define the translation as a function $\mathsf{T}(\phi) = \mathcal{A}_\phi \land \translate{\phi}{D}$ where $\mathcal{A}_\phi$ are the above defined axioms and $\translate{\phi}{D}$ is a recursive translation function of the formula $\phi$ with the domain symbol $D$. The translation $\mathsf{T}(\cdot)$ together with the inverse translation of models $\mathsf{T}^{-1}_\phi(\cdot)$ are linked by the following correctness theorem.

\begin{theorem}[Translation correctness]
\label{theorem:correctness}
An SL formula $\phi$ is satisfiable iff its translation $\mathsf{T}(\phi)$ is satisfiable. Moreover, if $\mathcal{M} \models \mathsf{T}(\phi)$, then $\itranslate{\phi}{\M} \models \phi$.
\end{theorem}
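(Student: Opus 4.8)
The plan is to reduce the theorem to a single model-correspondence lemma proved by structural induction, plus one appeal to the small-model property (Theorem~\ref{theorem:small-models}). Note first that the ``moreover'' claim already gives the right-to-left implication of the stated equivalence: if $\mathsf{T}(\phi)$ has a model $\M$, then $\itranslate{\phi}{\M} \models \phi$, so $\phi$ is satisfiable. Hence it suffices to (i) prove the ``moreover'' claim, and (ii) show that satisfiability of $\phi$ entails satisfiability of $\mathsf{T}(\phi)$.

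For (ii) I would start from any $(s,h) \models \phi$ and, invoking Theorem~\ref{theorem:small-models}, assume $|\locations(s,h)| \le \locsbound(\phi)$. By inspecting the construction behind that theorem (the small model is built via the reduction $\reducesh$, whose chunks, classified by Lemmas~\ref{lemma:chunk-decomposition} and~\ref{lemma:atom-classification}, each carry a definite sort and contribute a bounded number of locations of that sort), one obtains a sort-refined bound: the locations can be assigned sorts so that at most $n_S$ of them have sort $S$, hence injected into the datatype $\L$. Transporting $(s,h)$ along this injection yields a first-order model $\M$: the constant $x$ is interpreted as the image of $s(x)$; $D$ as $\dom(h)$; each $D_S$ as the $S$-sorted locations together with $\mathsf{loc}^\nil$; and each array $h_\f$ as an arbitrary total extension of $\lambda\ell.\,h(\ell,\f)$. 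One then checks $\M \models \mathcal{A}_\phi$ directly from the memory-model conditions ($s(\nil)\neq\bot$, $h(s(\nil))=\bot$, sortedness of variables) and, via the correspondence lemma below, that $\M \models \translate{\phi}{D}$.

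The core is a correspondence lemma stated generally enough to survive the recursion on separating conjunction: for every $\M \models \mathcal{A}_\phi$, writing $(s,h)=\itranslate{\phi}{\M}$, and for every set term $D'$ with $(D')^\M \subseteq D^\M$, we have $(s, h|_{(D')^\M}) \models \psi$ iff $\M \models \translate{\psi}{D'}$ for every subformula $\psi$ of $\phi$. Before the induction one verifies that $(s,h)$ is a legitimate stack-heap model ($s(\nil)$ is defined since $\nil$ is interpreted; $h(s(\nil))=\bot$ since $\mathcal{A}_\phi$ forces $\nil\notin D$). The induction then goes by cases: pure atoms $x=y$, $x\neq y$ are matched by an emptiness constraint on $D'$ mirroring the precise semantics; points-to atoms are immediate from Definition~\ref{def:model_translation} and the sets $D_S$; the boolean connectives $\land,\lor,\gneg$ are direct because both semantics are compositional over the same $D'$; and $\psi_1 \star \psi_2$ is handled by the fresh domain symbols $\translate{\cdot}{\cdot}$ introduces for the two operands, whose disjoint-union constraint mirrors $h=h_1\uplus h_2$, applying the induction hypothesis to each operand with its own symbol.

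The step I expect to be the main obstacle is the case of the inductive predicates $\ls$, $\dls$, $\nls$, whose translation axiomatizes reachability along the pointer fields rather than unfolding the definitions. For soundness (SMT model $\to$ SL model) one must read off from the reachability encoding restricted to $D'$ an acyclic $\n$-path (resp.\ $\n/\p$-linked, resp.\ $\t$-nested) witness for the semantics in Fig.~\ref{fig:sl_semantics}, checking the ``sink not allocated'' side conditions and, for $\dls$, the consistency of $\n$- and $\p$-links. For completeness (SL model $\to$ SMT model) one uses the uniqueness of $\f$-paths in $G[(s,h)]$ noted after its definition, so that the path prescribed by $(s,h)\models\pi$ is exactly the one the encoding describes; here Lemma~\ref{lemma:atom-classification} and the reduction help argue that, within a bounded model, the encoding's bounds on predicate instantiations lose no models. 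The $\star$ case is the secondary delicacy: the arrays $h_\f$ are shared between the two sub-translations while only the domain symbols are split, and correctness hinges on the fact that the semantics of every subformula depends on $h$ only through its restriction to the relevant domain.
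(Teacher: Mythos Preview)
Your overall plan---a correspondence lemma proved by structural induction, carrying the sub-heap via the restriction $h|_{(D')^\M}$ inside one fixed $\M$---is a viable alternative to the paper's route. The paper instead varies the SMT model: it defines a composition $\M_1 \oplus \M_2$ of compatible SMT models that mirrors $h_1\uplus h_2$ on the stack-heap side, and proves the $\star$ step by decomposing and recomposing models (Lemma~\ref{lemma:star-lemma}), with a separate invariance lemma (Lemma~\ref{lemma:translation-invariant}) doing the work your final sentence alludes to. Your restriction-based formulation sidesteps that machinery and is arguably cleaner for the $\mathtt{SatQuantif}$ variant.

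The genuine gap is in the $\star$ case. You describe the sub-domain parameters as ``fresh domain symbols'' the translation introduces, which matches only $\mathtt{SatQuantif}$. In the $\mathtt{SatEnum}$ method---which the theorem must equally cover---the $F_i$ are not fresh quantified set variables but are drawn from the finite set $\mathsf{FP}^{\#}(\psi_i)$ of syntactically computed footprint terms. For the direction $(s,h|_{(D')^\M})\models\psi_1\star\psi_2 \Rightarrow \M\models\translate{\psi_1\star\psi_2}{D'}$ you then need that every semantic split $h_1\uplus h_2$ is realised by some pair of terms actually present in that enumeration; this is exactly the content of Lemma~\ref{lemma:star_simplified} combined with Lemma~\ref{lemma:footprints} (the computed $\mathsf{FP}^{\#}$ over-approximates the true footprint sets), and the paper's proof of the $\star$ case invokes both explicitly. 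Your proposal never mentions footprints, so as written it does not establish correctness of $\mathtt{SatEnum}$. A smaller imprecision: the $\reach$ and $\path$ terms in the predicate encodings read the \emph{global} arrays $h_\f$, not anything restricted to $D'$; the paper handles this via the path-semantics Lemmas~\ref{lemma:path-semantics} and~\ref{lemma:auxiliary-predicates}, and you would need an analogue rather than ``the reachability encoding restricted to $D'$''.
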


The translation of non-inductive predicates and boolean connectives is defined as:
\begin{align*}
\translate{x \bowtie y}{F} &\triangleq x \bowtie y \;\land\; F = \emptyset \hspace{50pt} \text{for $\bowtie\; \in \{=, \neq\}$}\\
\translate{\psi_1 \bowtie \psi_2}{F} &\triangleq \translate{\psi_1}{F} \bowtie \translate{\psi_2}{F}
\hspace{28pt}\text{for $\bowtie\; \in \{\land, \lor, \gneg\}$}\\
\translate{x \pto \langle \f_i: f_i\rangle_{i \in I}}{F} &\triangleq F = \{x\} \;\land\; \bigwedge_{i \in I} h_{\f_i}[x] = f_i
\end{align*}

\noindent The translation of boolean connectives follows the boolean structure and propagates the domain symbol $F$ to the operands. The translation of pointer assertions postulates content of memory cells represented by arrays and also requires the domain $F$ to be~$\{x\}$. 

\vspace*{-1mm}\paragraph{Translation of separating conjunctions.}

The semantics of separating conjunctions involves a quantification over sets (heap domains). The most direct way of  translation is to use quantifiers over sets leading to decidable formulae due to the bounded location domain. This approach combined with a counterexample-guided quantifier instantiation is used in the decision procedure for a fragment of SL supported in \textsc{cvc5} \cite{cvc_sl}. In some fragments, however, separating conjunctions can be translated in a way that completely avoids quantifiers. An example is the fragment of boolean combinations of symbolic heaps which has the so-called \textit{unique footprint property} (UFP)~\cite{automating_sl, sl_data}---a formula $\psi$ has a (unique) footprint in a model $(s,h)$ with $(s,h) \models \psi \star \true\footnote{Assuming the standard semantics of $\true$ which is not part of our logic.}$, if there exists a (unique) set $F$ such that $(s, h|_{F}) \models \psi$.  The UFP-based approaches of \cite{automating_sl, sl_data} axiomatize the footprints during translation and check operands of separating conjunctions just on the sub-heaps induced by their footprints.

However, UFP does not hold for \BSL because of disjunctions. As an example, take the formula $\psi \triangleq x \mapsto y \lor \emp$ and the heap~$h = \{x \mapsto y\}$. Both $(s,h|_{\{s(x)\}}) \models \psi$ and $(s,h|_{\emptyset}) \models \psi$ hold. The sets $\{s(x)\}$ and $\emptyset$ are, however, the only footprints of~$\psi$ in $(s,h)$, and this observation can be used to generalise the idea of footprints beyond the fragment in which they are unique.

Instead of axiomatizing the footprints, our translation builds a set of footprint terms for operands of separating conjunctions. This change can be also seen as a simplification of the former translations as it eliminates the need to deal with two kinds of formulae (the actual translation and footprint axioms), which must be treated differently during the translation. However, the precise computation of the set of all footprints of $\psi$ in $(s,h)$, denoted as $\mathsf{FP}_{(s,h)}(\psi)$, is as hard as satisfiability---when the set of footprints is non-empty, the formula $\psi$ is satisfiable. Therefore, we compute just an over-approximation denoted as $\FFootprints{\psi}$. This is justified by the following lemma which gives an equivalent semantics of the separating conjunction in terms of footprints.

\begin{lemma}
\label{lemma:star_simplified}
Let $\phi \triangleq \psi_1 \star \psi_2$ and let $(s,h)$ be a model. Let $\mathcal{F}_1$ and $\mathcal{F}_2$ be sets of locations such that $\mathsf{FP}_{(s,h)}(\psi_i) \subseteq \mathcal{F}_i$. Then $(s,h) \models \psi_1 \star \psi_2$ iff
\begin{align*}
\bigvee_{F_1 \in \mathcal{F}_1} \;
\bigvee_{F_2 \in \mathcal{F}_2} \;
\bigwedge_{i = 1, 2} (s, h|_{F_i}) \models \psi_i \;\land\; F_1 \cap F_2 = \emptyset \;\land\; F_1 \cup F_2 = \dom(h).
\end{align*}
\end{lemma}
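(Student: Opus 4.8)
The plan is to prove the two directions of the equivalence separately, using the definition of $\models \psi_1 \star \psi_2$ from Fig.~\ref{fig:sl_semantics} together with the definition of footprints. Recall that $F \in \footprints{\psi}$ means precisely that $F \subseteq \dom(h)$ and $(s, h|_F) \models \psi$; the hypothesis is only that each $\mathcal{F}_i$ \emph{over-approximates} $\footprints{\psi_i}$, i.e. $\footprints{\psi_i} \subseteq \mathcal{F}_i$ (there may be spurious sets in $\mathcal{F}_i$, and the $F_i$ ranged over in the disjunction need not be subsets of $\dom(h)$ a priori).

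For the left-to-right direction, suppose $(s,h) \models \psi_1 \star \psi_2$. By the semantics there are disjoint heaps $h_1, h_2$ with $h = h_1 \uplus h_2$ and $(s, h_i) \models \psi_i$. Put $F_i = \dom(h_i)$. Then $h|_{F_i} = h_i$, so $(s, h|_{F_i}) \models \psi_i$, which shows $F_i \in \footprints{\psi_i} \subseteq \mathcal{F}_i$; moreover $F_1 \cap F_2 = \emptyset$ and $F_1 \cup F_2 = \dom(h_1) \cup \dom(h_2) = \dom(h)$. Hence the disjunct indexed by this particular pair $(F_1, F_2)$ is satisfied, and the big disjunction holds.

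For the right-to-left direction, suppose some disjunct holds: there exist $F_1 \in \mathcal{F}_1$, $F_2 \in \mathcal{F}_2$ with $(s, h|_{F_i}) \models \psi_i$ for $i = 1,2$, together with $F_1 \cap F_2 = \emptyset$ and $F_1 \cup F_2 = \dom(h)$. Here I would first note that $(s, h|_{F_i}) \models \psi_i$ forces $\dom(h|_{F_i}) = F_i \cap \dom(h)$, but since $F_1 \cup F_2 = \dom(h)$ we get $F_i \subseteq \dom(h)$ (any location of $F_i$ lies in $\dom(h)$), so in fact $h|_{F_i}$ has domain exactly $F_i$. Set $h_i \triangleq h|_{F_i}$. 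Disjointness of $F_1, F_2$ gives that $h_1$ and $h_2$ are disjoint partial functions, and $F_1 \cup F_2 = \dom(h)$ gives $h_1 \uplus h_2 = h$. Thus $(s,h) \models \psi_1 \star \psi_2$ by the semantics of $\star$, which is what we wanted.

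The only subtle point—and the one I would be most careful about—is the interplay between the over-approximation and the ``$F_1 \cup F_2 = \dom(h)$'' clause: a spurious $F_i \in \mathcal{F}_i$ that is not a genuine footprint cannot cause trouble, because the conjunct $(s, h|_{F_i}) \models \psi_i$ in the disjunct re-imposes exactly the footprint condition, and the union clause pins $F_i$ down to be a subheap domain of $h$. So over-approximation is harmless for correctness; it only affects the \emph{size} of the resulting SMT formula. Everything else is a direct unfolding of the definitions of $\uplus$, heap restriction, and the $\star$ clause of Fig.~\ref{fig:sl_semantics}, so no real calculation is needed.
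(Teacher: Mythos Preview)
Your proof is correct and follows essentially the same approach as the paper's own proof: both directions are handled by unfolding the semantics of $\star$, taking $F_i = \dom(h_i)$ in the forward direction and $h_i = h|_{F_i}$ in the backward direction. You supply more detail than the paper (which is quite terse, especially for $(\Leftarrow)$), in particular your explicit observation that $F_1 \cup F_2 = \dom(h)$ forces $F_i \subseteq \dom(h)$, but the underlying argument is identical.
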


\noindent Intuitively, to check whether a separating conjunction holds in a model, it is not necessary to check all possible splits of the heap, but only the splits induced by (possibly over-approximated) footprints of its operands. The lemma is therefore a generalisation of UFP and leads to the following definition of the translation $\translate{\psi_1 \star \psi_2}{F}$:
\begin{align*}
\exists F_1 \in \mathcal{F}_1.\;
\exists F_2 \in \mathcal{F}_2.\;\;\;
\translate{\psi_1}{F_1} \land \translate{\psi_2}{F_2} \land F_1 \cap F_2 = \emptyset \land F = F_1 \cup F_2.
\end{align*}

\noindent Here, we use a quantifier expression of the form $\exists x \in X.\; \psi$ as a placeholder that helps us to define two methods which the translation can use for separating conjunctions:

\begin{itemize}
    \item The method $\mathtt{SatEnum}$ computes sets of footprints $\mathcal{F}_i$ as $\FFootprints{\psi_i}$ (the computation is described below) and replaces expressions $\exists x \in X.\; \psi$ with $\bigvee_{x' \in X} \psi[x'/x]$ as in Lemma \ref{lemma:star_simplified}. This strategy is quite efficient in many practical cases when we can compute small sets of footprints $\mathcal{F}_1$ and $\mathcal{F}_2$.

    \item The method $\mathtt{SatQuantif}$ does not compute sets $\mathcal{F}_i$ at all and replaces \mbox{$\exists x \in X.\; \psi$} simply with $\exists x.\; \psi$. This strategy is better when the existential quantifier can be later eliminated by Skolemization  or when the set of footprints would be too large. 
    %
\end{itemize}

We now show how to compute the set of footprint terms $\FFootprints{\psi}$. We again postpone inductive predicates to Section \ref{section:predicate-translation}. We just note that their footprints are unique. The cases of pure formulae and pointer assertions follow directly from the definition of their semantics, which requires the heap to be empty and a single pointer, respectively.
\begin{align*}
    \FFootprints{x \bowtie y}  = \{\emptyset\} \;\text{ for } \bowtie\; \in \{=, \neq \}
    && \FFootprints{x \pto \wildcard} = \{\{x\}\}
\end{align*}

\noindent For the boolean conjunction, we can select from footprints of its operand the one with the lesser cardinality. Since negations have many footprints (consider, e.g., $\neg \emp$), we define the case of the guarded negation by taking footprints of its guard. The disjunction is the only case which brings non-uniqueness as we need to consider footprints of both of its operands.
\begin{align*}
\FFootprints{\psi_1 \gneg \psi_2} &= \FFootprints{\psi_1}
\hspace{40pt} \FFootprints{\psi_1 \lor \psi_2} = \FFootprints{\psi_1} \cup \FFootprints{\psi_2}\\
\FFootprints{\psi_1 \land \psi_2} &= \text{ if } |\FFootprints{\psi_1}| \leq |\FFootprints{\psi_2}| \text{ then } \FFootprints{\psi_1} \text{ else } \FFootprints{\psi_2}
\end{align*}

\noindent Finally, we define footprints of the separating conjunction by taking the union $F_1 \cup F_2$ for each pair $(F_1, F_2)$ of footprints of its operands. Notice that here $F_1 \cup F_2$ represents an SMT term, therefore we cannot replace it with a disjoint union which is not available in the classical set theories in SMT. We can, however, use heuristics and filter out terms for which we can statically determine that interpretations of $F_1$ and $F_2$ are not disjoint.
\begin{align*}
\FFootprints{\psi_1 \star \psi_2} = \{F_1 \cup F_2 \;|\; F_1 \in \FFootprints{\psi_1} \text{ and } F_2 \in \FFootprints{\psi_2}\}
\end{align*}

\enlargethispage{6mm}

We state the correctness of the footprint computation in the following lemma.

\begin{lemma}
\label{lemma:footprints}
Let $\M$ be a first-order model with $\M \models \mathsf{T}(\phi)$ and let $(s,h) = \itranslate{\phi}{\M}$. Then we have
$\mathsf{FP}_{(s,h)}(\phi) \subseteq \{F^\M \;|\; F \in \mathsf{FP}^{\#}(\phi)\}$.
\end{lemma}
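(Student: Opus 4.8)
The plan is to prove, by structural induction on the formula, the (apparently stronger but equivalent) statement that for every subformula $\psi$ of $\phi$ one has $\footprints{\psi} \subseteq \{G^\M \mid G \in \FFootprints{\psi}\}$, while keeping the stack--heap model $(s,h) = \itranslate{\phi}{\M}$ fixed. The recursion is on the syntax of the formula, not on the model: footprints of subformulae are still evaluated in the same $(s,h)$, and $\FFootprints{\cdot}$ is defined by the matching recursion. Throughout I will use that, by Definition~\ref{def:model_translation}, $x^\M = s(x)$ for every program variable, so that the SMT set terms $\emptyset$ and $\{x\}$ are interpreted in $\M$ as $\emptyset$ and $\{s(x)\}$, and, more generally, $(G_1 \cup G_2)^\M = G_1^\M \cup G_2^\M$.

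For the base cases I would argue as follows. If $\psi$ is a pure atom $x \bowtie y$, the precise semantics forces any footprint $F$ to satisfy $\dom(h|_F) = \emptyset$, hence $F = \emptyset = \emptyset^\M$, which matches $\FFootprints{x \bowtie y} = \{\emptyset\}$. If $\psi$ is a points-to atom with root $x$, its semantics forces $h|_F = \{s(x) \mapsto \cdot\}$, hence $F = \{s(x)\} = \{x\}^\M$, matching $\FFootprints{x \pto \wildcard} = \{\{x\}\}$. For the three inductive predicates I rely on the fact, established in Section~\ref{section:predicate-translation}, that each has a \emph{unique} footprint in $(s,h)$ and that the footprint term generated for it is interpreted in $\M$ exactly as that unique footprint; I treat this as a black box here.

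For the inductive step, fix a footprint $F$ of the compound formula and unfold the semantics of the top-level connective. For $\psi_1 \gneg \psi_2$, $(s,h|_F) \models \psi_1 \gneg \psi_2$ entails $(s,h|_F) \models \psi_1$, so $F \in \footprints{\psi_1}$, and the IH together with $\FFootprints{\psi_1 \gneg \psi_2} = \FFootprints{\psi_1}$ closes the case; $\psi_1 \lor \psi_2$ is analogous, picking the disjunct that $(s,h|_F)$ satisfies and using $\FFootprints{\psi_i} \subseteq \FFootprints{\psi_1 \lor \psi_2}$. For $\psi_1 \land \psi_2$, the footprint $F$ is simultaneously a footprint of $\psi_1$ and of $\psi_2$, so by the IH it lies both in $\{G^\M \mid G \in \FFootprints{\psi_1}\}$ and in $\{G^\M \mid G \in \FFootprints{\psi_2}\}$; since $\FFootprints{\psi_1 \land \psi_2}$ is whichever of these two syntactic sets is smaller, $F$ is in its image regardless of the choice. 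For $\psi_1 \star \psi_2$, the semantics yields $h_1, h_2$ with $h|_F = h_1 \uplus h_2$ and $(s,h_i) \models \psi_i$; putting $F_i = \dom(h_i)$ gives $h_i = h|_{F_i}$, $F_i \in \footprints{\psi_i}$ and $F = F_1 \uplus F_2$, so by the IH $F_i = G_i^\M$ for some $G_i \in \FFootprints{\psi_i}$, whence $G_1 \cup G_2 \in \FFootprints{\psi_1 \star \psi_2}$ and $(G_1 \cup G_2)^\M = F_1 \cup F_2 = F$.

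I expect the only genuinely non-routine ingredient to be the inductive-predicate base case, which is why its proof is deferred to Section~\ref{section:predicate-translation}; everything else is bookkeeping. The one place needing a word of care is the separating-conjunction step: $\FFootprints{\psi_1 \star \psi_2}$ uses ordinary set union on terms rather than disjoint union (which SMT set theory does not provide), but this is harmless because the witnessing heap decomposition already makes the two interpretations disjoint --- the same remark justifies the filtering heuristic mentioned right after the definition of $\FFootprints{\psi_1 \star \psi_2}$.
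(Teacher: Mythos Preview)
Your proposal is correct and follows essentially the same structural-induction argument as the paper's proof, including the deferral of the inductive-predicate base case to the path-semantics and auxiliary-predicate lemmas of Section~\ref{section:predicate-translation}. The only cosmetic difference is that the paper first disposes of the vacuous case $(s,h)\not\models\phi\star\true$ explicitly, whereas you handle it implicitly by quantifying over an arbitrary footprint~$F$.
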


\vspace*{-2mm} \subsection{Translation of Inductive Predicates} \vspace*{-1mm}
\label{section:predicate-translation}

To translate inductive predicates, we express them in terms of reachability and
paths in the heaps. While unbounded reachability cannot be expressed in
first-order logic, we can efficiently express bounded \textit{linear}
reachability in our encoding. The linearity means that each path uses only a
single field (which is not the case, e.g., for paths in trees). All predicates
in this section are parametrised with an interval $[m,n]$ which bounds the
length of the considered paths. When we do not state the bounds explicitly, we assume conservative bounds $[0,
\locsbound_S(\phi)]$ for a path starting from a root of a sort $S$. We show how to
compute more precise bounds in Section~\ref{section:optimisations}. We start
with the translation of reachability:\vspace*{-1mm}
\begin{align*}
    \reach^{=n}(h, x, y) \triangleq h^n[x] = y &&
    \reach^{[m,n]}(h, x,y)  \triangleq \bigvee \!_{m \leq i \leq n}
    \reach^{=i}(h, x, y)\vspace*{-1mm}
\end{align*}

\noindent Here, the predicate $\reach^{=n}(h, x, y)$ expresses that $x$ can
reach $y$ via a field represented by the array $h$ in exactly $n$ steps.
Similarly, $\reach^{[m,n]}$ expresses reachability in $m$ to $n$ steps. Besides
reachability, we will need a macro $\path_C(h, x, y)$ expressing the domain of a
path from $x$ to $y$, or the empty set if such a path does not exists:
\begin{align*}
    \path^{=n}_C(h, x,y) \triangleq\;&
        \bigcup\;_{0\leq i < n} \; C(h^i[x])\\
\path^{[m,n]}_C(h, x,y) \triangleq\;&
\ifthen{\reach^{=m}(h, x, y)}{\path^{=m}_C(h, x, y)}\\
    & \cdots \; \elseif{\reach^{=n}(h, x, y)}{\path^{=n}_C(h, x, y)}
    \;\ielse{\emptyset}
\end{align*}
The additional parameter $C$ is a function applied to each element of the path
that can be used to define nested paths. We define a simple path $\path^{[m,n]}_S(h, x,
y) \triangleq \path^{[m,n]}_C(h, x, y)$ with $C \triangleq \lambda \ell.\; \{\ell\}$ and a nested path as $\path^{[m,n]}_N(h_1, h_2, x, y, z) \triangleq \path^{[m,n]}_C(h_1, x, y)$ with $C \triangleq \lambda \ell.\; \path_S(h_2, \ell, z)$. In the case of the nested
path, the array $h_1$ represents the top-level path from $x$ to $y$, and $h_2$
represents nested paths terminating in the common location $z$. Now we can
define footprints of inductive predicates using $\path$ terms as follows:
%
%
\begin{align*}
\FFootprints{\pi(x,y)} &= \{ \path_S(\arrnext, x, y) \} && \text{for $\pi \in \{\ls, \dls\}$}\\
\FFootprints{\nlspred} &= \{ \path_N(\arrtop, \arrnext, x, y, z) \}
\end{align*}

The common part of the translation $\translate{\pi(x,y)}{F}$ postulates the
existence of a top-level path from $x$ to $y$ and a domain $F$ based on this
path (formalised in the formula $\mathsf{main\_path}$ below); and ensures that
all locations have the correct sort (through the formula $\mathsf{typing}$). For
DLLs, we add an invariant which ensures that its locations are correctly
doubly-linked (the $\mathsf{back\_links}$ formula), and we further need a
special treatment of the cases when the list is empty as well as a special
treatment for its roots and sinks (cf. the formula $\mathsf{boundaries}$). For
NLLs, we add an invariant stating that an inner list starts from each location
in its top-level path (the $\mathsf{inner\_lists}$ formula) and that those
inner paths are disjoint (the $\mathsf{disjoint}$ formula).
%
%

\enlargethispage{6mm}

\begin{itemize}

    \item $\translate{\lspred}{F} \;\triangleq\; \mathsf{main\_path} \land
    \mathsf{typing}$ where \begin{align*}
        \mathsf{main\_path} \triangleq \reach(\arrnext, x, y) \land 
        F = \path_S(\arrnext, x, y) \text{  and  }
        \mathsf{typing} \triangleq F \subseteq \ES.
    \end{align*}
    
    \item $\translate{\dlspred}{F} \;\triangleq\; \mathsf{empty} \lor 
    \mathsf{nonempty}$ where\begin{align*}
        \mathsf{empty}  
            &\triangleq x = y \land x' = y' \land F = \emptyset,\\
        \mathsf{nonempty} 
            &\triangleq x \neq y \land x' \neq y' \land \mathsf{main\_path} \land
            \mathsf{boundaries} \land \mathsf{typing} \land
            \mathsf{back\_links},\\
        \mathsf{main\_path} 
            &\triangleq \reach(\arrnext, x, y) \land F = \path_S(\arrnext, x,
            y),\\
        \mathsf{boundaries} 
            &\triangleq \arrprev[x] = y' \land \arrnext[x'] = y \land x' \in F \land
            y' \not \in F,\\
        \mathsf{typing} 
            &\triangleq F \subseteq \ED,\\
        \mathsf{back\_links} 
            &\triangleq \forall \ell.\; (\ell \in F \land \ell \neq x') \implies 
            \arrprev[\arrnext[\ell]] = \ell.
    \end{align*}
    
    \item $\translate{\nlspred}{F} \;\triangleq\;
        \mathsf{main\_path} \land \mathsf{typing} \land \mathsf{inner\_lists} \land
        \mathsf{disjoint}$ where
        \begin{align*}
            \mathsf{main\_path} &\triangleq \reach(\arrtop, x, y) \land 
            F = \path_N(\arrtop ,\arrnext, x, y, z),\\
            \mathsf{typing} &\triangleq \path_S(\arrtop, x, y) \subseteq \EN
                \land F \setminus \path_S(\arrtop, x, y) \subseteq \ES,\\
            \mathsf{inner\_lists} &\triangleq \forall \ell.\; \ell \in F \cap \EN 
            \implies \reach(\arrnext, h[\ell], z),\\
            \mathsf{disjoint} &\triangleq \forall \ell_1, \ell_2.
                   \bigl(\{\ell_1, \ell_2\} \subseteq F \land \ell_1 \neq \ell_2
                   \land \arrnext[\ell_1] = \arrnext[\ell_2]\bigl) \implies
                   \arrnext[\ell_1] \not \in F\text{\footnotemark}.
        \end{align*}

\end{itemize}
\vspace*{-2mm}

\footnotetext{We could also write $\arrnext[\ell_1] = z$ instead of $\arrnext[\ell_1] \not \in F$, but the latter leads to better performance of SMT solvers.}

\vspace*{-1mm}\paragraph{Path quantifiers.} Invariants of paths are naturally
expressed using universal quantifiers. For quantifiers, however, we cannot
directly take advantage of bounds on path lengths. Therefore, similarly as for
separating conjunctions, we use the idea of replacing quantifiers by small
enumerations of their instances, which is efficient when we can compute small
enough bounds on the paths. For example, if we know that the length of an
$\f$-path with a root $x$ is at most two, it is enough to instantiate its
invariant for $x$, $h_\f[x]$, and $h^2_\f[x]$. This idea is formalised using
expressions $\allpathBounds{h}{x}{\ell} \psi$, which we call \textit{path
quantifiers} and which state that $\psi$ holds for all locations of the path
with the length $n$ starting from~$x$ via the array $h$:

\vspace{-4mm}
$$\allpathBounds{h}{x}{\ell} \psi \;\triangleq\; \bigwedge \!_{0 \leq i \leq n}
\;\; \psi[h^i[x]/\ell].$$

\noindent If we need to quantify over nested paths, we need to use two path quantifiers
(one for the top-level path and one for the nested paths). The quantifiers in
the last conjunct of the NLL translation can be rewritten as
$\allpath{h_t}{x}{\ell'_1} \allpath{h_t}{x}{\ell'_2}
\allpath{h_n}{\ell'_1}{\ell_1} \allpath{h_n}{\ell'_2}{\ell_2}$ In this
expression, $\ell'_1$ and $\ell'_2$ range over locations in the top-level list, and
$\ell_1$ and $\ell_2$ range over locations in the nested paths starting from $\ell'_1$ and
$\ell'_2$, respectively.
\vspace{1mm}

%
%

%


\vspace*{-2mm} \subsection{Complexity} \vspace*{-1mm} \label{section:complexity}
\enlargethispage{6mm}

This section briefly discusses the complexity of the proposed decision procedure
as well as the complexity lower bound for the satisfiability problem in the
considered fragment of SL. We will use $\mathsf{SAT}(\omega_1, \ldots,
\omega_n)$ to denote the satisfiability problem for a sub-fragment constructed
of atomic formulae and the connectives $\omega_i$ and
$\mathsf{SAT}(\overline{\omega_1, \ldots, \omega_n})$ to denote the fragment
where none of the connectives $\omega_i$ appear.

\begin{theorem}
\label{theorem:complexity}
The procedure $\mathtt{SatQuantif}$ produces formula of polynomial size, and, for $\mathsf{SAT}(\overline{\gneg})$, it runs in $\NP$. The procedure $\mathtt{SatEnum}$ runs in $\NP$ for $\mathsf{SAT}(\overline{\lor})$.
\end{theorem}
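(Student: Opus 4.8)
The plan is to prove the three assertions separately, all built on one elementary observation. By definition $\locsbound_S(\phi)$ is the floor of a sum of fixed per‑variable weights ranging over $\vars_S(\phi)$, so each $n_S$ is linear in $|\phi|$. Consequently the datatype $\L$ has linearly many constructors, the axiom set $\mathcal{A}_\phi$ is of polynomial size, and every address term $h_\f^i[x]$ occurring in the translation (where $i \le n_S$) has size $O(n_S)$. These facts will be used throughout without further comment.

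For the first claim, I would bound $|\translate{\phi}{D}|$ under $\mathtt{SatQuantif}$ by structural induction on $\phi$. The pure atoms, points‑to atoms, and the Boolean connectives translate homomorphically and merely add the sizes of their operands; the separating‑conjunction case introduces a single genuine set quantifier $\exists F_i$ together with two set operations and, crucially, does \emph{not} enumerate any footprint set, so it too only adds sizes. The cases needing care are the inductive predicates: their translations unfold $\reach^{[m,n]}$ and $\path^{[m,n]}_C$ into $O(n_S)$ disjuncts and unions and replace the universal path invariants by path quantifiers, which expand into $O(n_S)$ — or, for the NLL disjointness invariant with its two top‑level and two inner path quantifiers, $O(n_S^2)$ — instances, each obtained by substituting address terms of size $O(n_S)$. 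Since the nesting depth of path quantifiers is a fixed constant and each predicate body has constant size independent of $\phi$, every predicate atom translates to a formula polynomial in $n_S$, and as $\phi$ has $O(|\phi|)$ atoms, $\mathsf{T}(\phi)$ is of polynomial size. This size bookkeeping — in particular confirming that the (nested) path‑quantifier expansion combined with the address substitutions causes only a polynomial, not exponential, blow‑up — is the main obstacle in the whole proof.

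For the second claim, note that after the translation above the only quantifiers remaining in $\mathsf{T}(\phi)$ are the $\exists F$ introduced by separating conjunctions (the path invariants having been expanded into finite conjunctions), possibly together with universal quantifiers over the \emph{finite} sort $\L$; in particular there is no universal quantification over sets, because $\land$, $\lor$, and $\star$ all preserve polarity and $\gneg$ — translated as $\cdot \land \neg\, \cdot$ — is the only source of negation, hence of a $\forall$‑over‑powerset. When $\phi$ contains no guarded negation, all the $\exists F$ therefore occur positively and below no $\star$‑free $\forall$, so they may be Skolemized by fresh set‑valued constants (equivalently, guessed up front), yielding an equisatisfiable quantifier‑free formula of polynomial size over the finite datatype $\L$, uninterpreted arrays $\L \to \L$, and finite sets over $\L$. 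Satisfiability of such a formula is in $\NP$: a model is pinned down by the interpretations of the polynomially many constant, array and set symbols, each described by polynomially many bits since $|\L|$ is polynomial, and a guessed interpretation is checked against the quantifier‑free body together with $\mathcal{A}_\phi$ in polynomial time.

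For the third claim, the point is that $\mathtt{SatEnum}$ expands each placeholder $\exists x \in X.\ \psi$ into $\bigvee_{x' \in X}\psi[x'/x]$, so the output size is governed by the sizes of the footprint sets $\FFootprints{\psi_i}$. A routine induction on subformulae shows that when $\phi$ has no disjunction, $|\FFootprints{\psi}| = 1$ for every subformula: the base cases (pure atoms, points‑to, inductive predicates) yield singletons, $\gneg$ and $\land$ select one of the operands' sets, and the $\star$ case has size $|\FFootprints{\psi_1}|\cdot|\FFootprints{\psi_2}| = 1$ by the induction hypothesis, the disjunction case — the only non‑singleton one — being excluded. Moreover the footprint terms themselves are of polynomial size, being unions of at most $|\phi|$ leaf terms each of the form $\emptyset$, $\{x\}$, $\path_S(\cdots)$, or $\path_N(\cdots)$. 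Hence under $\mathtt{SatEnum}$ every separating conjunction contributes a single disjunct rather than a genuine quantifier, and since the path invariants are again expanded and guarded negations now place only quantifier‑free subformulae under $\neg$, $\mathsf{T}(\phi)$ is quantifier‑free and of polynomial size, so its satisfiability lies in $\NP$ exactly as in the previous paragraph.
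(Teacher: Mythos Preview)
Your proposal is correct and follows essentially the same approach as the paper: bound the size of the inductive-predicate translations (dominated by the $\path$ terms), observe that without $\gneg$ all set quantifiers introduced by $\star$ occur positively and can be Skolemized into an $\NP$-decidable theory, and observe that without $\lor$ every $\FFootprints{\psi}$ is a singleton so $\mathtt{SatEnum}$ produces a polynomial quantifier-free formula. One minor imprecision: the NLL $\mathsf{disjoint}$ invariant with its four nested path quantifiers expands into $O(n^4)$ instances, not $O(n^2)$ as you write (and the paper in fact carries out the size analysis using the universal-quantifier-over-$\L$ variant rather than path quantifiers, obtaining $O(n^5)$ for NLLs), but this does not affect the polynomial-size conclusion.
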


\begin{proof}[sketch] When not considering the instantiation of quantifiers over
footprints, both $\mathtt{SatQuantif}$ and $\mathtt{SatEnum}$ produce a formula
$\mathsf{T}(\phi)$ of a polynomial size dominated by the translation of
inductive predicates. For the variant of the translation of inductive predicates using universal quantifiers over locations, the size is $\mathcal{O}(n^3)$ for SLLs and DLLs
(dominated by the $\mathcal{O}(n^3)$ size of the $\path_S$ term), and
$\mathcal{O}(n^5)$ for NLLs (dominated by $\path_N$). If the input formula does
not contain guarded negations, then all quantifiers can be eliminated using
Skolemization. The translated formulae are then in a theory decidable in $\NP$
(e.g., when sets are encoded as extended arrays \cite{z3_sets}).

The procedure $\mathtt{SatEnum}$ can produce exponentially large formulae
because of the footprint enumeration. This can be prevented if the input formula
does not contain disjunctions, in which case the footprints of all sub-formulae
are unique, i.e., singleton sets. The translated formulae are then again in a
theory decidable in $\NP$. \qed \end{proof}

\begin{theorem}
\label{theorem:complexity-lower-bound}
$\mathsf{SAT}(\pto, \gneg, \land, \lor, \star)$ is $\mathsf{PSPACE}$-complete.
\end{theorem}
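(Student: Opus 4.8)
\emph{Proof plan.} The statement splits into $\PSPACE$-membership and $\PSPACE$-hardness; the hardness direction is where the real work lies.

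\emph{Membership.} By the small-model property (Theorem~\ref{theorem:small-models}), every satisfiable $\phi$ has a model $(s,h)$ with $|\locations(s,h)|\le\locsbound(\phi)$, and $\locsbound(\phi)$ is linear in $|\vars(\phi)|$, hence polynomial in $|\phi|$. Thus satisfiability is equivalent to the existence of a stack--heap model over a fixed location universe of polynomial size, and such a model is an object of polynomial size. One then checks $(s,h')\models\psi$ by a recursive procedure that runs in polynomial space: atoms and the boolean connectives recurse directly with one additional frame, and for $\psi_1\star\psi_2$ the procedure enumerates all subsets $F\subseteq\dom(h')$ (an odometer over subsets needs only polynomially many bits at once) and recursively checks $\psi_1$ on $h'|_F$ and $\psi_2$ on $h'|_{\dom(h')\setminus F}$. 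The recursion depth is bounded by $|\psi|$ and every frame uses polynomial space, so model checking is in $\PSPACE$; wrapping this in a polynomial-space outer loop over all candidate models shows $\mathsf{SAT}(\pto,\gneg,\land,\lor,\star)\in\PSPACE$.

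\emph{Hardness.} I would reduce from the truth of quantified Boolean formulas. Given $\Phi=Q_1p_1\cdots Q_np_n.\,M$ with $M$ a quantifier-free CNF, introduce $\lsSort$-variables $c_i^T,c_i^F$ for $1\le i\le n$; the intended heaps consist of some subset of the $2n$ cells $c_i^b\pto\nil$, and a truth assignment is encoded by which cell of each pair $\{c_i^T,c_i^F\}$ survives a sequence of ``peeling'' splits. Write $\mathsf{PeelOne}_i\triangleq(c_i^T\pto\nil)\lor(c_i^F\pto\nil)$ (``the split-off part is exactly one cell of pair $i$''), and let $\mathsf{Shape}_k\triangleq\bigl(\bigstar_{i<k}((c_i^T\pto\nil)\lor(c_i^F\pto\nil))\bigr)\star\bigl(\bigstar_{i\ge k}((c_i^T\pto\nil)\star(c_i^F\pto\nil))\bigr)$ describe the heap present while $p_k,\dots,p_n$ are undecided (in particular $\mathsf{Shape}_{n+1}$ is the one-cell-per-pair ``assignment'' heap). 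The matrix is read off the surviving heap via $\mathsf{In}(c_i^b)\triangleq(c_i^b\pto\nil)\star\bigstar_{j\ne i}((c_j^T\pto\nil)\lor(c_j^F\pto\nil))$, and $t(M)\triangleq\bigwedge_{C\in M}\bigvee_{\ell\in C}\mathsf{In}(\ell)$ with $\mathsf{In}$ applied to $c_i^T$ for a positive literal and to $c_i^F$ for a negative one. The prefix is translated inside-out: $\Psi_n\triangleq\mathsf{Shape}_{n+1}\land t(M)$; for $Q_i=\exists$ set $\Psi_{i-1}\triangleq\mathsf{PeelOne}_i\star\Psi_i$; and for $Q_i=\forall$ set $\Psi_{i-1}\triangleq\mathsf{Shape}_i\gneg\bigl(\mathsf{PeelOne}_i\star(\mathsf{Shape}_{i+1}\gneg\Psi_i)\bigr)$, which (since $\gneg$ is $\land\neg$) is exactly $\neg(\exists p_i.\,\neg\Psi_i)$ with both negations guarded by shape formulae that provably hold where they are used. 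The reduction outputs $\mathsf{Shape}_1\land\Psi_0$. Each $\Psi_i$ mentions $\Psi_{i+1}$ only once, so the output has polynomial size; and by induction on $i$ one shows that $\mathsf{Shape}_i\land\Psi_{i-1}$ is satisfied precisely by the shape-$\mathsf{Shape}_i$ heaps whose surviving cells encode a partial assignment that extends to a winning play of $Q_ip_i\cdots Q_np_n.\,M$, whence $\mathsf{Shape}_1\land\Psi_0$ is satisfiable iff $\Phi$ is true.

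\emph{Main obstacle.} The size bound and the membership algorithm are routine; the delicate point is the correctness of the reduction, specifically the guards. Because $\alpha\gneg\beta$ collapses to $\false$ when $\alpha$ fails, one must verify that every guard $\mathsf{Shape}_k$ holds at every position at which it occurs---for the inner guard this uses that $\mathsf{PeelOne}_i$ forces exactly one cell of pair $i$ to be removed (so the remainder has shape $\mathsf{Shape}_{i+1}$), and for the outer guard and the base case it uses the top-level conjunct $\mathsf{Shape}_1$ pinning down the initial heap. One then checks, following the quantifier alternation, that an $\exists$-level corresponds to the existence of a peeling and a $\forall$-level (through the double guarded negation) to all peelings, and that $t(M)$ evaluates the encoded assignment correctly on a one-cell-per-pair heap. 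Finally, note that guarded negation is genuinely needed here: dropping it leaves only $\star,\land,\lor$ in the prefix, and by the argument of Theorem~\ref{theorem:complexity} such formulae translate into an $\NP$ problem, so $\mathsf{SAT}(\pto,\land,\lor,\star)$ is not $\PSPACE$-hard unless $\NP=\PSPACE$.
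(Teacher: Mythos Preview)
Your proof is correct. For membership, the paper simply cites \cite{strong-sl}, where $\PSPACE$ membership is established for a more expressive fragment; your direct argument via the small-model property and a polynomial-space model-checking procedure is the standard alternative and is self-contained.

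For hardness, both you and the paper reduce from QBF and both encode $\forall$ by a double guarded negation, but the encodings differ. The paper (adapting \cite{CC_results}) uses \emph{one} SL variable per propositional variable, with truth encoded by allocation, and exploits a single uniform guard $\mathsf{arbitrary}[X]\triangleq\bigstar_{x\in X}(x\pto\nil\lor\emp)$ that is satisfied by \emph{every} sub-heap of the full heap; this lets one write $\neg F$ as $\mathsf{arbitrary}[X]\gneg\red{F}$ and $\forall x.\,F$ as $\mathsf{arbitrary}[X]\gneg\bigl((x\pto\nil\lor\emp)\star\red{\neg F}\bigr)$ without tracking which quantifiers have already been processed. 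Your construction instead uses two cells $c_i^T,c_i^F$ per variable and level-indexed guards $\mathsf{Shape}_k$ recording exactly which pairs are resolved. That is heavier bookkeeping, but it buys you one thing the paper's version does not: your formulae use only $\pto,\gneg,\land,\lor,\star$ and never $\emp$ or a pure atom, so the reduction lands squarely in the fragment named in the theorem statement, whereas the paper's $\mathsf{arbitrary}[X]$ quietly relies on $\emp$. In exchange, the paper's encoding is shorter and needs no inductive maintenance of a shape invariant.
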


\begin{proof}[sketch] Membership in $\PSPACE$ was proved in \cite{strong-sl} for
a more expressive fragment. For the hardness part, we build on the reduction
from QBF used in \cite{CC_results}. In this reduction, the boolean value of a
variable is represented by the corresponding SL variable being allocated (always
pointing to $\nil$ for simplicity). The fact that $x$ is false is expressed
using a negative points-to predicate stating that $x$ is not allocated. The
existential quantifier is expressed using the separating conjunction, and the
universal quantifier is obtained using the (unguarded) negation. (For details,
see \cite{CC_results}.)

We show that this reduction can be done without the unguarded negation and the
negative points-to assertion, using the guarded negation instead. The key
observation is that, for a QBF formula with variables $X$, we can express that
all variables in $X$ can have arbitrary boolean values as $\mathsf{arbitrary}[X]
\triangleq \bigstar_{x \in X} (x \pto \nil \lor \emp)$.  In the context of
variables $X$, we can then express negation as \mbox{$\neg F
\triangleq \mathsf{arbitrary}[X] \gneg F$} and the truth values of a variable~$x$
as \mbox{$\neg x \triangleq \mathsf{arbitrary}[X \setminus \{x\}]$} and $x \triangleq
\mathsf{arbitrary}[X] \star x \mapsto \nil$. The rest of the reduction then
easily follows~\cite{CC_results}. \qed \end{proof}

\vspace*{-5mm}
\section{Optimised Bound Computation}
\label{section:optimisations}
\vspace*{-0.5mm}

In many practical cases, the main source of complexity is the translation of
inductive predicates, which heavily depends on the possible lengths of paths
between locations. We now propose how to bound the length of these paths based
on the so-called \mbox{\textit{SL-graphs}} which are graph representations of constraints imposed by SL formulae. \mbox{SL-graphs} were originally used for representation and deciding of symbolic heaps with lists in \cite{sl_graphs}. Here, we use their
generalised form which captures must-relations holding in all models of a given
formula. Note that the nodes of the graphs are implicitly given by the domains of
the involved relations, which themselves can be viewed as edges.


\begin{definition}
An SL-graph of $\phi$ is a tuple $G[\phi] = (\musteq, \mustneq, (\mustpto{\f}, \mustpath{\f}, \muststar{\f})_{\f \in \Fields})$ where:
\begin{itemize}
    \item $\musteq \;\subseteq \vars(\phi) \times \vars(\phi)$ is an equivalence relation called must-equality,
    \item $\mustneq \;\subseteq \vars(\phi) \times \vars(\phi)$ is a symmetric relation called must-disequality,
    \item $\mustpto{\f} \;\subseteq \vars(\phi) \times \vars(\phi)$ is a must-$\f$-pointer relation,
    \item $\mustpath{\f} \;\subseteq \vars(\phi) \times \vars(\phi)$ is an irreflexive must-$\f$-path relation,
    \item $\muststar{\f} \subseteq \vars(\phi)^2 \times \vars(\phi)^2$ is a symmetric relation called must-$\f$-path-disjointness.
\end{itemize}
\end{definition}

\enlargethispage{2mm}

\noindent Except $\muststar{\f}$, the components of $G[\phi]$ represent atomic
formulae---equalities, disequalities, pointers, and paths (i.e., list
segments)---holding within all models of $\phi$. The fact that $(x_1, y_1)
\muststar{\f} (x_2, y_2)$ states that, in all models of $\phi$, the domains of
$\f$-paths from $x_1$ to $y_1$ and from $x_2$ to $y_2$ are disjoint.

To compute the SL-graph $G[\phi]$, we define some auxiliary notation. We define
$G_\emptyset$ to be an SL-graph where all the relations are empty. We write $G
\lhd \{x_i \bowtie_i y_i\}_{i \in I}$ to denote the SL-graph $G'$ which is the
same as $G$ with the elements $x_i \bowtie_i y_i$ for $i \in I$ added to the
corresponding relations. We use $\sqcup$ and $\sqcap$ as a component-wise union
and intersection of SL-graphs, respectively. We define the disjoint union of
SL-graphs as:
\begin{align*} G_1 \squplus G_2 &= (G_1 \sqcup G_2)\\
&\lhd \{ x \mustneq y \;|\; x
\in \alloc(G_1), y \in \alloc(G_2) \text{, and ($x$ is not $\nil$ or $y$ is not $\nil$})\}\\
&\lhd \{ e_1 \muststar{\f}
e_2 \;|\; \f \in \Fields, e_1 \in \paths_\f(G_1) \text{, and }  e_2 \in \paths_\f(G_2)\}.
\end{align*}

\noindent Here, $\paths_\f(G)$ is defined as $\mustpto{\f} \cup \mustpath{\f}$,
and the set of must-allocated variables is $\mathsf{alloc}(G) = \{x \;|\; \exists y, \f.\; x \mustpto{\f} y \text{ or } (x \mustpath{\f} y  \text{ and } x \mustneq y)\} \cup
\{\nil\}$ ($\nil$ is added for technical reasons). We further assume that all operations on SL-graphs ($\lhd$,
$\sqcup$, $\sqcap$, and~$\squplus$) preserve relational properties (symmetry,
transitivity, etc.) of the components of SL-graphs by computing the
corresponding closures after the operation is performed. We compute the SL-graph
$G[\phi]$ as follows.
\begin{align*}
    G[x = y] &= G_\emptyset \lhd \{ x \musteq y\}
    && G[x \pto \langle \f_i: f_i \rangle_{i \in I}] = G_\emptyset \lhd \{x \mustpto{\f_i} f_i\}_{i \in I}\\
    G[x \neq y] &= G_\emptyset \lhd \{ x \mustneq y\}
    && G[\lspred] = G_\emptyset \lhd \{ x \mustpath{\n} y\}\\
    G[\psi_1 \gneg \psi_2] &= G[\psi_1]
    && G[\dlspred] = G_\emptyset \lhd \{ x \mustpath{\n} y, x' \mustpath{\p} y'\}\\
    G[\psi_1 \land \psi_2] &= G[\psi_1] \sqcup G[\psi_2]
    && G[\nlspred] = G_\emptyset \lhd \{ x \mustpath{\n} z, x \mustpath{\t} y\}\\
    G[\psi_1 \lor \psi_2] &= G[\psi_1] \sqcap G[\psi_2]
    && G[\psi_1 \star \psi_2] = G[\psi_1] \squplus G[\psi_2]\vspace*{-1mm}
\end{align*}

\noindent Observe that we only approximate $\dls$ and $\nls$. After the construction is finished, we apply the following rules for matching of pointers and for detection of inconsistencies.

\begin{prooftree}
\AxiomC{$x_1 \mustpto{\f} y_1$\;\; $x_2 \mustpto{\f} y_2$\;\; $x_1 \musteq x_2$}
\RightLabel{($\mapsto$-match)\quad\quad}
\UnaryInfC{$y_1 \musteq y_2$}
\DisplayProof
\AxiomC{$x \musteq y$\;\; $x \mustneq y$}
\RightLabel{(contradiction)}
\UnaryInfC{$\phi$ is unsat}
\end{prooftree}

\vspace*{-1mm}\paragraph{Tighter location bounds.}

Using SL-graphs, we can slightly improve the location bound from Section \ref{section:small-models} by considering equivalence classes of $\musteq$  instead of individual variables (this can be also used to refine the later described path bound computation) and by defining $\chunksize{x} =1$ if $x$ is a must-pointers,
i.e., $x \mustpto{\f} y$ for some $\f$ and $y$.

\vspace*{-1mm}\paragraph{Path bounds.}

\enlargethispage{6mm}

We now fix an $\f$-path $\sigma$ from $x^S$ to $y$ and show how to compute an
interval $[\plower, \pupper]$ that gives bounds on its length.
The computation of the path bounds runs in two steps.
In the first step, we compute an initial bound $[\plower_{e}^0, \pupper_{e}^0]$
for each edge $e \in \paths_\f(G)$.
If $e$ is a pointer edge, its bound is given as $[1,1]$.
For a path edge $e = (a,b)$, we define $\plower_e^0 = 1$ if $a \mustneq b$ and
$0$ otherwise; while $\pupper_e^0$ is defined as \mbox{$\locsbound_S(\phi) - \sum_{v \in V} \chunksize{v}$} where $V = \{v \in \vars_S(\phi) \;|\; v \text{ is not } x \text{ and } \exists u.\; (v, u) \muststar{\f} (x, y)\}.$
\noindent This way, we exclude from the computation of the initial upper bound
the source $v$ of each path disjoint with $\sigma$ and all locations possibly
allocated in a chunk with the root $v$. Note that it can be the case that the
actual size of this chunk has a lesser size than $\chunksize{v}$, but this means
that we were too conservative when computing the global location bound and can
decrease the path bound by the same number anyway.

In the second phase, we compute the bounds of the path $\sigma$ using initial bounds from the first step. The computation is based on two
weighted directed graphs derived from the SL-graph $G$: $\gmax$ for the upper
bound and $\gmin$ for the lower bound (in both cases, the vertices are implicitly
given as $\vars(\phi)$, and the edge weight of an edge $e$ is given by $u_e^0$ and
$\ell_e^0$ computed in the previous step, respectively):
\begin{align*}
\gmax &= \{a \rightarrow b ~|~ (a, b) \in \mathsf{paths}_\f(G)\},\\
\gmin &= \{a \rightarrow b ~|~ 
    (a \mustpto{\f} b \text{ and } a \mustneq y) \text{ or }\\
    &\phantom{= \{a \rightarrow b ~|~}\; (a \mustpath{\f} b \text{ and } 
\exists w.\; \mathsf{nonempty}(y, w) \text{ and } (y, w) \muststar{\f} (a, b)\}.
\end{align*}
\noindent Here, the condition $\mathsf{nonempty}(y, w)$ states that a directed
SL-graph edge $(y, w)$ is non-empty which holds if either $y \mustpto{\f} w
\text{, or when } y \mustpath{\f} w \text { and } y \mustneq w$.

Intuitively, the upper bound $u$ is computed as the length of the shortest path
from $x$ to $y$ in~$\gmax$. Since $\f$-paths are uniquely determined, we know
that no path can be longer than the shortest one, and thus $u$ is indeed a
correct upper bound. The lower bound $\ell$ is computed as the length of the
longest path starting from $x$ (ending anywhere) in $\gmin$. By construction,
$\gmin$ contains only those edges for which one can prove that they cannot
contain $y$ in their domains. A path from $x$ of a length $\ell$ therefore implies
that $x$ cannot reach $y$ in less than $\ell$ steps, and thus $\ell$ is indeed a
correct lower bound.




\begin{figure*}[t]
    \centering
    \begin{minipage}{0.35\textwidth}
        \begin{subfigure}[b]{1\textwidth}
            \vspace*{-5mm}
            \centering
            \begin{tikzpicture}[
    node distance={0mm}, 
    thick, 
    main/.style = {draw, circle}]
    
    \tikzstyle{node}=[thick,draw=blue!75,fill=blue!20,minimum size=5mm,rounded corners=0.15cm]
    \tikzstyle{rnode}=[thick,draw=red!75,fill=red!20,minimum size=5mm,rounded corners=0.15cm]
    \tikzstyle{empty}=[]
    
    \node[node]  (1)                              {$a$};
    \node[node]  (2) [right of=1, xshift=12mm]    {$b$};
    \node[node]  (3) [right of=2, xshift=12mm]    {$c$};
    \node[node]  (4) [right of=3, xshift=12mm]    {$d$};

    \draw[-stealth', decorate, decoration={snake,amplitude=.3mm,segment length=2mm,post length=1mm}] 
        (1) to node [midway, below, empty] {$[0,2]$} (2);
    \draw [-stealth'] (2) to node [midway, below, empty] {$[1,1]$} (3);
    \draw[-stealth'] (3) to node [midway, below, empty] {$[1,1]$} (4);
    \draw[-stealth', bend right=70, decorate, decoration={snake,amplitude=.3mm,segment length=2mm,post length=1mm}] (4) to node [midway, above, empty] {$[0,2]$} (1);
    
    \draw[-] (2) [bend left=60, dotted] to node [midway, below, empty] {$\mustneq$} (3);

    \draw[-, draw, dotted, in=110, out=70] ($ (1) !.5! (2) $) to node [midway, above, empty] {$\muststar{\n}$} ($ (2) !.75! (4) $);
    

\end{tikzpicture}
            \caption{Fragment of SL-graph $G[\phi]$.}
            \label{fig:sl-graph}
        \end{subfigure}
    \end{minipage}
    \begin{minipage}{0.3\textwidth}
        \begin{subfigure}[b]{1\textwidth}
            \centering
            \vspace{8mm}
            \begin{tikzpicture}[
    node distance={0mm}, 
    thick, 
    main/.style = {draw, circle}]
    
    \tikzstyle{node}=[thick,draw=blue!75,fill=blue!20,minimum size=5mm,rounded corners=0.15cm]
    \tikzstyle{rnode}=[thick,draw=red!75,fill=red!20,minimum size=5mm,rounded corners=0.15cm]
    \tikzstyle{empty}=[]
    
    \node[node]  (1)                              {$a$};
    \node[node]  (2) [right of=1, xshift=11mm]    {$b$};
    \node[node]  (3) [right of=2, xshift=11mm]    {$c$};

    \draw[-stealth', preaction={draw,BurntOrange,-, double=BurntOrange, double distance=3\pgflinewidth,}] 
        (1) to node [midway, above, empty] {$0$} (2);
    \draw[-stealth', preaction={draw,BurntOrange,-, double=BurntOrange, double distance=3\pgflinewidth,}]
        (2) to node [midway, above, empty] {$1$} (3);

\end{tikzpicture}
            \vspace*{2.5mm}
            \caption{Graph $\gmin$.}
            \label{fig:path-min}
        \end{subfigure}
    \end{minipage}
    \begin{minipage}{0.3\textwidth}
        \begin{subfigure}[b]{1\textwidth}
            \centering
            \vspace{-1.5mm}
            \begin{tikzpicture}[
    node distance={0mm}, 
    thick, 
    main/.style = {draw, circle}]
    
    \tikzstyle{node}=[thick,draw=blue!75,fill=blue!20,minimum size=5mm,rounded corners=0.15cm]
    \tikzstyle{rnode}=[thick,draw=red!75,fill=red!20,minimum size=5mm,rounded corners=0.15cm]
    \tikzstyle{empty}=[]
    
    \node[node]  (1)                              {$a$};
    \node[node]  (2) [right of=1, xshift=11mm]    {$b$};
    \node[node]  (3) [right of=2, xshift=11mm]    {$c$};
    \node[node]  (4) [right of=3, xshift=11mm]    {$d$};

    \draw[-stealth', preaction={draw,BurntOrange,-, double=BurntOrange, double distance=3\pgflinewidth,}] 
        (1) to node [midway, above, empty] {$2$} (2);
    \draw[-stealth', preaction={draw,BurntOrange,-, double=BurntOrange, double distance=3\pgflinewidth,}] 
        (2) to node [midway, above, empty] {$1$} (3);
    \draw[-stealth']  
        (3) to node [midway, above, empty] {$1$} (4);
    \draw[-stealth', bend right=50] (4) to node [midway, above, empty] {$2$} (1);
\end{tikzpicture}
            \vspace{-4pt}
            \caption{Graph $\gmax$.}
            \label{fig:path-max}
        \end{subfigure}
    \end{minipage}

    \vspace*{-1mm} \caption{An illustration of the bound computation for
    the path $\sigma$ from $a$ to $c$ on a fragment of SL-graph of  $\phi
    \triangleq \bigl(\ls(a, b) \star b \pto c \star c \pto d \star \ls(d,
    a)\bigr) \gneg \bigl(\ls(a,c) \star \ls(c,a)\bigr)$. The highlighted edges
    denote the paths used to determine the bound $[1, 3]$.} \vspace*{-5mm}

\label{fig:path-bound}
\end{figure*}

\vspace*{-1mm}\paragraph{Example.} We demonstrate the path bound computation in
Fig. \ref{fig:path-bound}, which shows a fragment of the SL-graph of a formula
$\phi$ (it shows only those $\muststar{\n}$ edges that are relevant in our
example) and the graphs $\gmin$ and $\gmax$ for the path $\sigma$ from $a$ to $c$.
We have that $\chunksize{b} = \chunksize{c} = 1$ and $\chunksize{a} =
\chunksize{d} = 2$. This gives us the location bound, which is 6. In the first
phase, we compute the initial bound $[0,2]$ for paths of the predicates $\ls(a,
b)$ and $\ls(d, a)$ because both of them are disjoint with all the other paths
in $G[\phi]$.
%
%
In the second phase, we get the bound for $\sigma$ equal to $[1, 3]$ instead of
the default \mbox{bound $[0, 6]$}.  \vspace{-3mm}

\section{Experimental Evaluation}
\label{section:experiments}
\vspace*{-1mm}

\enlargethispage{6mm}

We have implemented the proposed decision procedure in a new solver called
\mbox{\astral}{\footnote{\anonymize{Link}{\url{https://github.com/TDacik/Astral}}}}.
\astral is written in OCaml and can use multiple backend SMT solvers. With the
encoding presented in Section \ref{section:translation}, it can use either \cvc
supporting set theory directly \cite{cvc_sets} or \zzz supporting it by a
reduction to the extended theory of arrays \cite{z3_sets}. We have also
developed an alternative encoding in which both locations and location sets are
represented as bitvectors. The bitvector encoding differs only in expressing set
operations on the level of bitvectors with additional axioms ensuring that all
locations ``can fit'' into sets encoded by the bitvectors
\iftechreport
    (for details, see Appendix \ref{appendix:bitvectors}).
\else
    (for details, see \cite{tech_report}).
\fi
With the bitvector
encoding, a backend solver only needs to support theories of bitvectors and
arrays, which are both standard and supported by many other SMT solvers. 
Another advantage is that the
quantification on
%
%
bitvectors seems to perform significantly better than on
%
%
sets.

In our experiments, if we do not say explicitly which encoding and solver is
used, we use the bitvector encoding and \bitwuzla \cite{bitwuzla} as the
backend solver, which we found to be the best performing combination. We set a
limit for the method $\mathtt{SatEnum}$ to 64 footprints. If this limited is
exceeded, we dynamically switch to $\mathtt{SatQuantif}$. We use path quantifiers
when the path bound is at most half of the domain bound. These are design
choices that can be revisited in the future.

All experiments were run on a machine with 2.5 GHz Intel Core i5-7300HQ CPU and 16 GiB RAM, running Ubuntu 18.04. The timeout was set to 60 s and the memory limit to 1 GB. Our experiments were conducted using \benchexec~\cite{benchexec}, a framework for reliable benchmarking.

\begin{table*}[!t]

    \caption{Experimental results for formulae from SL-COMP. The columns are:
    solved instances (OK), out of time/memory (RO), instances on which \astral wins---\astral can solve it and the other solver not or \astral solves it faster (WIN), instances solved in the time limits of $0.1$ s and $1$ s, and the total time for solved \mbox{instances in seconds.}}

    \begin{subfigure}[b]{1\textwidth}
        \centering
        \caption{Results for the category QF\_SHLS\_ENTL.}
        \begin{tabular}{@{}lrrrrrrcrrrrrr@{}}
\toprule

& \multicolumn{6}{c}{Verification conditions (86)} & \phantom{abc}& \multicolumn{6}{c}{bolognesa+clones (210)}\\

\cmidrule{2-7} \cmidrule{9-14}

Solver & OK & \hspace{1pt} RO & WIN & $<\!\! 0.1$ & \hspace{1pt} $\leq\!\! 1$ & Total time && 
         OK & \hspace{1pt} RO & WIN & $<\!\! 0.1$ & \hspace{1pt} $\leq\!\! 1$ & Total time\\
         
\midrule
\rowcolor{GreenYellow}
\textsc{Astral} & 86 & 0 & - & 84 & 86 & 4.62 && 210 & 0 & - & 68 & 169 & 202.91\\
\textsc{Astral}$^\star$ & 86 & 0 & 42 & 83 & 86 & 4.64 && 195 & 15 & 88 & 64 & 150 & 408.48\\
\rowcolor[gray]{0.9}
\textsc{GRASShopper} & 86 & 0 & 70 & 52 & 86 & 8.65 && 203 & 7 & 148 & 60 & 87 & 1229.35\\
\textsc{S2S} & 86 & 0 & 5 & 86 & 86 & 2.08 && 210 & 0 & 3 & 203 & 210 & 8.18\\
\rowcolor[gray]{0.9}
\textsc{Sloth} & 64 & 3 & 86 & 0 & 28 & 235.28 && 70 & 140 & 210 & 0 & 50 & 149.42\\

\bottomrule
\end{tabular}
        \label{table:experiments1}
    \end{subfigure}
    \begin{subfigure}[b]{1\textwidth}
        \centering
        \caption{Results for a subset of the category QF\_SHLID\_ENTL.}
        \begin{tabular}{@{}lrrrrrrcrrrrrr@{}}
\toprule

& \multicolumn{6}{c}{Doubly-linked lists (17)} & \phantom{abc}& \multicolumn{6}{c}{Nested singly-linked lists (19)}\\

\cmidrule{2-7} \cmidrule{9-14}

Solver & OK & \hspace{1pt} RO & WIN & $<\!\! 0.1$ & \hspace{1pt} $\leq\!\! 1$ & Total time && 
         OK & \hspace{1pt} RO & WIN & $<\!\! 0.1$ & \hspace{1pt} $\leq\!\! 1$ & Total time\\

\midrule
\rowcolor{GreenYellow}
\textsc{Astral} & 17 & 0 & - & 11 & 17 & 2.72 && 19 & 0 & - & 3 & 9 & 86.93\\
\textsc{GRASShopper} & 17 & 0 & 16 & 3 & 15 & 7.53 && - & - & - & - & - & -\\
\rowcolor[gray]{0.9}
\textsc{Harrsh} & 17 & 0 & 17 & 0 & 0 & 95.18 && 14 & 5 & 18 & 0 & 0 & 183.01\\
\textsc{S2S} & 17 & 0 & 0 & 17 & 17 & 0.15 && 19 & 0 & 0 & 19 & 19 & 0.43\\
\rowcolor[gray]{0.9}
\textsc{Songbird} & 11 & 5 & 14 & 5 & 9 & 13.39 && 11 & 5 & 8 & 4 & 11 & 1.38\\
\bottomrule
\end{tabular}
        \label{table:experiments2}
    \end{subfigure}

    \vspace*{-6mm}
\end{table*}

\vspace*{-4mm}\subsection{Entailments of Symbolic Heaps}\vspace*{-1mm}

In the first part of our evaluation, we focus on formulae from the symbolic heap
fragment which is frequently used by verification tools and for which there
exist many dedicated solvers. We therefore do not expect to outperform the best
existing tools but rather to obtain a comparison with other translation-based
decision procedures.

In Table \ref{table:experiments1}, we provide results for the category
QF\_SHLID\_ENTL (entailments with SLLs). We divide the category into two
subsets: verification conditions (which are simpler) and more complex
artificially generated formulae ``\textit{bolognesa}" and ``\textit{clones}"
from \cite{SL-superposition}. During the experiments, we found out that several
``cloned'' entailments contain root variables on the right-hand side of the
entailment that do not appear on the left-hand side, making the entailment
trivially invalid when its left-hand side is satisfiable. For a few hard clone
instances, this makes a problem for \astral as it cannot use the path bound
computation as such roots do not appear in the SL-graph. We have therefore
implemented a heuristic that detects  entailments $\phi \models \psi$ that can
be reduced to satisfiability of $\phi$. Since this is a benchmark-specific heuristic, we present
also the version without this heuristic (\astral$\!^\star$) in Table
\ref{table:experiments1}. The optimised version of \mbox{\astral} is able to solve all
the formulae being faster than other translation-based solvers
\mbox{\grasshopper}\footnote{Since \grasshopper is not an solver but a verification
tool, we encode the entailment checking as a verification of an empty program.}
and \sloth. For illustration, the table further contains the second best solver
in the latest edition of SL-COMP, \sss\footnote{We had technical issues running
the winner \asterix \cite{sl_mod_theories}. The difference between those tools
is, however, negligible.}.

In Table \ref{table:experiments2}, we provide results for a subset of the
category QF\_SHLID\_ENTL (entailments with linear inductive definitions from
which we selected DLLs and NLLs) for \astral and three best-performing solvers
competing in the latest edition of SL-COMP---\sss, \songbird (in the version with automated lemma synthesis called \sls), and \harrsh. We
also include \grasshopper which supports DLLs only. Except \sss which solves
almost all formulae virtually immediately, \astral is the only one able to solve
all the formulae in the given time limit.

\vspace*{-3mm}
\subsection{Experiments on Formulae Outside of the Symbolic Heap Fragment}
\vspace*{-1mm}

\enlargethispage{6mm}

For formulae outside of the symbolic heap fragment and its top-level boolean
closure, there are currently no existing benchmarks. For now, we therefore limit
ourselves to randomly generated but extensive sets of formulae. In the future,
we would like to develop a program analyser using symbolic execution over BSL
and make more careful experiments on realistic formulae.

We first focus on the fragment with guarded negations but without inductive
predicates, on which we can compare \astral with \cvc. We have prepared a set of
1000 entailments of the form $\phi \models \psi$ which are generated as random
binary trees with depth~8 over 8 variables with the only atoms being pointer
assertions. To reduce the number of trivial instances, we only generated
formulae for which $\vars(\psi) \subseteq \vars(\phi)$ and \mbox{\astral} cannot deduce
contradiction from their SL-graphs. To avoid any suspicion that the difference
is caused by better performance of the backend solver rather than the design of
our translation, we used \astral with the $\cvc$ backend and direct set encoding (with \bitwuzla and bitvector encoding, our
results would be even better). The results are given in
Fig.~\ref{fig:random_cvc5} and suggest that our treatment of guarded negations
really brings a better performance---\astral can solve all the instances and
almost all of them under 10 seconds. On the other hand, \cvc timed out in 61
cases and is usually slower than \astral, in particular on satisfiable formulae
which represent invalid entailments.

In the second experiment, we compared our solver with \grasshopper on the
fragment which it supports, i.e., arbitrary nesting of conjunctions and
disjunctions. We again generated 1000 entailments, this time with depth 6, 6
variables and with atoms being singly-linked lists (with 20~\% probability) or
pointer-assertions. The results are given in Fig.~\ref{fig:random_grasshopper}.
\astral ran out of memory in 5 cases, and \grasshopper timed out in 10 cases. In
summary, \astral is faster on more than 80~\% of the formulae with an almost 3
times lesser running time.

Finally, to illustrate that \astral can indeed handle formulae out of the
fragments of all the other mentioned tools, we apply it on an entailment query
that involves the formula mentioned at the end of the introduction: $((\ls(x,y) \gneg (\ls(x,z) \star \ls(z,y))) \star y \pto z) \models \ls(x,z),$
converted to
an unsatisfiability query. \astral resolves the query in 0.12 s. Note that
without the requirement $\neg (\ls(x,z) \star \ls(z,y))$, the entailment does not
hold as a cycle may be closed in the heap.

\begin{figure}[!t]
    \centering
    \begin{subfigure}[b]{0.48\textwidth}
        \vspace*{-10mm}
        \includegraphics[width=\textwidth]{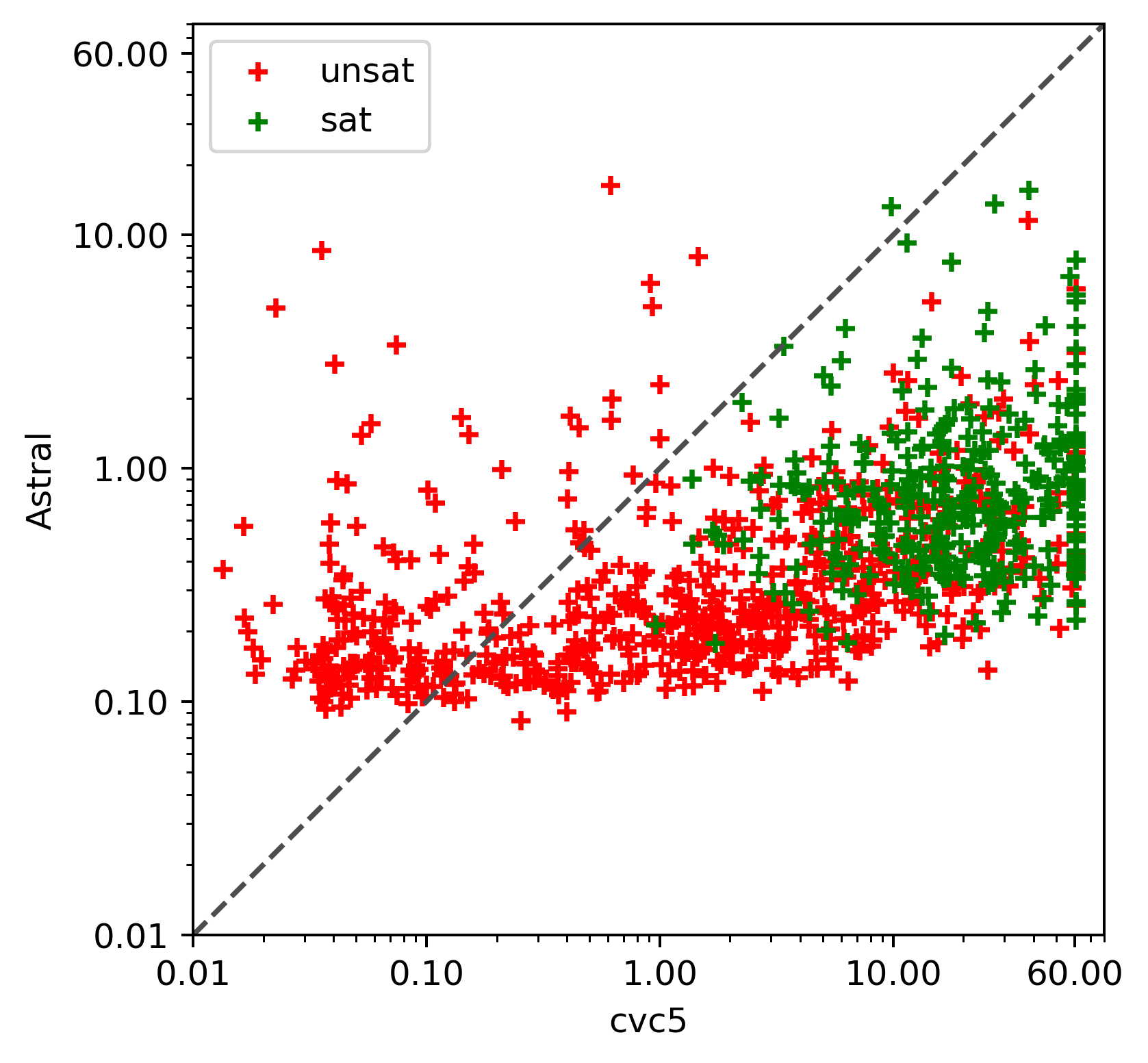}
        \vspace*{-6mm}
        \caption{Comparison with \cvc}
        \label{fig:random_cvc5}
    \end{subfigure}
    \begin{subfigure}[b]{0.48\textwidth}
        \centering
        \includegraphics[width=\textwidth]{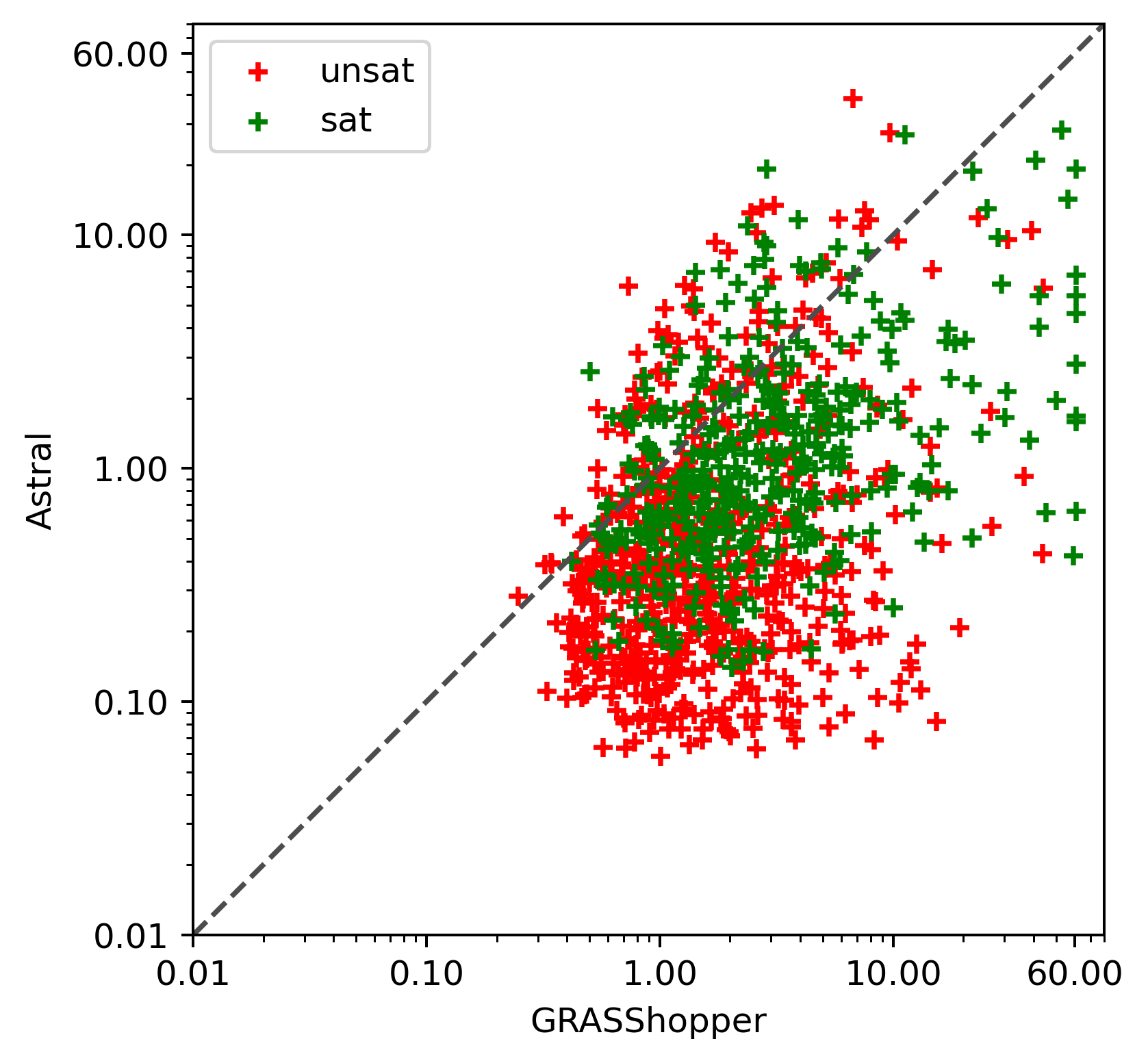}
        \vspace*{-6mm}
        \caption{Comparison with \grasshopper}
        \label{fig:random_grasshopper}
    \end{subfigure}

    \caption{A comparison of \astral with \cvc and \grasshopper on randomly
    generated formulae. Times are in seconds, axes are logarithmic. The timeout was \mbox{set to 60 s.}}\vspace*{-5mm}

\end{figure}

\enlargethispage{6mm}

\vspace*{-4mm}
\section{Conclusions and Future Work}
\vspace*{-2mm}
\label{section:conclusions}

We have presented a novel decision procedure based on a small-model property and
translation to SMT. Our experiments have shown very promising results,
especially for formulae with rich boolean structure for which our decision
procedure outperforms other approaches (apart from being able to solve more
formulae).


In the future, we would like to extend our approach with some class of
user-defined inductive predicates, with more complex spatial connectives such
as septractions and/or magic wands, consider a lazy and/or interactive
translation instead of the current eager approach, and try \astral within some
SL-based program analyser.


\newpage


%
%
%
\bibliographystyle{splncs04}
\bibliography{references}

\begin{thebibliography}{10}
\providecommand{\url}[1]{\texttt{#1}}
\providecommand{\urlprefix}{URL }
\providecommand{\doi}[1]{https://doi.org/#1}

\bibitem{cvc_sets}
Bansal, K., Barrett, C., Reynolds, A., Tinelli, C.: {A New Decision Procedure for Finite Sets and Cardinality Constraints in SMT}. In: IJCAR (2017)

\bibitem{quantitative_sl}
Batz, K., Fesefeldt, I., Jansen, M., Katoen, J.P., Ke{\ss}ler, F., Matheja, C., Noll, T.: {Foundations for Entailment Checking in Quantitative Separation Logic}. In: ESOP (2022)

\bibitem{decidable_sl}
Berdine, J., Calcagno, C., O'Hearn, P.W.: {A Decidable Fragment of Separation Logic}. In: FSTTCS 2004. LNCS, vol.~3328 (2004)

\bibitem{benchexec}
Beyer, D., L{\"o}we, S., Wendler, P.: {Reliable Benchmarking: Requirements and Solutions}. International Journal on Software Tools for Technology Transfer  \textbf{21} (2017)

\bibitem{Brotherston2012AGC}
Brotherston, J., Gorogiannis, N., Petersen, R.L.: {A Generic Cyclic Theorem Prover}. In: APLAS. LNCS, vol.~7705 (2012)

\bibitem{Dino-BiAbd11}
Calcagno, C., Distefano, D., O'Hearn, P., Yang, H.: {Compositional Shape Analysis by Means of Bi-Abduction}. Journal of the ACM  \textbf{58}(6) (2011)

\bibitem{CC_results}
Calcagno, C., Yang, H., O'Hearn, P.W.: {Computability and Complexity Results for a Spatial Assertion Language for Data Structures}. In: FST TCS (2001)

\bibitem{sl_graphs}
Cook, B., Haase, C., Ouaknine, J., Parkinson, M., Worrell, J.: {Tractable Reasoning in a Fragment of Separation Logic}. In: CONCUR. LNCS, vol.~3901 (2011)

\bibitem{bsr_sl}
Echenim, M., Iosif, R., Peltier, N.: {The Bernays-Schönfinkel-Ramsey Class of Separation Logic with Uninterpreted Predicates}. ACM Transactions on Computational Logic  \textbf{21} (2019)

\bibitem{spen}
Enea, C., Leng\'{a}l, O., Sighireanu, M., Vojnar, T.: {Compositional Entailment Checking for a Fragment of Separation Logic}. In: APLAS (2014)

\bibitem{Broom}
Hol{\'{\i}}k, L., Peringer, P., Rogalewicz, A., \v{S}okov{\'{a}}, V., Vojnar, T., Zuleger, F.: Low-level bi-abduction. In: {ECOOP} 2022. LIPIcs, vol.~222, pp. 19:1--19:30 (2022)

\bibitem{SL_tree_automata}
Iosif, R., Rogalewicz, A., Vojnar, T.: {Deciding Entailments in Inductive Separation Logic with Tree Automata}. In: ATVA (2014)

\bibitem{conf/concur/IosifZ23}
Iosif, R., Zuleger, F.: Expressiveness results for an inductive logic of separated relations. In: P{\'{e}}rez, G.A., Raskin, J. (eds.) {CONCUR}. LIPIcs, vol.~279, pp. 20:1--20:20 (2023). \doi{10.4230/LIPICS.CONCUR.2023.20}, \url{https://doi.org/10.4230/LIPIcs.CONCUR.2023.20}

\bibitem{OHearn:BI:01}
Ishtiaq, S., O'Hearn, P.: {Separation and Information Hiding}. In: Proc. of POPL'01. ACM (2001)

\bibitem{sl_data}
Katelaan, J., Jovanovic, D., Weissenbacher, G.: {A Separation Logic with Data: Small Models and Automation}. In: IJCAR (2018)

\bibitem{harrsh}
Katelaan, J., Matheja, C., Noll, T., Zuleger, F.: {Harrsh: A Tool for Unied Reasoning about Symbolic-Heap Separation Logic}. In: LPAR-22 Workshop and Short Paper Proceedings. vol.~9 (2018)

\bibitem{SecOrderBiAbd14}
Le, Q.L., Gherghina, C., Qin, S., Chin, W.N.: {Shape Analysis via Second-Order Bi-Abduction}. In: Proc. of CAV'14. LNCS, vol.~8559. Springer (2014)

\bibitem{S2S}
Le, Q.L.: {Compositional Satisfiability Solving in Separation Logic}. In: VMCAI. LNCS, vol. 12597 (2021)

\bibitem{S2S2}
Le, Q.L., Le, X.B.D.: {An Efficient Cyclic Entailment Procedure in a Fragment of Separation Logic}. In: FoSSaCS (2023)

\bibitem{Guarded-SL}
Matheja, C., Pagel, J., Zuleger, F.: {A Decision Procedure for Guarded Separation Logic Complete Entailment Checking for Separation Logic with Inductive Definitions}. ACM Trans. Comput. Logic  \textbf{24}(1) (2023)

\bibitem{z3_sets}
de~Moura, L., Bjørner, N.: {Generalized, efficient array decision procedures}. In: FMCAD (2009)

\bibitem{SL-superposition}
Navarro~P\'{e}rez, J.A., Rybalchenko, A.: {Separation Logic + Superposition Calculus = Heap Theorem Prover}. In: PLDI (2011)

\bibitem{sl_mod_theories}
Navarro~Pérez, J.A., Rybalchenko, A.: {Separation Logic Modulo Theories}. In: APLAS. LNCS, vol.~8301 (2013)

\bibitem{bitwuzla}
Niemetz, A., Preiner, M.: Bitwuzla. In: CAV. LNCS, vol. 13965 (2023)

\bibitem{strong-sl}
Pagel, J., Zuleger, F.: Strong-separation logic. {ACM} Trans. Program. Lang. Syst.  \textbf{44}(3),  16:1--16:40 (2022). \doi{10.1145/3498847}, \url{https://doi.org/10.1145/3498847}

\bibitem{automating_sl}
Piskac, R., Wies, T., Zufferey, D.: {Automating Separation Logic Using SMT}. In: CAV (2013)

\bibitem{automating_sl_trees}
Piskac, R., Wies, T., Zufferey, D.: {Automating Separation Logic with Trees and Data}. In: CAV (2014)

\bibitem{cvc_sl}
Reynolds, A., Iosif, R., King, T.: {A Decision Procedure for Separation Logic in SMT}. In: ATVA (2016)

\bibitem{SL}
Reynolds, J.: {Separation Logic: A Logic for Shared Mutable Data Structures}. In: Proceedings 17th Annual IEEE Symposium on Logic in Computer Science (2002)

\bibitem{GilPartI20}
Santos, J., Maksimovic, P., Ayoun, S.E., Gardner, P.: {Gillian, Part I: A Multi-Language Platform for Symbolic Execution}. In: Proc. of PLDI'20. ACM (2020)

\bibitem{DBLP:journals/sttt/SummersM20}
Summers, A.J., M{\"{u}}ller, P.: Automating deductive verification for weak-memory programs (extended version). Int. J. Softw. Tools Technol. Transf.  \textbf{22}(6),  709--728 (2020). \doi{10.1007/s10009-020-00559-y}

\bibitem{10.1145/3158097}
Ta, Q.T., Le, T.C., Khoo, S.C., Chin, W.N.: {Automated Lemma Synthesis in Symbolic-Heap Separation Logic}. In: POPL (2018)

\bibitem{InvaderCAV08}
Yang, H., Lee, O., Berdine, J., Calcagno, C., Cook, B., Distefano, D., O'Hearn, P.: {Scalable Shape Analysis for Systems Code}. In: Proc. of CAV'08. LNCS, vol.~5123. Springer (2008)

\end{thebibliography}


\iftechreport
    \newpage
    \appendix
    \section{Proof of Small-Model Property}
\enlargethispage{6mm}

This section presents proofs omitted in Section \ref{section:small-models}. We first introduce some additional notation. If a stack $s$ is clear from the context, we call a heap $h$ positive (atomic) if $(s,h)$ is positive (atomic). We say that a model $(s,h)$ is a canonical model of a formula~$\phi$ if $(s,h) \models \phi$ and $\reduction^X\!(s,h) = (s,h)$ for $X = \vars(\phi)$.

\begin{proof}[Lemma \ref{lemma:atom-classification}]
Let $(s,h)$ be an atomic model with $(s,h) \models \phi$. Consequently $h \neq \emptyset$ by the definition of atomicity. We proceed by structural induction on $\phi$. Cases of pure formulae cannot happen since $h \neq \emptyset$. Cases of pointer assertions are trivial as they always meet precisely the condition (1). Cases of inductive predicates are as follows:

\begin{itemize}
    \item If $(s,h) \models \ls(x,y)$, then $h$ is either a single pointer $x \mapsto y$ that satisfies~(1), or otherwise it satisfies (2).
    \item If $(s,h) \models \dlspred$, then $h$ is either a single pointer $x \mapsto \cdls{y}{y'}$ that satisfies~(1), or otherwise it satisfies (3). 
        The case when $h$ consists of two pointers $x \mapsto \cdls{x'}{y'}$ and $x' \mapsto \cdls{y}{x}$ cannot happen because it contradicts the assumption that $(s,h)$ is atomic.
    \item If $(s,h) \models \nls(x,y,z)$, then $h$ is either a single pointer $x \mapsto \cnls{z}{y}$ which satisfies~(1), or otherwise it satisfies (4).
\end{itemize}

If $(s, h) \models \psi_1 \star \psi_2$, then there exist disjoint heaps $h_1$ and $h_2$ with $h = h_1 \uplus h_2 \neq \bot$ satisfying $\psi_1$ and $\psi_2$, respectively. Since $(s,h)$ is atomic, exactly one of the heaps $h_1$ and $h_2$ needs to be empty. Assume w.l.o.g. that $h_1 = \emptyset$ and $h = h_2$. Hence $(s, h) \models \psi_2$ and we obtain the claim using IH. The cases of boolean connectives follow directly from IH. \qed
\end{proof}

To prove Lemma \ref{lemma:chunk-decomposition}, we introduce a definition of \textit{split point}. Intuitively, a split point is a location that can be used to split a model of an inductive predicate into two non-empty models of the same predicate type. If a model contains a split point, then it is not atomic and consequently also not a chunk.

\begin{definition}[Split point]
Let $(s,h)$ be a model such that $h \neq \emptyset$ and $(s,h) \models \pi(x, y)$ for an inductive predicate $\pi(x, y)$. Let $\ell \in \dom(h)$ be a location such that $\ell~\in~\img(s)$. We say that $\ell$ is a \emph{split point} of $(s,h)$ if $s(x) \neq \ell$ and  there exist $x', y', z$ such that one of the following conditions holds:
\begin{itemize}
    \item $\pi \triangleq \lspred$,
    \item $\pi \triangleq \nlspred$ and $h(\ell, \t) \neq \bot$,
    \item $\pi \triangleq \dlspred$ and $h(\ell, \p) \in \img(s)$.
\end{itemize}
\end{definition}

\begin{lemma}
Let $(s,h)$ be a model such that $(s,h) \models \pi$ for some inductive predicate~$\pi$. If $(s,h)$ contains a split point, then it is not atomic.
\end{lemma}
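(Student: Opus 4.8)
The plan is to show that a split point $\ell$ of $(s,h) \models \pi$ induces a decomposition $h = h_1 \uplus h_2$ into two \emph{nonempty} positive sub-heaps, which by definition contradicts atomicity. The key idea is that $\ell$ being a split point means the list (or DLL/NLL) segment ``passes through'' $\ell$ at a stack-named location, so we can cut the segment there: the prefix from $x$ to $\ell$ and the suffix from $\ell$ to $y$ are each again models of the appropriate inductive predicate (with a fresh sink/root at $\ell$), hence positive. First I would unfold the semantics of $\pi$ from Fig.~\ref{fig:sl_semantics} to extract the underlying $\n$-path (and for $\nls$, the $\t$-path; for $\dls$, the $\n$-path together with its $\p$-back-links) witnessing $(s,h) \models \pi$; since $\ell \in \dom(h)$ and $\ell \neq s(x)$, the location $\ell$ lies strictly inside this path.

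Next I would do the case split on the form of $\pi$. For $\pi \triangleq \lspred$: split the $\n$-path at $\ell$, letting $h_1$ be the restriction of $h$ to the domain of the prefix path from $s(x)$ to $\ell$ and $h_2 = h|_{\dom(h)\setminus\dom(h_1)}$, which carries the suffix path from $\ell$ to $s(y)$. Acyclicity of the original path guarantees the two pieces are disjoint and that neither is empty ($h_1$ is nonempty because $\ell \neq s(x)$ so the prefix has length $\geq 1$; $h_2$ is nonempty because $\ell \in \dom(h)$ so $\ell$ itself is allocated and lies in $h_2$). Then $(s,h_1) \models \ls(x, v)$ and $(s,h_2) \models \ls(v, y)$ for a variable/value $v$ interpreted as $\ell$ — formally one uses the existential-quantifier semantics of $\ls$, so both are positive. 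For $\pi \triangleq \nlspred$ the extra hypothesis $h(\ell,\t) \neq \bot$ ensures $\ell$ is a \emph{top-level} node (not merely a node of some inner list), so the same prefix/suffix split of the $\t$-path works, with each inner list staying intact on one side; the disjointness of inner lists in the NLL semantics keeps the two halves well-formed. For $\pi \triangleq \dlspred$ the hypothesis $h(\ell,\p) \in \img(s)$ plays the analogous role: it guarantees the $\p$-predecessor of $\ell$ is a stack location, so the DLL boundary conditions ($\p$ of the new root, $\n$ of the new last element) can be met, and we split into $\dls(x,\ell,\ell^-,y')$ and $\dls(\ell,y,x',\ell^-)$ where $\ell^-$ names $h(\ell,\p)$.

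The main obstacle I anticipate is the $\dls$ case: unlike $\ls$, the doubly-linked predicate has four parameters and nontrivial boundary constraints on both ends, so I must check carefully that after cutting at $\ell$ the two sub-models genuinely satisfy $\dls$ with the right corner arguments — in particular that the $\p$-link of $\ell$ and the $\n$-link of the predecessor of $\ell$ are exactly what the semantics of the two smaller $\dls$ atoms demand, and this is precisely where the side condition $h(\ell,\p) \in \img(s)$ (rather than just $\ell \in \img(s)$) is needed. A secondary subtlety is that the ``variables'' $x', y', z$ furnished by the split-point definition are existentially quantified witnesses, so throughout I work at the level of \emph{values} (elements of $\Locs$) and invoke the existential-quantifier clauses of the inductive-predicate semantics to conclude positivity, rather than producing literal formula syntax. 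Once both halves are shown positive and nonempty, atomicity is contradicted and the lemma follows.
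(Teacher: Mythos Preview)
Your approach is essentially the paper's: cut the list at the split point $\ell$ and exhibit two nonempty positive sub-models, one for each half, using the appropriate predicate instance; your case analysis (including the role of $h(\ell,\t)\neq\bot$ for $\nls$ and $h(\ell,\p)\in\img(s)$ for $\dls$) matches the paper's.

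One clarification is worth making. You write that you will ``work at the level of values \ldots\ and invoke the existential-quantifier clauses of the inductive-predicate semantics to conclude positivity, rather than producing literal formula syntax.'' That is the wrong way around: positivity is defined as the existence of a \BSL formula $\phi$ with $(s,h)\models\phi$, and \BSL has no existential quantifier. The whole purpose of the requirement $\ell\in\img(s)$ in the definition of a split point (and of $h(\ell,\p)\in\img(s)$ in the $\dls$ case) is precisely to guarantee actual variables $p,p'$ with $s(p)=\ell$ and $s(p')=h(\ell,\p)$, so that you \emph{can} write literal formulae $\ls(x,p)$, $\ls(p,y)$, $\dls(x,p,p',y')$, $\dls(p,y,x',p')$, $\nls(x,p,z)$, $\nls(p,y,z)$ in the fragment. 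Once you use those variables, your nonemptiness arguments and the rest of the decomposition go through exactly as you outlined.
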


\begin{proof} Let $\ell$ be a split point of $(s,h)$ and let $p$ be some variable such that $s(p) = \ell$. We will show that $(s,h)$ is not atomic as it can be split into two non-empty positive sub-heaps satisfying formulae $\psi_1$ and~$\psi_2$ given as follows:

\begin{itemize}
    \item If $\pi \triangleq \lspred$, then $\psi_1 \triangleq \ls(x, p)$ and $\psi_2 \triangleq \ls(p, y)$.

    \item If $\pi \triangleq \nlspred$, then $\psi_1 \triangleq \nls(x, p, z)$ and $\psi_2 \triangleq \nls(p, y, z)$.

    \item If $\pi \triangleq \dlspred$, then $\psi_1 \triangleq \dls(x, p, p', y')$ and $\psi_2 \triangleq \dls(p, y, x', p')$, where~$p'$ is some variable such that $s(p') = h(\ell, \p)$ whose existence is guaranteed by the definition of split points.

\end{itemize}
In all the cases, non-emptiness of the model of $\psi_1$ follows from the fact that $s(x) \neq s(p)$ by the definition of split points; and non-emptiness of the model of $\psi_2$ follows from the fact that $s(y) \neq s(p)$ because $s(p) \in \dom(h)$, but $s(y) \not \in \dom(h)$. \qed
\end{proof}

\enlargethispage{10mm}
We prove Lemma \ref{lemma:chunk-decomposition} by proving the following stronger claim.

\begin{lemma}
\label{lemma:chunk-decomposition_aux}
Let $(s,h)$ be a positive model. The the following claims hold:
\begin{enumerate}
    \item $h = \biguplus \chunks(s,h)$
    \item If $h_1$ and $h_2$ are non-empty positive sub-heaps such that $h = h_1 \uplus h_2$, then $\chunks(s,h) = \chunks(s,h_1) \uplus \chunks(s,h_2)$.
\end{enumerate}
\end{lemma}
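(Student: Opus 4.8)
My plan is to isolate a single structural fact and derive both parts of Lemma~\ref{lemma:chunk-decomposition_aux} from it. Call $h = h_1 \uplus h_2$ an \emph{admissible cut} if $(s,h_1)$ and $(s,h_2)$ are both positive, and say it is \emph{non-spanning} if every atomic sub-heap $c \subseteq h$ satisfies $c \subseteq h_1$ or $c \subseteq h_2$. The crux is the lemma that every admissible cut is non-spanning; once this is available, the two claims are short. For claim~(2): if $c \in \chunks(s,h_1)$ then $c$ is an atomic sub-heap of $h$, and it is \emph{maximal} there, because any atomic $c'$ with $c \subseteq c' \subseteq h$ lies, by non-spanning, inside $h_1$ (it meets $\dom(h_1)$ through the non-empty $c$, so it cannot be contained in $h_2$), whence $c' = c$ by maximality of $c$ inside $h_1$; conversely any $c \in \chunks(s,h)$ lies in some $h_i$ by non-spanning and is trivially maximal there, and $\chunks(s,h_1), \chunks(s,h_2)$ are disjoint since their members are non-empty sub-heaps of the disjoint $h_1,h_2$. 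For claim~(1) I would induct on $|h|$: if $(s,h)$ is atomic then $\chunks(s,h)=\{h\}$ (immediate from maximality) and the identity is trivial; otherwise, by the very definition of atomicity, there is an admissible cut $h = h_1 \uplus h_2$ with $h_1,h_2$ both non-empty, hence $|h_i|<|h|$, and the induction hypothesis gives $h_i = \biguplus \chunks(s,h_i)$ while claim~(2) gives $\chunks(s,h) = \chunks(s,h_1) \uplus \chunks(s,h_2)$; combining these yields $h = h_1 \uplus h_2 = \biguplus \chunks(s,h)$.

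It remains to prove that admissible cuts are non-spanning. I would first use Lemma~\ref{lemma:atom-classification} together with the split-point lemma to constrain the shape of the atomic $c$: it is a single pointer (domain of size~$1$, hence trivially on one side of the cut), or an $\ls$/$\dls$/$\nls$-segment model in which no location that could serve as a split point is stack-pinned --- for $\ls$ this means only the root of $c$ may lie in $\img(s)$. Next I would use the routine ``flattening'' fact that a positive model is a disjoint union of pointer- and segment-shaped pieces (a structural induction on a witnessing formula, where the boolean cases simply pass to the satisfied operand) to write $h_1$, and symmetrically $h_2$, as such a union, and then trace the path(s) of $c$ through $h$ cell by cell. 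The root of $c$ is stack-pinned, so it is the entry point of exactly one piece $P$ of, say, $h_1$; since the relevant outgoing field of every cell is fixed by $h$, the trace of $c$ is forced to follow $P$, and if it tried to leave $h_1$ it would have to pass through a stack-pinned location in its interior, contradicting the shape established above. Hence $c \subseteq h_1$.

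The main obstacle is precisely this cell-by-cell tracing, and within it the $\nls$ case. There $c$ carries a $\t$-spine with an inner $\n$-list hanging off every spine node, all inner lists ending in the single stack-pinned location $s(z)$, while the $h_1$-pieces have the same two-level shape but with their own sinks, so one must argue simultaneously that $c$'s spine stays inside the spines of the pieces and that $c$'s inner lists stay inside their inner lists. This is delicate because an atomic $\nls$ model \emph{may} contain stack-pinned locations in the interior of its inner lists (these are not split points), so the clean slogan ``first pinned location reached $=$ endpoint'' used for $\ls$ and $\dls$ has to be refined. I expect this to be handled by exploiting field signatures --- spine cells carry a $\t$-field, inner cells do not, and $c$ and every piece are sub-heaps of the common $h$, hence agree on shared locations --- together with the common-sink constraint; the $\dls$ case is analogous to $\ls$, with the backward $\p$-edges introducing no new locations. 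A final, routine check is that the development is acyclic: flattening stands alone, non-spanning uses only flattening and the two cited lemmas, claim~(2) uses only non-spanning, and claim~(1) uses claim~(2) and a size induction.
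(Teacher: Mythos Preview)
Your key non-spanning lemma fails in the $\nls$ case, and with it both claims of the target lemma. Take variables $x,y$ of sort $\nlsSort$ and $z,w$ of sort $\lsSort$ with pairwise distinct stack images, two fresh locations $\ell_1,\ell_2 \notin \img(s)$, and the heap
\[
h \;=\; \bigl\{\, s(x) \mapsto \cnls{\ell_1}{\ell_2},\ \ell_2 \mapsto \cnls{s(z)}{s(y)},\ \ell_1 \mapsto s(w),\ s(w) \mapsto s(z),\ s(z) \mapsto s(w) \,\bigr\}.
\]
Then $(s,h) \models \nls(x,y,z) \star z \mapsto w$, so $(s,h)$ is positive. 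The sub-heap $c = h|_{\{s(x),\ell_1,\ell_2,s(w)\}}$ satisfies $\nls(x,y,z)$ and is atomic: the only split placing a stack-pinned location on each side isolates $s(w)$, but the remaining three cells model no predicate because the two inner paths they carry would need distinct sinks $s(w)$ and $s(z)$. Now cut $h_1 = h|_{\{s(x),\ell_1,\ell_2,s(z)\}}$ and $h_2 = h|_{\{s(w)\}}$; one checks $(s,h_1) \models \nls(x,y,w)$ (both inner paths now terminate at $s(w)$) and $(s,h_2) \models w \mapsto z$, so the cut is admissible --- yet $c$ meets both sides. In fact $(s,h_1)$ is itself atomic by the same argument, so $c$ and $h_1$ are two \emph{overlapping} maximal atomic sub-heaps of $h$; hence $\biguplus \chunks(s,h)$ is not even defined and claim~(1) fails too.

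This is exactly the obstacle you flagged but could not close: a stack-pinned location on an inner list of an atomic $\nls$-model is not a split point, and your hoped-for ``common-sink constraint'' is unavailable because the flattening piece of $h_1$ containing $s(x)$ is an $\nls$ with inner sink $w$, not $z$ --- the two traces diverge without ever forcing a split point of $c$. The paper's own proof shares the blind spot: when both $c$ and the chunk $c'$ of $h_1$ at the common root are $\nls$-chunks, its case analysis only contemplates $c = c'$, $c \subseteq c'$, or $c' \subseteq c$, and the incomparable case realised above is simply missing. So the lemma, as stated, appears to be false rather than merely hard to prove.
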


\begin{proof}
By induction on $|h|$.
If $|h| = 0$, then both claims hold as $\chunks(s,h) = \emptyset$ and $h$~cannot be split into non-empty sub-heaps.
If $|h| = n+1$ and $h$ is atomic, the claims also hold because $\chunks(s,h) = \{h\}$ and $h$ cannot be further decomposed into positive sub-heaps.
Otherwise, if $h$ is not atomic, then there exist non-empty positive heaps $h_1$ and $h_2$ such that $h = h_1 \uplus h_2$. Since $h_1$ is non-empty, we have that $|h_2| \leq n$ and vice versa. Therefore we can apply IH to conclude that \mbox{$h_i = \biguplus \chunks(s,h_i)$ for $i = 1, 2$.}

Since $h = h_1 \uplus h_2$, we directly have that $h = \biguplus \bigl( \chunks(s,h_1) \uplus \chunks(s,h_2) \bigr)$. It remains to show that $\chunks(s,h) = \chunks(s,h_1) \uplus \chunks(s,h_2)$ which directly implies both (1) and (2). Namely, we need to show that (a) there does not exist chunk $c$ of $(s,h)$ such that $c \not \in \chunks(s,h_i)$ for $i = 1, 2$ and (b) each element of $\chunks(s,h_i)$ is also member of $\chunks(s,h)$ for $i = 1, 2$.

\begin{itemize}
    \item[(a)] By contradiction. Assume that there exists such a chunk $c$. It holds that $c$ cannot be a pointer-chunk, because then it would be chunk in either $(s,h_1)$ or $(s,h_2)$. Therefore $c$ is a $\pi$-chunk for some inductive predicate $\pi(x, y)$. Since $c$ is non-empty, we can w.l.o.g. assume that the location $s(x)$ is allocated in $h_1$. As $h_1$ is fully decomposed into its chunks by IH, there exists a chunk $c' \in \chunks(s,h_1)$ with root $x'$ such that $s(x) \in \dom(c')$. We proceed by case distinction on types of $c$ and~$c'$:

    \begin{itemize}
        \item If $c'$ is a-pointer chunk satisfying $x' \mapsto y'$ for some $y'$ which is $n$, \mbox{$\cdls{n}{\wildcard}$}, or $\cnls{\wildcard}{t}$, then $n$ or $t$ is a split point of the chunk $c$ which is a contradiction.

        \item If $c'$ is $\pi(x', y')$-chunk for some $y'$ and $\pi \in \{\dls, \nls\}$, then $c$ also needs to be $\pi$-chunk. If $s(x) = s(x')$, then either $c = c'$ which is a contradiction, or $c \subseteq c'$ which means that $y$ is a split point of $c'$, or $c' \subseteq c$ which means that $y'$ is a split point of $c$---both possibilities lead to a contradiction. If $s(x) \neq s(x')$, then $s(x)$ is a split point of $c'$ which again leads to a contradiction.

        \item If both $c$ and $c'$ are $\ls$-chunks we can reach contradiction by the same reasoning as in the previous step.

        \item The only remaining case is when $c$ is $\ls$-chunk and $(s, c) \models \nls(x', y', z')$ for some $y'$ and $z'$. It holds that $c \subset c'$ because otherwise $s(z')$ would be a split point of $c$ or $c = c'$, both leading to a contradiction. From $c \subset c'$ and $c, c' \subseteq h$, we have that $c \not \in \chunks(s,h)$ as it is not maximal, which is a contradiction.
    \end{itemize}

    \item[(b)] By contradiction. Assume w.l.o.g. that there exists $c$ such that $c \in \chunks(s,h_1)$ and $c \not \in \chunks(s,h)$. Consequently, there must exists $c' \subseteq h$ such that $c \subset c'$. It holds that $c' \not \in \chunks(s,h_1)$ because otherwise $c$ would not be a chunk of $(s, h_1)$. As we have already shown in (a), an existence of such $c'$ leads to a contradiction. \qed
\end{itemize}
\end{proof}

\newpage
\enlargethispage{4mm}

To prove Theorem \ref{theorem:reduction}, we need to show that for arbitrary stack $s$, $\reduction^s$ is homomorphism on the separation algebra of all positive heaps (w.r.t. the stack $s$).

\begin{lemma}[$\reduction^s$ is homomorphism]
\label{lemma:reduction-homomorphism}
Let $s$ be a stack and let $h_1$ and $h_2$ be positive heaps. If $h_1$ and $h_2$ are disjoint, then the following claims hold:
\begin{enumerate}
    \item $\reduce{s}{h_1}$ and~$\reduce{s}{h_2}$ are disjoint,
    \item $\reduce{s}{h_1} \;\uplus \reduce{s}{h_2} =\;\reduce{s}({h_1 \uplus h_2})$.
\end{enumerate}
\end{lemma}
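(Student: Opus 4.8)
The plan is to reduce both parts to the chunk-decomposition result (Lemma~\ref{lemma:chunk-decomposition_aux}) together with one elementary observation about the reduction of a single chunk: it never introduces locations outside the chunk's own domain, i.e.\ $\dom(\reduce{s}{c}) \subseteq \dom(c)$ for every chunk $c$ of every positive model. This observation is also what makes the definition $\reducesh = \biguplus_{c \in \chunks(s,h)}\reduce{s}{c}$ well-formed, since the chunks of a model are pairwise disjoint by Lemma~\ref{lemma:chunk-decomposition}.

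First I would establish the observation by the case split of Lemma~\ref{lemma:atom-classification}. A chunk $c$ with root $x$ is either a single pointer, in which case $\reduce{s}{c} = c$ and the inclusion is an equality; or $(s,c) \models \lspp(x,y)$, or $(s,c) \models \dlsppp(x,y,x',y')$, or $(s,c) \models \nlspp(x,y,z)$. In the three non-pointer cases, inspecting the definition of $\reduce{s}{c}$ shows that its domain consists of $s(x)$, of $s(x')$ in the $\dls$ case, and of one additional intermediate location $\ell$ --- the $\n$-successor of $s(x)$ in $c$ for $\ls$ and $\dls$, and the $\t$-successor of $s(x)$ in $c$ for $\nls$. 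The key point is that $\ell \in \dom(c)$: using that $c$ is atomic and that $\lspp$, $\dlsppp$, $\nlspp$ exclude the smallest atomic models, the relevant segment of $c$ is long enough (at least two allocated nodes for $\ls$ and $\nls$, at least three for $\dls$, where we also use $x \neq x'$), so the successor of $s(x)$ is an \emph{interior}, hence allocated, node of the segment. Acyclicity of segments additionally gives $\ell \neq s(x)$, $\ell$ distinct from the sink, and, for $\dls$, $\ell \neq s(x')$, so the displayed set notation for $\reduce{s}{c}$ really denotes a partial function. Summing over chunks yields $\dom(\reduce{s}{h}) \subseteq \dom(h)$ for every positive heap $h$.

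Part~(1) is then immediate: if $h_1$ and $h_2$ are disjoint, then $\dom(\reduce{s}{h_i}) \subseteq \dom(h_i)$ forces $\dom(\reduce{s}{h_1}) \cap \dom(\reduce{s}{h_2}) = \emptyset$. For part~(2), if one of $h_1, h_2$ is empty the claim holds trivially because $\reduce{s}{\emptyset} = \emptyset$ (there are no chunks); otherwise $h_1$ and $h_2$ are non-empty positive sub-heaps of the positive heap $h_1 \uplus h_2$, so Lemma~\ref{lemma:chunk-decomposition_aux}(2) gives $\chunks(s, h_1 \uplus h_2) = \chunks(s,h_1) \uplus \chunks(s,h_2)$. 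Unfolding the definition of $\reduce{s}{(h_1 \uplus h_2)}$ and distributing the big disjoint union along this partition of the chunk set --- which is legitimate precisely because, by the observation and Lemma~\ref{lemma:chunk-decomposition}, all the reduced chunks occurring have pairwise disjoint domains --- yields $\reduce{s}{(h_1 \uplus h_2)} = \reduce{s}{h_1} \uplus \reduce{s}{h_2}$.

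The only genuinely non-routine step is the observation in the second paragraph: one has to check, separately for $\ls$, $\dls$, and $\nls$, that the location inserted by the reduction is an already-allocated node of the chunk and is distinct from $s(x)$, from the sink, and (for $\dls$) from $s(x')$. This is exactly where the minimal-atomic-model sizes encoded by $\lspp$, $\dlsppp$, $\nlspp$ --- together with atomicity of chunks --- enter the proof; everything else is a formal manipulation of the chunk decomposition.
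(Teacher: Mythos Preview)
Your proof is correct and follows essentially the same approach as the paper: both use the domain inclusion $\dom(\reduce{s}{h}) \subseteq \dom(h)$ for part~(1), and Lemma~\ref{lemma:chunk-decomposition_aux}(2) to split the chunk set and distribute the disjoint union for part~(2). You are more explicit than the paper in justifying $\dom(\reduce{s}{c}) \subseteq \dom(c)$ via the case analysis of Lemma~\ref{lemma:atom-classification} (the paper simply says ``directly from the definition of $\reduction^s$''), and you handle the empty-heap edge case separately, which the paper glosses over since Lemma~\ref{lemma:chunk-decomposition_aux}(2) is stated only for non-empty sub-heaps.
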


\begin{proof} $\;$
\begin{enumerate}
    \item Directly from the definition of $\reduction^s$, we have that $\dom(\reduce{s}{h_i}) \subseteq \dom(h_i)$ for $i = 1, 2$. Hence, if $\dom(h_1) \cap \dom(h_2) = \emptyset$, then also $\dom(\reduce{s}{h_1}) \cap \dom(\reduce{s}{h_2)} = \emptyset$.
    \vspace{1mm}

    \item Let $\chunks(s, h_1) = \{c_{1,1}, \ldots, c_{1,m}\}$ and $\chunks(s, h_2) = \{c_{2,1}, \ldots, c_{2,n}\}$.
    \begin{align*}
        &\reduce{s}{h_1} \;\uplus\; \reduce{s}{h_2}\\
            & =\; \reduce{s}{(c_{1,1} \uplus \cdots \uplus c_{1,m})} \;\uplus\; \reduce{s}{(c_{2,1} \uplus \cdots \uplus c_{2,n})}
            && [\text{Lemma \ref{lemma:chunk-decomposition_aux}}]\\
            & =\; \reduce{s}{c_{1,1}} \uplus \cdots \uplus \reduce{s}{c_{1,m}} \;\uplus\; \reduce{s}{c_{2,1}} \uplus \cdots \uplus \reduce{s}{c_{2,n}}
            && [\text{Definition of $\reduction^s$}]\\
            & =\; \reduce{s}{(c_{1,1} \uplus \cdots \uplus c_{1,m} \;\uplus\; c_{2,1} \uplus \cdots \uplus c_{2,n})}
            && [\text{Definition of $\reduction^s$, Lemma \ref{lemma:chunk-decomposition_aux}}]\\
            & =\; \reduce{s}{(h_1 \uplus h_2)}
            && [\text{Lemma \ref{lemma:chunk-decomposition_aux}}]
    \end{align*}
    \noindent Notice that for the third equality to hold, we need Lemma \ref{lemma:chunk-decomposition_aux} which guarantees that $\chunks(s,h_1 \uplus h_2) = \chunks(s,h_1) \uplus \chunks(s,h_2)$. \qed
\end{enumerate}
\end{proof}

\begin{corollary}
\label{corollary:stack-restriction}
For a positive model $(s,h)$, it holds that $(s,h) \models \phi$ iff $(s|_{\vars(\phi)}, h) \models \phi$.
\end{corollary}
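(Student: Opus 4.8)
The plan is to derive this corollary from a standard \emph{coincidence} property: the truth of a formula $\phi$ in a model depends only on how the stack evaluates the variables in $\vars(\phi)$. Concretely, I would first prove that for all formulae $\phi$ and all stacks $s_1, s_2$ that agree on every $x \in \vars(\phi)$, we have $(s_1, h) \models \phi$ if and only if $(s_2, h) \models \phi$. The corollary is then the instance $s_2 = s|_{\vars(\phi)}$; note that $s|_{\vars(\phi)}$ is a legal stack because $\nil \in \vars(\phi)$ always, so $(s|_{\vars(\phi)})(\nil) = s(\nil) \neq \bot$ and $h((s|_{\vars(\phi)})(\nil)) = h(s(\nil)) = \bot$, which matches the requirements on stack-heap models.

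Next I would prove the coincidence property by induction on the structure of $\phi$, nesting, inside the cases of the inductive predicates, a secondary induction on $|h|$. The cases of pure atoms $x \bowtie y$ and points-to atoms $x \mapsto \langle \f_i : f_i \rangle_{i \in I}$ are immediate, since the corresponding clauses of Fig.~\ref{fig:sl_semantics} mention the stack only on variables of the atom. For $\psi_1 \bowtie \psi_2$ with $\bowtie \in \{\land, \lor, \gneg\}$ and for $\psi_1 \star \psi_2$, I would use that $\vars(\psi_j) \subseteq \vars(\phi)$ (recalling that $\nil$ lies in every $\vars(\cdot)$), so $s_1$ and $s_2$ still agree on the variables of each operand and the induction hypothesis applies; for $\star$ the decomposition $h = h_1 \uplus h_2$ is quantified over heaps only and is therefore unaffected by the change of stack.

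The one delicate point — and the step I expect to be the main obstacle — is the inductive predicates $\ls$, $\dls$, $\nls$, whose semantics is given by a recursive clause that unfolds through an auxiliary existential quantifier over fresh variables. Here I would use the secondary induction on $|h|$: unfolding $\pi(x,y)$ yields either a pure atom (already handled) or a points-to atom rooted at $x$ together with one or two residual list-segment atoms, each placed on a sub-heap that is strictly smaller than $h$ because the cell at the root has been consumed. For $\lspred$, for instance, one obtains $\ell$ with $h = \{s_1(x) \mapsto \ell\} \uplus h'$, $s_1(x) \neq s_1(y)$, and $(s_1[n \mapsto \ell], h') \models \ls(n, y)$; since $n$ is fresh, $s_1[n \mapsto \ell]$ and $s_2[n \mapsto \ell]$ agree on $\{x, y, n, \nil\} \supseteq \vars(\ls(n,y))$, so the secondary induction hypothesis transfers satisfaction to $(s_2[n \mapsto \ell], h')$, and hence $(s_2, h) \models \lspred$; the converse is symmetric, and $\dls$, $\nls$ are treated the same way, peeling off the root, renaming the fresh existentials, and applying the secondary hypothesis to each residual segment. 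The whole argument goes through once the well-foundedness of the lexicographic measure (structure of $\phi$, then $|h|$) is observed; everything else is routine bookkeeping.
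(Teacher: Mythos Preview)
Your proposal is correct and takes essentially the same approach as the paper: the paper's proof is the one-line remark ``by straightforward structural induction on $\phi$ using the fact that whether $(s,h) \models \phi$ holds does not depend on variables outside of $\vars(\phi)$,'' and your coincidence lemma is exactly this fact, spelled out in full. Your extra care with the inductive predicates (the secondary induction on $|h|$ to handle the fresh existentials introduced by unfolding) is the right way to make the ``straightforward'' induction actually go through, and the paper simply elides this detail.
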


\begin{proof}
By straightforward structural induction on $\phi$ using the fact that whether it holds that $(s,h)~\models~\phi$ does not depend on variables outside of $\vars(\phi)$.
\end{proof}

\begin{proof}[Theorem \ref{theorem:reduction}]
Let $s' = s|_{\vars(\phi)}$ and $h' = \reduce{s'}{h}$. From Corollary $\ref{corollary:stack-restriction}$ we have that $(s, h) \models \phi$ iff $(s', h) \models \phi$. It remains to show that $(s', h) \models \phi$ iff $(s', h') \models \phi$. We proceed by structural induction on $\phi$.
\begin{itemize}
    \item \textit{Base cases.} If $\phi$ is a pure formula or a pointer assertion, then $h$ is empty or a singleton set, respectively, and the claim holds as in both cases $h' = h$. If $\phi$ is an inductive predicate $\pi(x, y)$, then both $(s', h)$ and $(s', h')$ consist of a single chunk because their only allocated variable is $x$ and they therefore cannot be split into two non-empty positive models (one of splits would need to contain an allocated root different from $x$). The implication $(\Rightarrow)$ then holds trivially from the definition of the reduction. The implication $(\Leftarrow)$ is proved by rule inversion on the definition of the reduction---if $h'$ is $\pi$-chunk, then $h$ was also a $\pi$-chunk satisfying the same predicate $\pi$.

    \item \textit{Induction steps.} The cases of boolean connectives follow directly from IH. The case of separating conjunction is more involved:

    \begin{itemize}[leftmargin=2.5em,labelindent=1em]
        \item[$(\Rightarrow)$]
        By the assumption, there exist disjoint heaps $h_1$, $h_2$ such that $h = h_1 \uplus h_2$ and $(s',h_i) \models \psi_i$ for $i=1,2$. From IH, we have that $(s', \reduce{s'}{h_i}) \models \psi_i$ for $i = 1,2$ and from Lemma \ref{lemma:reduction-homomorphism}, we have that $\reduce{s'}{h_1}$ and $\reduce{s'}{h_2}$ are also disjoint. Hence, $(s', \reduce{s'}{h_1} \uplus \reduce{s'}{h_2}) \models \phi$. From Lemma \ref{lemma:reduction-homomorphism}, we also have that $\reduce{s'}{h_1} \uplus \reduce{s'}{h_2} = \reduce{s'}{(h_1 \uplus h_2)}$. Thus, $(s', h') \models \phi$.

        \item[$(\Leftarrow)$]
        By the assumption, there exist disjoint heaps $h_1$, $h_2$ such that $\reduce{s'}{h} = h_1 \uplus h_2$ and $(s',h_i) \models \psi_i$ for $i=1,2$. Let $\H_i = \chunks(s',h_i)$ for $i=1,2$ and let $\H = \chunks(s',h)$. Since $\reduce{s'}{h} = h_1 \uplus h_2$, for each chunk $c \in \H_1 \cup \H_2$ there exists a unique chunk $c' \in \H$ such that $\reduce{s'}{c'} = c$. Let $h'_1$ and $h'_2$ be defined as $h'_i = \biguplus \{c' \;|\; c \in \chunks(s,h_i)\}$. In other words, we compose $h'_i$ from chunks of $h$ that were reduced to chunks of $h_i$. By construction, it holds that $\reduce{s'}{h'_i} = h_i$ and therefore by IH $(s',h'_i) \models \psi_i$. It also holds that $h'_1 \uplus h'_2 = h \neq \bot$. Thus $(s',h) \models \psi_1 \star \psi_2$. \qed
        \end{itemize}
\end{itemize}
\end{proof}

We split the proof of Theorem \ref{theorem:small-models}, into several lemmas giving bounds on components of canonical models.



\begin{lemma}[Bound on the number of allocated locations]
\label{lemma:domain-bound}
Let $\phi$ be a formula and let $(s,h)$ be its canonical model. Then, $|\dom(h)| \leq \floor{\sum_{x \in \vars(\phi)} \chunksize{x}}$.
\end{lemma}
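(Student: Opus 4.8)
The plan is to combine the chunk decomposition with the fact that, in a canonical model, every chunk already equals its own reduction, so that each chunk has a small constant size determined solely by its type; one then "charges" these sizes to a few variables of $\vars(\phi)$, each variable being charged at most $\chunksize{\cdot}$ and by at most one chunk. Concretely, by Lemma~\ref{lemma:chunk-decomposition} we have $h=\biguplus\chunks(s,h)$, hence $|\dom(h)| = \sum_{c \in \chunks(s,h)} |\dom(c)|$. Fix $c \in \chunks(s,h)$; since $(s,h)$ is canonical, $c = \reduce{s}{c}$, and by Lemma~\ref{lemma:atom-classification} together with the definition of the reduction, $c$ has exactly one of the following shapes: a single pointer, so $|\dom(c)| = 1$; the reduced form of an $\ls$-atom $\lspp(x,y)$, namely $\{s(x) \mapsto \ell,\ \ell \mapsto s(y)\}$ with $\ell = c(s(x),\n)$, so $|\dom(c)| = 2$ (the two locations being distinct by acyclicity of $\ls$); the reduced form of a $\dls$-atom $\dlsppp(x,y,x',y')$, so $|\dom(c)| = 3$ (the three locations $s(x), \ell, s(x')$ being distinct using $x \neq x'$ from $\dlspp$ and acyclicity); or the reduced form of an $\nls$-atom $\nlspp(x,y,z)$, namely $\{s(x) \mapsto \langle \t:\ell,\ \n:s(z)\rangle,\ \ell \mapsto \langle \t:s(y),\ \n:s(z)\rangle\}$ with $\ell = c(s(x),\t)$, so $|\dom(c)| = 2$.

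Next I would define a charging map $A(\cdot)$ sending each chunk to a set of variables: for a pointer-, $\ls$-, or $\nls$-chunk, $A(c) = \{x_c\}$ where $s(x_c)$ is the root of $c$; for a $\dls$-chunk $\dlsppp(x,y,x',y')$, $A(c) = \{x_c, x'_c\}$ where $s(x_c)$ and $s(x'_c)$ are the root and the location $s(x')$, which are carried by two distinct variables since $s(x) \neq s(x')$. For $\ls$-, $\dls$-, and $\nls$-chunks the existence of these variables is immediate from the definition of the reduction, which writes reduced chunks literally in terms of stack images of variables of the corresponding sort; the pointer-chunk case needs the sub-claim below. Two observations then finish the count. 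First, distinct chunks have disjoint domains, so the sets $A(c)$ are pairwise disjoint (if $x \in A(c_1) \cap A(c_2)$ then $s(x) \in \dom(c_1) \cap \dom(c_2) = \emptyset$). Second, $|\dom(c)| \le \sum_{x \in A(c)} \chunksize{x}$ in every case: the root of an $\ls$- (resp.\ $\nls$-) chunk has sort $\lsSort$ (resp.\ $\nlsSort$) by the syntax of the predicates, so $\chunksize{x_c} = 2 = |\dom(c)|$; both variables of a $\dls$-chunk have sort $\dlsSort$, so $\sum_{x \in A(c)}\chunksize{x} = 1.5 + 1.5 = 3 = |\dom(c)|$; and for a pointer-chunk $|\dom(c)| = 1 \le \chunksize{x_c}$ because $x_c \neq \nil$ (as $\nil$ is never allocated) and hence $\chunksize{x_c} \ge 1.5$. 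Summing over all chunks, $|\dom(h)| = \sum_c |\dom(c)| \le \sum_c \sum_{x \in A(c)} \chunksize{x} \le \sum_{x \in \vars(\phi)} \chunksize{x}$, and since $|\dom(h)|$ is an integer we conclude $|\dom(h)| \le \floor{\sum_{x \in \vars(\phi)} \chunksize{x}}$.

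The only gap, and the step I expect to be the main obstacle, is the sub-claim used above: in a canonical model the root of every chunk is $s(x)$ for some $x \in \vars(\phi)$. I would prove this by a short structural induction on a formula $\psi$ witnessing positivity of a non-empty $(s,c)$: a pure formula forces $\dom(c) = \emptyset$, which is impossible; a points-to or inductive-predicate atom directly forces its root to be the stack image of a variable; for $\psi_1 \star \psi_2$ exactly one operand is non-empty and has the same root, so we recurse on it; for the boolean connectives we recurse immediately. Since in a canonical model $\dom(s) = \vars(\phi)$, the witnessing variable lies in $\vars(\phi)$. (This argument is essentially already contained in the proof of Lemma~\ref{lemma:atom-classification}, so it adds little overhead.) Everything else is bookkeeping on the four chunk shapes, and no computation beyond the trivial arithmetic above is required.
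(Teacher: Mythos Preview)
Your proof is correct and follows essentially the same approach as the paper: decompose into chunks (Lemma~\ref{lemma:chunk-decomposition}), bound each reduced chunk's size by its type, and charge these sizes to distinct allocated variables of the appropriate sort in $\vars(\phi)$. Your explicit charging map $A(c)$ and the sub-claim that every chunk's root is a stack image of some $x \in \vars(\phi)$ simply make precise what the paper states informally (``each chunk of a sort $S$ must contain an allocated root variable, the number of $S$-chunks is bounded by the number of variables of the sort $S$'').
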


\begin{proof}
From Lemma \ref{lemma:chunk-decomposition}, we have that $h$ is fully decomposed into its chunks. Since each chunk of a sort $S$ must contain an allocated root variable, the number of $S$-chunks is bounded by the number of variables of the sort $S$. We will now discuss the upper bounds on sizes of reduced chunks. A pointer-chunk always has the size~$1$. The size of a chunk of a sort $S \in \{\lsSort, \nlsSort\}$ is $2$ by the definition of the reduction. The size of a reduced chunk of a sort~$\dlsSort$ is $3$, but such a chunk needs to contain two allocated variables ($\dlspred$-chunk allocates both $x$ and $x'$). As the result, each variable of the sort $\dlsSort$ adds 1.5 to the total bound. 

In the worst case, almost are chunks are proper predicate chunks. The only exception is the case when the number of DLL variables is odd. In this case, the last ``unpaired" variable can create just a pointer-chunk, not a proper $\dls$-chunk. This is the reason why the sum is rounded down. \qed
\end{proof}

In the next step, we show that positive models do not have any unlabelled dangling location (a location $\ell$ is dangling if $\ell \in \img(h)$, but $\ell \not\in \dom(h)$).

\begin{lemma}
\label{lemma:dangling-labelled}
For a positive model $(s,h)$, it holds that $\img(h) \setminus \dom(h) \subseteq \img(s)$.
\end{lemma}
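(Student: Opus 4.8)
The plan is to use the hypothesis that $(s,h)$ is positive to fix a formula $\phi$ with $(s,h)\models\phi$ and then proceed by structural induction on $\phi$. The easy cases I would dispatch first. If $\phi$ is a pure formula, then $h=\emptyset$ and $\img(h)=\emptyset$, so the claim is vacuous; if $\phi$ is a points-to predicate $x\mapsto\langle \f_i:f_i\rangle_{i\in I}$, then $\img(h)=\{s(f_i)\mid i\in I\}\subseteq\img(s)$ directly. For the boolean connectives $\land,\lor,\gneg$, in each case $(s,h)$ satisfies one of the two operands, so the induction hypothesis applies to that operand. For $\psi_1\star\psi_2$ with $h=h_1\uplus h_2$ and $(s,h_i)\models\psi_i$: any $\ell\in\img(h)\setminus\dom(h)$ lies in some $\img(h_i)$, and since $\dom(h)=\dom(h_1)\cup\dom(h_2)$ it lies in neither $\dom(h_1)$ nor $\dom(h_2)$; hence $\ell\in\img(h_i)\setminus\dom(h_i)\subseteq\img(s)$ by the induction hypothesis.

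The substance of the argument is in the three inductive-predicate cases, which I would prove by a secondary induction on $|h|$ using the recursive clause of the semantics in Fig.~\ref{fig:sl_semantics}, establishing $\ls$ first, then $\dls$, then $\nls$ (whose definition refers to $\ls$, so its case reuses the one already proved). Take $\ls(x,y)$: if $h=\emptyset$ there is nothing to show; otherwise $h=\{s(x)\mapsto\ell\}\uplus h'$ where $\ell$ is the witness location for the existential and $(s[n\mapsto\ell],h')\models\ls(n,y)$ with $|h'|<|h|$. Since $\img(h)=\{\ell\}\cup\img(h')$ and $\dom(h)=\{s(x)\}\cup\dom(h')$, for $m\in\img(h)\setminus\dom(h)$ with $m\in\img(h')$ the secondary induction hypothesis gives $m\in\img(s[n\mapsto\ell])=\img(s)\cup\{\ell\}$, so only the case $m=\ell\notin\dom(h)$ remains. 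Here I would invoke the key observation, immediate from the recursive clause, that the root of a \emph{non-empty} segment is always allocated: since $\ell\notin\dom(h')$, the segment $(s[n\mapsto\ell],h')\models\ls(n,y)$ must be empty, forcing $\ell=s(y)\in\img(s)$. The $\dls$ and $\nls$ cases run along the same lines; the witness location is handled exactly as above, and the only further members of $\img(h)$ that could escape $\dom(h)$ are the fixed parameters $s(y')$ (the $\p$-predecessor of the first node) for $\dls$ and $s(z)$ (the common sink of all inner lists) for $\nls$, both trivially in $\img(s)$.

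I expect the only delicate point to be precisely this treatment of the existentially quantified intermediate location in the predicate cases: a priori it need not be named by the stack, and the proof that it is never dangling rests entirely on the fact that a non-empty segment allocates its root. A minor structural subtlety is that, because $\nls$ is specified through $\ls$, a naive syntactic structural induction does not close the $\nls$ case on its own; I would therefore phrase the three predicate cases either as separate auxiliary lemmas proved in the order $\ls,\dls,\nls$, or as a single combined induction ordered by $|h|$.
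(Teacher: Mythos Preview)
Your proof is correct and follows the same overall structure as the paper's: structural induction on a witnessing formula~$\phi$. The paper treats the inductive-predicate cases as one-line base cases, simply asserting that the dangling locations of $\ls(x,y)$, $\dls(x,y,x',y')$, $\nls(x,y,z)$ are exactly $\{s(y)\}$, $\{s(y),s(y')\}$, $\{s(y),s(z)\}$ respectively; your secondary induction on $|h|$ (with the ``non-empty segment allocates its root'' observation to dispose of the existential witness) supplies precisely the argument the paper elides.
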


\begin{proof}
Since $(s,h)$ is positive, there is a formula $\phi$ such that $(s,h) \models \phi$. The claim is proved by structural induction on $\phi$. The base cases are trivial as the dangling locations of $x \mapsto \langle \f_i : y_i\rangle_{i \in I}$, $\lspred$, $\dlspred$ and $\nlspred$ are precisely $\{y_i\}_{i \in I}$, $\{y\}$, $\{y, y'\}$ and $\{y, z\}$, respectively. The induction step follows directly from IH in all the cases, for the case of separating conjunction this is because a composition of two heaps cannot introduce a new dangling location. \qed
\end{proof}

\begin{proof}[Theorem \ref{theorem:small-models}] Let $\phi$ be a satisfiable formula. From Theorem~\ref{theorem:reduction}, we know that there exists a canonical model $(s,h) \models \phi$. Recall that the set of locations in $(s,h)$ is defined as $\locations(s,h) = \dom(h) \cup \img(h) \cup \img(s)$, which can be simplified to $\locations(s,h) = \dom(h) \cup \img(s)$ using Lemma \ref{lemma:dangling-labelled}.

From Lemma~\ref{lemma:domain-bound}, we already have that $|\dom(h)| \leq \floor{\sum_{x \in \vars(\phi)} \chunksize{x}}$. When computing this bound, we have already assigned at least one location to every variable and thus no additional locations are needed for $\img(s)$, except one for $\nil$ which may not appear in $\phi$, but we require $s(\nil)$ to be present in every stack-heap model. This gives us the bound $|\locations(s, h)| \leq 1 + \floor{\sum_{x \in \vars(\phi)} \chunksize{x}}$. \qed
\end{proof}

    \newpage
\section{Translation Correctness}

This section presents proofs omitted in Section \ref{section:translation}. First, we develop the concept of correspondence of stack-heap and first-order models and its properties in Section \ref{appendix:model-correspondence} and then we present correctness proofs in Section \ref{appendix:translation-correctness}. For this section, we fix a separation logic formula~$\phi$.

\subsection{Model Correspondence}
\label{appendix:model-correspondence}
We define the translation's signature \mbox{$\Sigma_\phi =\{(x)_{x \in \vars(\phi)}, (h_\f)_{\f \in \Fields}, (D_S)_{S \in \Sort}, D\}$}. We call a first-order model over $\Sigma_\phi$ a \textit{model of SMT encoding} (SMT model for short) if $\M \models \mathcal{A}_\phi$ and \textit{canonical} if for all $\ell \in \mathsf{L} \setminus D^\M$ and all $\f \in \Fields$, it holds that $h_\f[\ell]^\M = \mathsf{loc}^\nil$. We call models $(s,h)$ and $\M$ \textit{corresponding} if $(s,h) = \itranslate{\phi}{\M}$. 

\begin{corollary}[Unique corresponding models]
For a stack-heap model $(s, h)$, there is the unique canonical corresponding SMT model $\M$, and vice versa.
\end{corollary}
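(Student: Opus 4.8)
The plan is to prove the corollary by exhibiting, for each stack-heap model of the appropriate form, an explicit $\Sigma_\phi$-structure and checking that this assignment is a two-sided inverse of $\itranslate{\phi}{\cdot}$ restricted to canonical SMT models; the whole argument is a careful unfolding of $\mathcal{A}_\phi$, of the notion of canonical model, and of Definition~\ref{def:model_translation}. First I would pin down the domain of the correspondence on the stack-heap side: since $\itranslate{\phi}{\M}$ always returns some $(s,h)$ with $\dom(s)=\vars(\phi)$, with all occurring locations drawn from $\L$, with $s(x^S)\in\{\mathsf{loc}^\nil,\mathsf{loc}^S_1,\dots,\mathsf{loc}^S_{n_S}\}$, and with $h(\ell)$ carrying exactly the fields prescribed by the sort of $\ell$ (i.e.\ $\langle\n\rangle$, $\langle\n,\p\rangle$, or $\langle\n,\t\rangle$ for $\ell$ of sort $\lsSort$, $\dlsSort$, $\nlsSort$ respectively), these \emph{compatible} stack-heap models are exactly the objects to be matched with canonical SMT models.

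Given a compatible $(s,h)$, I would define $\M$ over $\Sigma_\phi$ by $x^\M := s(x)$ for $x\in\vars(\phi)$, $D^\M := \dom(h)$, $D_S^\M := \{\mathsf{loc}^\nil,\mathsf{loc}^S_1,\dots,\mathsf{loc}^S_{n_S}\}$ for each $S\in\Sort$, and $h_\f^\M[\ell] := h(\ell,\f)$ when $h(\ell,\f)\neq\bot$ and $h_\f^\M[\ell] := \mathsf{loc}^\nil$ otherwise. Then I would verify the three defining conditions. (i) $\M\models\mathcal{A}_\phi$: $\nil^\M=s(\nil)=\mathsf{loc}^\nil$, $\mathsf{loc}^\nil\notin D^\M$ follows from the stack-heap condition $h(s(\nil))=\bot$, the equations for $D_S$ hold by construction, and $x\in D_S^\M$ for $x\in\vars_S(\phi)$ is precisely the sort-compatibility of $s$. (ii) $\M$ is canonical: for $\ell\notin D^\M=\dom(h)$ every field of $\ell$ is undefined in $h$, so $h_\f^\M[\ell]=\mathsf{loc}^\nil$. (iii) $\itranslate{\phi}{\M}=(s,h)$: the stack part is immediate from $x^\M=s(x)$, and for the heap part each $\ell\in D^\M$ is a non-$\nil$ element of $\L$ lying in exactly one $D_S^\M$, so the case split of Definition~\ref{def:model_translation} selects the record shape matching the sort of $\ell$, which by field-compatibility of $h$ and the definition of $h_\f^\M$ reconstructs precisely $h(\ell)$.

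For uniqueness I would show that any canonical SMT model $\M'$ with $\itranslate{\phi}{\M'}=(s,h)$ coincides with the $\M$ just built: the constants satisfy $x^{\M'}=s(x)$ by definition of $\itranslate{\phi}{\cdot}$; the sets $D_S$ are forced to the fixed finite sets by $\mathcal{A}_\phi$; $D^{\M'}$ is forced because $\dom(h)=\{\ell \mid \ell\in(D\cap D_S)^{\M'}\text{ for some }S\}$, and since every non-$\nil$ location of $\L$ lies in exactly one $D_S^{\M'}$ while $\mathsf{loc}^\nil\notin D^{\M'}$, this set equals $D^{\M'}$; and the array entries are forced since those read off by the inverse translation are determined by $h$, while the remaining entries are pinned to $\mathsf{loc}^\nil$ (for rows of unallocated locations directly by canonicity, and for the unused fields of allocated locations by the normalisation of don't-care entries built into the notion of canonical model). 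The converse ``and vice versa'' direction is then immediate: $\itranslate{\phi}{\cdot}$ is a function, so a canonical $\M$ has $\itranslate{\phi}{\M}$ as its unique corresponding stack-heap model, and by step (iii) run in reverse this model is compatible.

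The step I expect to need the most care is (iii) together with the surrounding bookkeeping: making fully precise the link between the datatype-level sort of a location $\mathsf{loc}^S_i$ — visible only through membership in $D_S$ — and the record shape of $h$ at that location, and correspondingly fixing the exact class of compatible stack-heap models, in particular specifying how the array entries that the inverse translation never inspects are normalised so that ``canonical'' genuinely pins down $\M$. Everything else reduces to reading off $\mathcal{A}_\phi$ and Definition~\ref{def:model_translation}.
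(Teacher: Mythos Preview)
Your approach is essentially the same as the paper's, whose proof is a two-sentence sketch: $\itranslate{\phi}{\cdot}$ is a function so the stack-heap side is unique, and for the other direction one takes the canonical $\M$ with $\itranslate{\phi}{\M}=(s,h)$. Your version is far more detailed and, in particular, correctly flags the subtlety about normalising the don't-care array entries (unused fields of allocated locations) that the paper's stated definition of ``canonical'' does not explicitly address.
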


\begin{proof}
For a first-order model $\M$, the stack-heap model $\itranslate{\phi}{\M}$ is the unique corresponding model. For a stack-heap model $(s,h)$, the unique corresponding model is the canonical model $\M$ with $\itranslate{\phi}{\M}$. \qed
\end{proof}

We now define an operation of composition of SMT models which mimics the composition of stack-heap models with the same stack and disjoint heaps. 

\begin{definition}[Compatible models]
Let $F \subseteq \Locs$ be a set of locations. SMT models $\M_1$ and $\M_2$ are \mbox{$F$-\emph{compatible}} if the following conditions hold:
\begin{itemize}
    \item $D^{\M_1} \cap D^{\M_2} \subseteq F$,
    \item For all $x \in \vars(\phi)$, $x^{\M_1} = x^{\M_2}$,
    \item For all $\ell \in F$ and all $\f \in \Fields$, $h_\f[\ell]^{\M_1} = h_\f[\ell]^{\M_2}$.
\end{itemize}
\end{definition}

\noindent We say that models are compatible if they are $\emptyset$-compatible. Observe that compatible models correspond to stack-heap models with the same stack and disjoint heaps.

\begin{definition}[Model composition]
The composition of $\M_1$ and $\M_2$, $\M_1 \oplus \M_2$, is defined iff $\M_1$ and $\M_2$ are compatible as $\M_1$ except:
\begin{itemize}
    \item $D^{\M_1 \oplus \M_2} = D^{\M_1} \cup D^{\M_2}$
    \item $h_\f[\ell]^{\M_1 \oplus \M_2} = 
        \begin{cases}
            h_\f[\ell]^{\mathcal{M}_1} \quad &\text{ if $\ell \in D^{\mathcal{M}_1}$,}\\
            h_\f[\ell]^{\mathcal{M}_2} \quad &\text{ if $\ell \in D^{\mathcal{M}_2}$,}\\
            \mathsf{loc}^{\nil} &\text{ otherwise.}
        \end{cases}$
\end{itemize}
\end{definition}

\begin{corollary}
\label{corollary:model-composition}
Let $(s,h_i)$ and $\M_i$ be corresponding models. Then \mbox{$(s,h_1 \uplus h_2)$} corresponds to $\M_1 \oplus \M_2$.
\end{corollary}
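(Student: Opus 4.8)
The plan is to unfold the inverse translation of Definition~\ref{def:model_translation} on both sides of the claimed equality and compare the results componentwise. Before doing so I would check that both sides are actually well defined. Writing $h_1 \uplus h_2$ presupposes $\dom(h_1) \cap \dom(h_2) = \emptyset$; and since $(s,h_i) = \itranslate{\phi}{\M_i}$ we get $x^{\M_1} = s(x) = x^{\M_2}$ for all $x \in \vars(\phi)$, and $\dom(h_i) = D^{\M_i}$ by the observation in the next paragraph, so $D^{\M_1} \cap D^{\M_2} = \emptyset$; hence $\M_1$ and $\M_2$ are compatible and $\M_1 \oplus \M_2$ is defined. I would also remark in passing that $\M_1 \oplus \M_2$ is again an SMT model, since $\nil$ lies in neither $D^{\M_1}$ nor $D^{\M_2}$, and the other conjuncts of $\mathcal{A}_\phi$ are inherited from $\M_1$ because composition changes only $D$ and the arrays $h_\f$.

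The stack component is immediate, as $x^{\M_1 \oplus \M_2} = x^{\M_1} = s(x)$ for every $x \in \vars(\phi)$. For the heap, the key preliminary observation is that for any SMT model $\M$ the heap of $\itranslate{\phi}{\M}$ has domain exactly $D^\M$ and assigns a unique sort to each of its locations: the case split of Definition~\ref{def:model_translation} ranges over the sets $(D \cap D_S)^\M$, and $\mathcal{A}_\phi$ forces the $D_S^\M$ to cover the datatype $\L$ and to be pairwise disjoint except for their common element $\mathsf{loc}^\nil \notin D^\M$. Combining this with $D^{\M_1 \oplus \M_2} = D^{\M_1} \cup D^{\M_2}$ shows that the heap read off from $\M_1 \oplus \M_2$ has domain $\dom(h_1) \cup \dom(h_2) = \dom(h_1 \uplus h_2)$.

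It then remains to match the values. I would fix $\ell \in D^{\M_1 \oplus \M_2}$ and, using disjointness, split into the two symmetric cases $\ell \in D^{\M_1}$ and $\ell \in D^{\M_2}$; say $\ell \in D^{\M_1}$. The sets $D_S$ are fixed by $\mathcal{A}_\phi$, hence identical in $\M_1$ and $\M_1 \oplus \M_2$, so the case split assigns $\ell$ the same sort in both; and the definition of $\oplus$ gives $h_\f[\ell]^{\M_1 \oplus \M_2} = h_\f[\ell]^{\M_1}$ for every $\f \in \Fields$. Therefore the entry $h(\ell)$ computed from $\M_1 \oplus \M_2$ equals the one computed from $\M_1$, which is $h_1(\ell)$ since $(s,h_1) = \itranslate{\phi}{\M_1}$, and $h_1(\ell) = (h_1 \uplus h_2)(\ell)$ because $\ell \in \dom(h_1)$. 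This yields $\itranslate{\phi}{\M_1 \oplus \M_2} = (s, h_1 \uplus h_2)$.

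The whole argument is essentially a routine unfolding of definitions; the only point needing a little care is the bookkeeping around $D$, the sort sets $D_S$, and the distinguished element $\mathsf{loc}^\nil$ — concretely, seeing that the domain extracted from an SMT model is precisely $D^\M$ and that $\oplus$ preserves the sort of every allocated location. I do not foresee any genuine obstacle beyond that.
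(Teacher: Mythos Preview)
Your proposal is correct and follows the same approach as the paper's proof, which simply observes that $\M_1 \oplus \M_2$ is an SMT model iff $\M_1$ and $\M_2$ are, and then states that the rest follows directly from the definitions of model correspondence and model composition. Your write-up is just a careful, explicit unfolding of exactly those definitions, including the bookkeeping around $D$, the sort sets $D_S$, and $\mathsf{loc}^\nil$ that the paper leaves implicit.
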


\begin{proof}
First, observe that $\M_1 \oplus \M_2$ is SMT model iff $\M_1$ and $\M_2$ are SMT models. The rest of the claim follows directly from the definitions of model correspondence and model composition. \qed
\end{proof}

The translation $\translate{\psi}{F}$ has the invariant that it cannot distinguish $F$-compatible models.

\begin{lemma}
\label{lemma:translation-invariant} For $F$-compatible SMT models $\M_1$ and $\M_2$, it holds that:
$$\M_1 \models \translate{\psi}{F} \iff \M_2 \models \translate{\psi}{F}.$$
\end{lemma}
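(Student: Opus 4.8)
I would prove the lemma by structural induction on $\psi$, the slogan being that $\translate{\psi}{F}$ inspects the two models only through data on which $F$-compatibility forces agreement. First note that $\translate{\psi}{F}$ never mentions the global domain symbol $D$, and that in any SMT model the constant $\nil$ and the sets $\ES,\ED,\EN$ are pinned down by $\mathcal{A}_\phi$, hence coincide in $\M_1$ and $\M_2$; so of the defining clauses of $F$-compatibility only agreement of the constants $x\in\vars(\phi)$ and agreement of the array cells $h_\f[\ell]$ for $\ell$ in (the set denoted by) $F$ are actually needed, and both are inherited by any $F'\subseteq F$. Since $F$-compatibility is symmetric it suffices to prove $\M_1\models\translate{\psi}{F}\Rightarrow\M_2\models\translate{\psi}{F}$. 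Throughout I read the hypothesis as also asserting $F^{\M_1}=F^{\M_2}$ (this is how the lemma is invoked, and it is re-established at every recursive call below).

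\textbf{Base cases and boolean connectives.} For $x\bowtie y$ with $\bowtie\in\{=,\neq\}$ the translation is $x\bowtie y\land F=\emptyset$, whose truth value depends only on $x^\M,y^\M$ and $F^\M$, which all agree. For a points-to atom $x\pto\langle\f_i:f_i\rangle_{i\in I}$ the translation is $F=\{x\}\land\bigwedge_{i\in I}h_{\f_i}[x]=f_i$; if $\M_1$ satisfies it then $F^{\M_1}=\{x^{\M_1}\}$, so $x^{\M_1}\in F^{\M_1}=F^{\M_2}$, and compatibility gives $h_{\f_i}[x]^{\M_2}=h_{\f_i}[x]^{\M_1}=f_i^{\M_1}=f_i^{\M_2}$, so $\M_2$ satisfies it too. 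For $\psi_1\bowtie\psi_2$ with $\bowtie\in\{\land,\lor,\gneg\}$ the same term $F$ is passed to both operands, so the claim is immediate from the induction hypotheses.

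\textbf{Inductive predicates.} One expands $\translate{\pi(x,y)}{F}$ into its conjuncts. The conjunct $\mathsf{typing}$ mentions only $F$ and the fixed sets $\ES,\ED,\EN$. The conjunct $\mathsf{main\_path}$ is $\reach(\arrnext,x,y)\land F=\path_S(\arrnext,x,y)$ (and analogously with $\path_N$ for NLLs): when $\M_1$ satisfies it, either $F^{\M_1}=\emptyset$, and there is nothing to transfer, or $F^{\M_1}$ is exactly the chain $x,\arrnext[x],\dots,\arrnext^{k-1}[x]$ realising a shortest path of some length $k\le n$ to $y$; all these locations lie in $F$, so $\arrnext$ agrees on them in $\M_2$, and a short induction on $j\le k$ shows $\arrnext^j[x]^{\M_1}=\arrnext^j[x]^{\M_2}$, hence that $k$ also witnesses $\reach$ in $\M_2$ and that $\path_S$ (resp.\ $\path_N$) evaluates to the same set there, so $F^{\M_2}$ equals that set as well. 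The remaining conjuncts --- $\mathsf{boundaries}$ and $\mathsf{back\_links}$ for DLLs, $\mathsf{inner\_lists}$ and $\mathsf{disjoint}$ for NLLs --- are bounded quantifications whose guards already confine every dereferenced location to $F$ (e.g.\ in $\mathsf{back\_links}$ one has $\ell\in F$ and $\ell\neq x'$, and $\mathsf{main\_path}$ with $\mathsf{boundaries}$ forces $\arrnext[\ell]\in F$, so $\arrprev[\arrnext[\ell]]$ is read inside $F$; likewise the inner-chain locations in $\mathsf{inner\_lists}$ and $\mathsf{disjoint}$ all lie in $F$ by the shape of $\path_N$), so each conjunct transfers to $\M_2$.

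\textbf{Separating conjunction, and the main obstacle.} If $\M_1\models\translate{\psi_1\star\psi_2}{F}$, the placeholders $\exists F_i\in\mathcal{F}_i$ are witnessed --- by an enumerated footprint term in the $\mathtt{SatEnum}$ case, by a Skolem constant in the $\mathtt{SatQuantif}$ case --- by sets $F_1^{\M_1},F_2^{\M_1}$ with $F_1\cap F_2=\emptyset$, $F_1\cup F_2=F$, and $\M_1\models\translate{\psi_i}{F_i}$. To apply the induction hypotheses to $\psi_i$ with $F_i$ I must check that $\M_1,\M_2$ agree on constants and on array cells within $F_i^{\M_1}$, which holds since $F_i^{\M_1}\subseteq F^{\M_1}$, and that $F_i^{\M_1}=F_i^{\M_2}$ --- for a Skolem constant by construction, and for an enumerated footprint term because, under $\M_1\models\translate{\psi_i}{F_i}$, the term $F_i$ is built from interpreted constants and from $\path_S$/$\path_N$ terms whose realising chains lie inside $F_i^{\M_1}\subseteq F^{\M_1}$, exactly as in the $\mathsf{main\_path}$ analysis, and over that set $\M_1$ and $\M_2$ agree. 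The induction hypotheses then give $\M_2\models\translate{\psi_i}{F_i}$, and the side conditions $F_1\cap F_2=\emptyset$ and $F_1\cup F_2=F$ refer only to the now-agreeing values $F_1^\M,F_2^\M,F^\M$, so $\M_2\models\translate{\psi_1\star\psi_2}{F}$. I expect this last case --- and within it, pinning the footprint witnesses down so that the induction hypothesis is applicable, which ties into the recursive definition $\FFootprints{\psi_1\star\psi_2}=\{F_1\cup F_2\mid F_1\in\FFootprints{\psi_1},\,F_2\in\FFootprints{\psi_2}\}$ --- to be the only real work; every other case is a mechanical check that no array cell outside $F$ is ever consulted.
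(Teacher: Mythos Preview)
Your proposal is correct and follows the same approach as the paper: a structural induction on $\psi$ showing that $\translate{\psi}{F}$ never consults $D$ nor any array cell $h_\f[\ell]$ with $\ell\notin F$. The paper's own proof is a two-sentence sketch that names exactly these two observations and defers the rest to ``a straightforward structural induction''; you have supplied the induction the paper omits, including the analysis of $\mathsf{main\_path}$ and the separating-conjunction step, so your write-up is strictly more detailed than the original while remaining on the same track.
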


\begin{proof}
By the definition, $F-$compatible models $\M_1$ and $\M_2$ can differ only in their interpretation of the symbol $D$ and of array images $h_\f[\ell]$ such that $\ell \not \in F$. It can be easily checked that $D$ does not occur in $\translate{\psi}{F}$ which therefore does not restrict its interpretation at all. It remains to show that $\translate{\psi}{F}$ also does not restrict the interpretation of $h_\f[\ell]$ such that $\ell \not \in F$ which can be done by a straightforward structural induction on~$\phi$. \qed
\end{proof}

The following lemma captures the correctness of translation of the separating conjunction.

\begin{lemma}
\label{lemma:star-lemma}
For compatible SMT models $\M_1$ and $\M_2$, the following are equivalent:
\begin{itemize}
    \item $\M_1 \models \translate{\psi_1}{F_1} \land D = F_1$ \text{ and } $\M_2 \models \translate{\psi_2}{F_2} \land D = F_2$,
    \item $\M_1 \oplus \M_2 \models \translate{\psi_1}{F_1} \land \translate{\psi_2}{F_2} \land D = F_1 \cup F_2$.
\end{itemize}
\end{lemma}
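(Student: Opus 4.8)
The statement splits into the two implications, and in both directions the real content is carried by Lemma~\ref{lemma:translation-invariant} together with the defining equations of $\oplus$. First I would record the basic facts about the composition: since $\M_1$ and $\M_2$ are $\emptyset$-compatible, $\M_1 \oplus \M_2$ is defined, it interprets every constant $x \in \vars(\phi)$ as $x^{\M_1} = x^{\M_2}$, it inherits from $\M_1$ the interpretation of every other non-heap symbol (in particular the sort sets $D_S$, which $\mathcal{A}_\phi$ pins to fixed location sets anyway), its domain is $D^{\M_1 \oplus \M_2} = D^{\M_1} \cup D^{\M_2}$ with $D^{\M_1} \cap D^{\M_2} = \emptyset$, and $h_\f[\ell]^{\M_1 \oplus \M_2}$ agrees with $h_\f[\ell]^{\M_1}$ for $\ell \in D^{\M_1}$ and with $h_\f[\ell]^{\M_2}$ for $\ell \in D^{\M_2}$. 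From these one reads off directly that $\M_1$ and $\M_1 \oplus \M_2$ are $D^{\M_1}$-compatible, and symmetrically $\M_2$ and $\M_1 \oplus \M_2$ are $D^{\M_2}$-compatible.

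For the direction from the first item to the second, assume $\M_i \models \translate{\psi_i}{F_i} \land D = F_i$ for $i = 1,2$. The conjunct $D = F_i$ gives $F_i^{\M_i} = D^{\M_i}$, so $\M_i$ and $\M_1 \oplus \M_2$ are in fact $F_i$-compatible, and Lemma~\ref{lemma:translation-invariant} transports $\M_i \models \translate{\psi_i}{F_i}$ to $\M_1 \oplus \M_2 \models \translate{\psi_i}{F_i}$. It then remains to verify $D = F_1 \cup F_2$ in $\M_1 \oplus \M_2$, which follows from $D^{\M_1 \oplus \M_2} = D^{\M_1} \cup D^{\M_2}$ once one knows that each $F_i$ evaluates to the same set in $\M_i$ and in $\M_1 \oplus \M_2$. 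The converse direction is symmetric: starting from $\M_1 \oplus \M_2 \models \translate{\psi_1}{F_1} \land \translate{\psi_2}{F_2} \land D = F_1 \cup F_2$, the disjointness $D^{\M_1} \cap D^{\M_2} = \emptyset$ together with the coherent evaluation of $F_1, F_2$ forces $F_i^{\M_1 \oplus \M_2} = D^{\M_i}$, hence $\M_i \models D = F_i$ and, by $F_i$-compatibility and Lemma~\ref{lemma:translation-invariant} applied in the opposite direction, $\M_i \models \translate{\psi_i}{F_i}$.

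The step I expect to be the main obstacle is the bookkeeping around the footprint arguments $F_1$ and $F_2$, namely the claim that a footprint term is evaluated coherently across $\M_1$, $\M_2$, and $\M_1 \oplus \M_2$. One has to be careful because these arguments appear either as concrete footprint terms produced by $\FFootprints{\cdot}$ (in the $\mathtt{SatEnum}$ encoding) or as freshly quantified set variables (in $\mathtt{SatQuantif}$), and because $\path$-style terms syntactically perform look-ups $h_\f^i[x]$ that can, a priori, reach outside $D^{\M_i}$ where $\M_i$ and $\M_1 \oplus \M_2$ need not agree. I would therefore isolate a small ``domain-locality'' lemma for footprint terms — that, under $D = F_i$, the value of $F_i$ and the truth of $\translate{\psi_i}{F_i}$ depend only on the constants and on the array cells indexed by locations inside $F_i^{\M_i}$ — proved by the same structural induction that underlies Lemma~\ref{lemma:translation-invariant}. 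Once this locality fact is in place, the remainder is the routine combination of Lemma~\ref{lemma:translation-invariant} with the defining equations of $\oplus$ sketched above.
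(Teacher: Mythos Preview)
Your overall strategy—establish $F_i$-compatibility between $\M_i$ and $\M_1 \oplus \M_2$ and then invoke Lemma~\ref{lemma:translation-invariant}—is exactly the paper's, which dispatches the whole lemma in three sentences. Your forward direction is fine, and your extra ``domain-locality'' worry about how the footprint arguments $F_i$ themselves are evaluated is a genuine subtlety the paper simply does not discuss; in the paper's actual use of the lemma (inside the proof of Theorem~\ref{theorem:correctness}) the $F_i$ are in effect fixed location sets, so the issue does not bite there, but you are right that it deserves a word.

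Where your argument breaks is the converse direction. You claim that ``the disjointness $D^{\M_1} \cap D^{\M_2} = \emptyset$ together with the coherent evaluation of $F_1, F_2$ forces $F_i^{\M_1 \oplus \M_2} = D^{\M_i}$.'' This does not follow: from $D^{\M_1} \cup D^{\M_2} = F_1 \cup F_2$ and disjointness on both sides you cannot recover which $F_i$ matches which $D^{\M_i}$. Concretely, take $D^{\M_1} = \emptyset$, $D^{\M_2} = \{\ell\}$, $F_1 = \{\ell\}$, $F_2 = \emptyset$ (as set constants), $\psi_1$ a pointer assertion with root mapped to $\ell$, and $\psi_2$ pure; then item~2 holds in $\M_1 \oplus \M_2$ while item~1 fails because $\M_1 \models D = F_1$ would require $\emptyset = \{\ell\}$. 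The paper's own proof is equally silent here: the $F_i$-compatibility it asserts already presupposes $D^{\M_i} = F_i^{\M_i}$, i.e.\ part of item~1. What makes things work in the proof of Theorem~\ref{theorem:correctness} is that there $\M_i$ is \emph{defined} as the model corresponding to $(s, h|_{F_i})$, so $D^{\M_i} = F_i$ holds by construction. You should either fold that into the hypotheses of the lemma, or argue only the forward implication here and handle the other direction at the call site with the explicitly constructed $\M_i$.
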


\begin{proof} 
Since $\M_1$ and $\M_2$ are compatible, we know that $\M_1 \oplus \M_2$ is defined. Further, it holds that $\M_1 \oplus \M_2$ is $F_i$-compatible with $\M_i$. The rest of the claim follows from the fact that $\M_1 \oplus \M_2 \models \translate{\psi_i}{F_i}$ iff $\M_i \models \translate{\psi_i}{F_i}$ by Lemma \ref{lemma:translation-invariant}. \qed

\end{proof}

\subsection{Proof of Translation Correctness}
\label{appendix:translation-correctness}

\begin{proof}[Lemma \ref{lemma:star_simplified}]$\;$
\begin{itemize}[leftmargin=2.5em,labelindent=1em]
    \item[$(\Rightarrow)$] 
    Assume that $(s,h) \models \psi_1 \star \psi_2$, then there exist disjoint heaps $h_1$ and $h_2$ with $h = h_1 \uplus h_2$ and $(s, h_i) \models \psi_i$. Then $\dom(h_i) \in \mathcal{F}_i$ for $i = 1, 2$ because it is a footprint of $\psi_i$ in $(s,h)$. The right-hand side then holds for $F_i \coloneqq \dom(h_i)$.
    \item[$(\Leftarrow)$]
    Assume that $F_1 \in \mathcal{F}_1$, $F_2 \in \mathcal{F}_2$ are disjoint sets satisfying $(s, h|_{F_i}) \models \psi_i$ and $F_1 \cup F_2 = \dom(h)$, then $(s, h) \models \psi_1 \star \psi_2$. \qed
\end{itemize}
\end{proof}

We will now define the semantics of inductive predicates in terms of paths in stack-heap graphs. For this we will use a property that such paths are uniquely determine.

\begin{corollary}[Path uniqueness]
\label{corollary:path-uniqueness}
Let $x$ and $y$ be locations in a model $(s,h)$. If there is a path $\gpath{\sigma}{x}{y}{\f}$, then it is uniquely determined as $\sigma = \langle x, h(x, \f), \ldots, h^{|\sigma|}(x, \f) \rangle$.
\end{corollary}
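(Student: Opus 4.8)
The plan is to give a short induction on the position along the path, using the single fact that $h$ is a partial function, so that for each location $\ell$ and each field $\f$ the value $h(\ell, \f)$ is uniquely determined (possibly $\bot$). Write $\sigma = \langle v_0, v_1, \ldots, v_n \rangle$ with $v_0 = x$, $v_n = y$ and $n = |\sigma|$, and recall that by the definition of the stack-heap graph $G[(s,h)]$ we have $v_i \arrow_\f v_{i+1}$ iff $h(v_i, \f) = v_{i+1}$.

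First I would set $h^0(x,\f) \triangleq x$ and $h^{i+1}(x,\f) \triangleq h(h^i(x,\f), \f)$, and prove by induction on $i$ that $v_i = h^i(x,\f)$ for every $0 \le i \le n$. The base case $i = 0$ is immediate from $v_0 = x$. For the step, assume $v_i = h^i(x,\f)$ with $i < n$; since $\sigma$ is a path, $v_i \arrow_\f v_{i+1}$, i.e. $h(v_i,\f) = v_{i+1}$, hence $v_{i+1} = h(h^i(x,\f),\f) = h^{i+1}(x,\f)$. Note that this argument also shows $h^i(x,\f) \neq \bot$ for all $i \le n$, so every iterate occurring in the claimed equality is well defined. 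Substituting the computed values of the $v_i$ back into $\sigma$ yields $\sigma = \langle x, h(x,\f), \ldots, h^{n}(x,\f)\rangle = \langle x, h(x,\f), \ldots, h^{|\sigma|}(x,\f)\rangle$, as required.

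There is essentially no obstacle here: the corollary is an immediate consequence of $\f$-edges being functional (the distinctness of the $v_i$ required of paths plays no role in the determinacy argument). The only points that need a little care are the index bookkeeping --- a path $\langle v_0, \ldots, v_n\rangle$ has $\dom(\sigma) = \{v_0,\ldots,v_{n-1}\}$ of size $n = |\sigma|$, so its last vertex is $h^{|\sigma|}(x,\f)$ --- and the remark, supplied for free by the induction, that each iterate $h^i(x,\f)$ appearing in the statement is defined along the path.
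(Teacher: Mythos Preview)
Your argument is correct and is essentially the same as the paper's: both derive the claim from the fact that the $\f$-successor of a location $\ell$ in $G[(s,h)]$ is uniquely $h(\ell,\f)$, and your induction just spells out this ``direct consequence'' explicitly. One small remark: the paper's proof additionally cites that paths are acyclic, which your proof does not use; you are right that distinctness is irrelevant for showing that any given path $\sigma$ has the displayed form, but acyclicity is what rules out two such paths of different lengths (if $|\sigma| < |\sigma'|$ then $h^{|\sigma|}(x,\f) = y$ would repeat in $\sigma'$), and the paper sometimes reads the corollary in that stronger sense.
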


\begin{proof}
The claim is a direct consequence of the fact that $\f$-successor of a location~$\ell$ in $G[(s,h)]$ is uniquely determined as $h(\ell, \f)$; and of paths being defined as acyclic. \qed
\end{proof}

\noindent For reasoning about inductive predicates, we formally define the sorts of locations as follows:
\vspace{-4mm}
\begin{align*}
\Locs_\ls    &= \{ \ell \in \Locs \;|\; h(\ell, \n) \neq \bot, h(\ell, \p) = h(\ell, \t) = \bot \}\\
\Locs_\dls   &= \{ \ell \in \Locs \;|\; h(\ell, \n) \neq \bot, h(\ell, \p) \neq \bot, h(\ell, \t) = \bot \} \\
\Locs_\nls   &= \{ \ell \in \Locs \;|\; h(\ell, \n) \neq \bot, h(\ell, \t) \neq \bot, h(\ell, \p) = \bot \}
\end{align*}
\noindent For a path $\sigma$, a field $\f$ and a location $z$, we further define the \emph{nested domain} of $\sigma$ in $G[(s,h)]$ as $\nesteddom(\sigma, \f, z) = \dom(\sigma) \cup \bigcup \{\sigma' \;|\; \exists \ell \in \dom(\sigma).\; \gpath{\sigma'}{\ell}{s(z)}{\f}\}$.
\begin{lemma}[Path semantics of inductive predicates]
\enlargethispage{6mm}
\label{lemma:path-semantics}
Let $(s,h)$ be a model and let $\pi$ be an inductive predicate. The following claims hold:
\begin{itemize}
    \item $(s,h) \models \lspred$ iff there is a path $\gpath{\sigma}{x}{y}{\n}$ s.t. $\dom(h) = \dom(\sigma) \subseteq \Locs_\ls$.\\
    \item $(s,h) \models \dlspred$ iff either $s(x) = s(y)$, $s(x') = s(y')$ and $\dom(h) = \emptyset$, or $s(x) \neq s(y)$, $s(x') \neq s(y')$ and there exists a path $\gpath{\sigma}{x}{y}{\n}$ such that:
    \begin{enumerate}
        \item $\dom(h) = \dom(\sigma) \subseteq \Locs_\dls$,
        \item $\ell \in \dom(h)$ and $\ell \neq s(x')$ implies that $h(\p, h(\n, \ell)) = \ell$,
        \item $h(s(x'), \n) = s(y)$ and $s(x') \in \dom(h)$,
        \item $h(s(x), \p) = s(y')$ and $s(y') \not \in \dom(h)$.\\
    \end{enumerate}
    \item $(s,h) \models \nlspred$ iff there exists a path $\gpath{\sigma}{x}{y}{\t}$ such that:
    \begin{enumerate}
        \item $\ell \in \dom(h)$ implies that there exists a path $ \gpath{\sigma'}{\ell}{z}{\n}$,
        \item For all distinct $\ell_1$ and $\ell_2$, if $\{\ell_1, \ell_2\} \subseteq \nesteddom(\sigma, \n, z)$ and $h(\ell_1, \n) = h(\ell_2, \n)$, then $h(\ell_1, \n) \not \in \nesteddom(\sigma, \n, z)$.
        \item $\dom(h) = \nesteddom(\sigma, \n, z)$,
        \item $\dom(\sigma) \subseteq \Locs_\nls$ and $\dom(h) \setminus \dom(\sigma) \subseteq \Locs_\ls$.
    \end{enumerate}
\end{itemize}
\end{lemma}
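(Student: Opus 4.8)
\medskip\noindent
The plan is to prove the three equivalences simultaneously by induction on $|\dom(h)|$. This is justified because the semantics of the inductive predicates in Fig.~\ref{fig:sl_semantics} either forces $\dom(h)=\emptyset$ (the equality disjunct) or unfolds the predicate by splitting off its root pointer as a non-empty sub-heap $\{s(x)\mapsto\dots\}$, leaving a strictly smaller, disjoint remainder on which a predicate of the same kind holds. Throughout I will use Corollary~\ref{corollary:path-uniqueness}, so that ``there is an $\f$-path from $u$ to $v$'' refers to the unique such path, and the disjointness of $\uplus$, which is what guarantees that prepending the root $s(x)$ to a path through the remainder again yields an acyclic sequence: one only has to note that $s(x)$ is not in the remainder's domain and is distinct from the relevant sink, the latter coming from the disequalities built into the predicate's semantics.

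For $\lspred$ the argument is the template. If $\dom(h)=\emptyset$, both sides reduce to $s(x)=s(y)$. Otherwise, for $(\Rightarrow)$ we have $h=\{s(x)\mapsto\cls{\ell}\}\uplus h'$ with $s(x)\neq s(y)$ and $(s[n\mapsto\ell],h')\models\ls(n,y)$; the induction hypothesis gives a path $\gpath{\sigma'}{n}{y}{\n}$ with $\dom(h')=\dom(\sigma')\subseteq\Locs_\ls$, and then $\langle s(x)\rangle\cdot\sigma'$ is a path $\gpath{\sigma}{x}{y}{\n}$ with $\dom(h)=\dom(\sigma)\subseteq\Locs_\ls$ (acyclicity as noted above, using $s(x)\notin\dom(h')$ and $s(x)\neq s(y)$). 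For $(\Leftarrow)$, from a path $\sigma=\langle v_0,\dots,v_k\rangle$ with $k\geq 1$ and $v_0=s(x)\in\Locs_\ls$ we read off $h(s(x))=\cls{v_1}$, delete $s(x)$ to obtain $h'$, apply the induction hypothesis to $\langle v_1,\dots,v_k\rangle$ (a path from $v_1$ to $s(y)$ with domain in $\Locs_\ls$), and combine with $v_0\neq v_k$ to get $(s,h)\models x\mapsto n\star\ls(n,y)$ for a fresh $n$, hence $(s,h)\models\lspred$.

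The $\dlspred$ and $\nlspred$ cases will follow the same peel-and-reassemble scheme but with more bookkeeping. For $\dls$, after the empty disjunct is dispatched, peeling $s(x)\mapsto\cdls{\ell}{s(y')}$ leaves a heap satisfying $\dls(n,y,x',x)$ --- crucially its fourth argument is $x$, so clause~(4) of the induction hypothesis for the remainder reads $h(\ell,\p)=s(x)$, which is exactly the back-link needed to extend clause~(2) to the new node; clauses~(1) and~(3) are inherited unchanged (using $s(x)\neq s(x')$ when the remainder is non-empty, and checked directly when it degenerates to a single pointer), and clause~(4) for $\dlspred$ becomes the freshly introduced $h(s(x),\p)=s(y')$ with $s(y')\notin\dom(h)$, the required disequality $s(x)\neq s(y')$ being extracted from unwinding the semantics and the disjointness of $\uplus$. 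For $\nls$, unfolding detaches the root pointer $s(x)\mapsto\cnls{\ell}{v_1}$, a complete inner list $\ls(n,z)$ rooted at $\ell$, and the remainder $\nls(t,y,z)$ rooted at $v_1$, on pairwise disjoint heaps; applying the $\ls$-part of the induction hypothesis to $\ls(n,z)$ and the $\nls$-part to $\nls(t,y,z)$, one checks that $\nesteddom(\sigma,\n,z)$ for the whole model decomposes as $\{s(x)\}$ together with the inner $\n$-path from $s(x)$ and $\nesteddom$ of the remainder, which gives clause~(3), while disjointness of the three heaps together with the fact that dangling targets lie outside the domain (Lemma~\ref{lemma:dangling-labelled}) gives clauses~(2) and~(4). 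The converse directions peel the top-level path and, for $\nls$, the inner list hanging off $s(x)$ in the same way and invoke the induction hypothesis.

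I expect the main obstacle to be the $\nls$ disjointness clause~(2): one must show, in both directions of the induction, that the single global non-collision condition --- no two distinct $\n$-edges whose sources lie in $\nesteddom(\sigma,\n,z)$ share a target that also lies in $\nesteddom(\sigma,\n,z)$ --- is equivalent to the inner lists and the top-level $\t$-path occupying pairwise disjoint portions of the heap. Concretely, in $(\Leftarrow)$ one needs that an $\n$-edge emanating from one inner list cannot re-enter another inner list or the top-level path, so that the peeling genuinely produces disjoint sub-heaps for $\ls(n,z)$ and $\nls(t,y,z)$; here path uniqueness and clause~(2) must be combined to derive a contradiction from any hypothetical ``merge''. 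In $(\Rightarrow)$ one must reconstruct clause~(2) for the enlarged model from the remainder's clause~(2) and the disjointness of the newly attached inner list. All remaining points --- the precise disequalities between roots, sinks and $\p$-successors in the $\dls$ clauses, and tracking which array field is defined at which node --- should be routine once the disequality requirements in the predicate semantics are kept in view.
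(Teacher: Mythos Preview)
Your proposal is correct and follows essentially the same unfolding/induction scheme as the paper's proof. The only differences are cosmetic: the paper inducts on $|\sigma|$ rather than $|\dom(h)|$ (these coincide for $\ls$ and $\dls$, and for $\nls$ both choices decrease under one unfolding step), and the paper writes out only the $(\Leftarrow)$ direction explicitly, dismissing $(\Rightarrow)$ as ``analogical'', whereas you sketch both. Your identification of the $\nls$ clause~(2) as the place where the real bookkeeping lives is accurate and matches where the paper's argument is most compressed.
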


\begin{proof}
We only prove the more complex implication ($\Leftarrow$). For the other direction, it can be intuitively checked that models of inductive predicates satisfy the given conditions and its proof is analogical. We proceed by induction on $|\sigma|$ in all the cases.

\paragraph{SLL.} If $|\sigma| = 0$, then the claim holds as $s(x)~=~s(y)$ and $\dom(h) = \dom(\sigma) = \emptyset$.  Assume that $|\sigma| = n + 1$. Since there exists a path \mbox{$\gpath{\sigma}{s(x)}{s(y)}{\n}$} of length $n + 1$ with $s(x) \in \dom(\sigma) \subseteq \Locs_\ls$, there also exists a path $\gpath{\sigma'}{h(s(x), \n)}{s(y)}{\n}$. Then, by IH, $(s,h) \models \exists u.\; x^\lsSort \pto u \star \ls(u, y)$. Moreover, $s(x) \neq s(y)$ because $\sigma$ is non-empty. Thus $(s,h) \models \lspred$.

\paragraph{DLL.} If $|\sigma| = 0$, then the claim trivially holds. 
If $|\sigma| = 1$, then $h(s(x), \n) = s(y)$ and $\dom(\sigma) = \{s(x)\} \subseteq \Locs_\dls$. Hence, by the conditions (3) and (4), we have that $(s,h) \models x^\dlsSort \pto \cdls{y}{y'} \star x \neq y \star x \neq y'$ where the first disequality follows from non-emptiness of the path $\sigma$ and the second follows from the fact that $s(x') \in \dom(\sigma)$ while $s(y') \not \in \dom(\sigma)$ by (4). Thus, $(s,h) \models \dlspred$.
    
If $|\sigma| = n + 2$, then $\{s(x), h(s(x), \n)\} \subseteq \dom(\sigma) \subseteq \Locs_\dls$ and there exists a path $\gpath{\sigma'}{h(s(x), \n)}{s(y)}{\n}$. We will show that for $\sigma'$, the conditions (1) -- (4) hold. This is trivial for (1), (2) and (4) because $\dom(\sigma') = \dom(\sigma) \setminus \{s(x)\}$. The condition (3) follows from the fact that $s(x) \neq s(x')$ because $h(s(x), \n) \neq h(s(x'), \n)$ (the first location is in $\dom(\sigma)$ by assumption while the second is not as it is equal to $s(y)$ by~(3)).  Hence, by IH, $(s,h) \models \exists u.\; x^\dlsSort \pto \cdls{u}{y'} \star \dls(u, y, x', x) \star x \neq y \star x' \neq y'$. Thus $(s,h) \models \dlspred$.

\paragraph{NLL.} The base case is analogical to the case of SLL. If $|\sigma| = n + 1$, then $s(x) \in \dom(\sigma)$ and there is a path $\gpath{\sigma_\t}{h(s(x), \t)}{ s(y)}{\t}$ with $|\sigma_\t| < n$. As $\dom(\sigma_\t) \subseteq \dom(\sigma)$ we have that for a sub-heap $h_\t$ with $\dom(h_\t) = \nesteddom(\sigma', \n, z)$, $(s,h_\t) \models \exists t. \nls(t, y, z)$ by IH. By (1), we further have that there is a path $\gpath{\sigma_\n}{h(s(x), \n)}{s(z)}{\n}$ for which $\dom(\sigma_n) \subseteq \Locs_\ls$ by~(4). Hence $(s,h_\n) \models \exists n.\; \ls(n, z)$. By (2), we have that $h_\t$ and $h_n$ are disjoint. Combining together all the claims derived so far, we obtain that $(s,h) \models \exists t, n.\; x^\nlsSort \pto \cnls{n}{t} \star \ls(n,z) \star \nls(t, y, z) \star x \neq y$ where the last disequality follows from non-emptiness of the path $\sigma$. Thus, $(s,h) \models \nlspred$.
\end{proof}

\begin{lemma}[Translation of path semantics]
\enlargethispage{8mm}
\label{lemma:auxiliary-predicates}
Let $(s,h)$ and $\M$ be corresponding models. The following claim holds:
\begin{enumerate}
    \item \!\!$\mathcal{M} \models \mathsf{reach}^{[m,n]}(h_\f,x,y)$ iff 
    there is a path $\gpath{\sigma}{s(x)}{s(y)}{\f}$ with $m \leq |\sigma| \leq n$.
\end{enumerate}
\noindent Further, let $\psi$ be a formula with one free variable $v$. Assume that there exists a path $\gpath{\sigma}{s(x)}{s(y)}{\f}$ with $m \leq |\sigma| \leq n$, then the following claims hold:
\begin{enumerate}
    \item[2.] $\path^{[m,n]}_S(h_\f, x, y)^\M = \dom(\sigma)$,
    \item[3.] $\M \models \allpathBounds{h_\f}{x}{v} v \in \path^{[m,n]}_S(h_\f, x, y) \implies \psi$ iff $\M \models \psi[\ell/v]$ for all $\ell \in \dom(\sigma)$.
\end{enumerate}
\noindent Further, assume that for each $\ell \in \dom(\sigma)$ there is $\gpath{\sigma'}{\ell}{s(z)}{\f'}$ s. t. $|\sigma'| \leq k$, then:
\begin{enumerate}
    \item[4.] $\path^{[m,n]}_N(h_\f, h_{\f'}, x, y, z)^\M = \nesteddom(\sigma, \f', z)$,
    \item[5.] $\M \models \allpathBounds{h_\f}{x}{v'} \mathbb{P}^{\leq k}_{(h_\f', v')}\;v.\; v \in \path^{[m,n]}_N(h_\f, h_{\f'}, x, y, z) \implies \psi$ iff \mbox{$\M \models \psi[\ell/v]$}
    for all $\ell \in \nesteddom(\sigma, f', z)$.
\end{enumerate}
\end{lemma}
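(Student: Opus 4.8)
The plan is to establish the five items in order, reducing each to the previous ones, so that the whole argument rests on only two tools: path uniqueness (Corollary~\ref{corollary:path-uniqueness}) and the defining clauses of the inverse translation (Definition~\ref{def:model_translation}). The fact I would isolate first is that for corresponding models $(s,h) = \itranslate{\phi}{\M}$ one has $h(\ell,\f) = h_\f[\ell]^{\M}$ whenever $\ell \in \dom(h)$ carries the field $\f$; hence if $\gpath{\sigma}{s(x)}{s(y)}{\f}$ is a path, written $\sigma = \langle v_0,\dots,v_k\rangle$ with $v_0 = s(x)$, $v_k = s(y)$, $k = |\sigma|$ and all $v_i$ pairwise distinct, then $v_i = h_\f^{i}[x]^{\M}$ for every $0 \le i \le k$.

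For item~1 I would first prove the exact-length version, $\M \models \reach^{=i}(h_\f,x,y)$ iff there is an $\f$-path from $s(x)$ to $s(y)$ of length exactly $i$, and then obtain item~1 by disjunction over $m \le i \le n$; the exact-length version is also what items~2--5 reuse. Direction ($\Leftarrow$) is immediate from the identity $v_i = h_\f^{i}[x]^{\M}$. For direction ($\Rightarrow$), from $\M \models \reach^{=i}(h_\f,x,y)$ I set $\ell_j := h_\f^{j}[x]^{\M}$ and argue that $\ell_0,\dots,\ell_i$ are pairwise distinct with $\ell_0,\dots,\ell_{i-1}$ allocated and carrying $\f$, so $\langle \ell_0,\dots,\ell_i\rangle$ is a genuine acyclic $\f$-path of length $i$ from $s(x)$ to $s(y)$. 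I expect this to be the one step that is not pure bookkeeping, and hence the main obstacle: $h_\f$ is a \emph{total} array, so a priori the iteration $\ell_j$ might revisit a location or leave $D^{\M}$ before reaching $s(y)$, and ruling this out requires the shape of the models actually produced by the translation (the iterates stay in $D^{\M}$ at the appropriate sort, whence, by functionality of $\f$-edges together with acyclicity of paths, they cannot cycle before hitting $s(y)$). Everything downstream is definitional.

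Items~2 and~4 unfold the macro $\path^{[m,n]}_C$. Given the assumed path $\sigma$ of length $k$, the exact-length version gives $\M \models \reach^{=k}(h_\f,x,y)$, and acyclicity of $\sigma$ together with path uniqueness excludes $\M \models \reach^{=j}(h_\f,x,y)$ for $m \le j < k$ (it would force $v_j = s(y) = v_k$, contradicting distinctness), so the nested \textsf{if-then-else} fires on the branch $j = k$ and $\path^{[m,n]}_C(h_\f,x,y)^{\M} = \bigcup_{0\le i<k} C(v_i)^{\M}$. Item~2 is the case $C = \lambda\ell.\{\ell\}$, giving $\{v_0,\dots,v_{k-1}\} = \dom(\sigma)$. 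Item~4 is the case $C = \lambda\ell.\,\path_S(h_{\f'},\ell,z)$: the hypothesis yields, for each $v_i$, a path $\gpath{\sigma'}{v_i}{s(z)}{\f'}$ of length $\le k$; item~2 gives $C(v_i)^{\M} = \dom(\sigma')$, and the union over $i$ is exactly $\nesteddom(\sigma,\f',z)$ by its definition.

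Items~3 and~5 expand the path quantifier, $\allpathBounds{h_\f}{x}{v}\Phi = \bigwedge_{0\le i\le n}\Phi[h_\f^{i}[x]/v]$, with $\Phi \equiv (v \in \path^{[m,n]}_S(h_\f,x,y) \implies \psi)$. By item~2, $\path^{[m,n]}_S(h_\f,x,y)^{\M} = \dom(\sigma) = \{v_0,\dots,v_{k-1}\}$, and since $k \le n$ this set is contained in $\{h_\f^{i}[x]^{\M} : 0 \le i \le n\}$. A conjunct's antecedent therefore holds exactly when $h_\f^{i}[x]^{\M} \in \dom(\sigma)$: conjuncts whose instantiation lies outside $\dom(\sigma)$ are vacuous, while those landing inside together impose precisely $\bigwedge_{\ell\in\dom(\sigma)}\psi[\ell/v]$ (every element of $\dom(\sigma)$ is hit, and repeated hits are harmless). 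Item~5 is the same argument one layer deeper: by item~2 the outer iterates cover $\dom(\sigma)$ and, for those, the inner iterates cover the corresponding inner-path domains; the guard $v \in \path^{[m,n]}_N(h_\f,h_{\f'},x,y,z)$ restricts --- by item~4 --- to $\nesteddom(\sigma,\f',z)$ and makes every other conjunct vacuous, so the nested conjunction collapses to $\bigwedge_{\ell\in\nesteddom(\sigma,\f',z)}\psi[\ell/v]$.
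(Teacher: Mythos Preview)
Your approach matches the paper's: both rest on Corollary~\ref{corollary:path-uniqueness} together with the identity $h(\ell,\f)=h_\f[\ell]^{\M}$ for allocated $\ell$. The paper's own proof is a one-liner (items (1), (2), (4) ``directly follow from'' path uniqueness; (3) from $|\dom(\sigma)|\le n$; (5) analogously), so your unpacking of items 2--5 is strictly more detailed than the paper's and is correct.

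You are also right to isolate the $(\Rightarrow)$ direction of item~1 as the only non-bookkeeping step, and the concern you raise is genuine: for an arbitrary corresponding pair $(s,h)=\itranslate{\phi}{\M}$ the claim need not hold --- take $D^{\M}=\emptyset$ with $h_\f[x]^{\M}=y^{\M}\neq x^{\M}$; then $\M\models\reach^{=1}(h_\f,x,y)$, yet the corresponding heap is empty and admits no $\f$-path of length~$1$. The paper's proof does not address this either. Your suggested fix (appeal to ``the shape of the models actually produced by the translation'') is the right intuition but lies outside the lemma's stated hypotheses; it works precisely because the lemma is only ever invoked inside the inductive-predicate cases of Theorem~\ref{theorem:correctness} and Lemma~\ref{lemma:footprints}, where the constraints $D=F=\path_S(\ldots)$ and typing force the iterates $h_\f^{j}[x]^{\M}$ into $D^{\M}$ at the right sort, and minimality of the first index reaching $y$ then yields distinctness. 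So the gap you flagged is real and shared with the paper; everything else in your argument is sound.
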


\begin{proof} The claims (1), (2) and (4) directly follow from Corollary \ref{corollary:path-uniqueness}. The claim (3) follows from the fact that the size of the path $\dom(\sigma)$ is at most $n$. The proof of (5) is analogical to (3) using the fact that size of each inner path is at most $k$.
\end{proof}

\begin{proof}[Lemma \ref{lemma:footprints}]
We write $\ffootprints{\phi}$ instead of $\{F^\M \;|\; F \in \mathsf{FP}^{\#}(\phi)\}$. We will show that $\footprints{\phi} \subseteq \ffootprints{\phi}$ writing just $\Footprints{\phi} \subseteq \FFootprints{\phi}$ for short.

If $(s,h) \not\models \phi \star \true$, then there does not exists $F \subseteq \dom(h)$ such that $(s,h|_F) \models \phi$ and the claim trivially holds since $\Footprints{\phi} = \emptyset$. Assuming that $(s,h) \models \phi \star \true$, we proceed by structural induction on $\phi$.

\begin{itemize}
    \item \textit{Base cases.} For all the base cases, it holds that $\Footprints{\phi} = \FFootprints{\phi} = \{F\}$ where~$F$ is defined as follows. If $\phi$ is a pure atom or a pointer assertion $x \mapsto \wildcard$, then $F = \{\emptyset\}$ or $F = \{s(x)\}$, respectively. If $\phi$ is an inductive predicate, its unique footprint is given by Lemma~\ref{lemma:path-semantics} and its encoding in SMT is correct by Lemma~\ref{lemma:auxiliary-predicates}.

    \item \textit{Induction steps.} Assume that $\phi$ is a binary connective with operands $\psi_1$ and $\psi_2$. By induction hypothesis, $\Footprints{\psi_i} \subseteq \FFootprints{\psi_i}$ for $i = 1,2$.

    \vspace{2pt}

    \begin{itemize}
        \item If $\phi \triangleq \psi_1 \land \psi_2$, let $i$ denote the operand for which the set $\FFootprints{\psi_i}$ has smaller cardinality. Then:
        $$\Footprints{\phi} \subseteq \Footprints{\psi_1} \cap \Footprints{\psi_2} \subseteq \Footprints{\psi_i} \subseteq \FFootprints{\psi_i} = \FFootprints{\phi},$$

        \noindent where the first inclusion follows from the definition of footprints and the last follows from IH.

        \item If $\phi \triangleq \psi_1 \gneg \psi_2$, then the proof is analogical to the previous case with $i= 1$.

        \item If $\phi \triangleq \psi_1 \lor \psi_2$, then we have:
        $$\Footprints{\phi}
                \subseteq \Footprints{\psi_1} \cup \Footprints{\psi_2}
                \subseteq \FFootprints{\psi_1} \cup \FFootprints{\psi_2}
                = \FFootprints{\phi},
        $$

        \noindent where the first inclusion follows from the definition of footprints and the second from IH.
    
        \item If $\phi \triangleq \psi_1 \star \psi_2$, then we have:
        \begin{align*}
        \Footprints{\phi} 
            &\subseteq \{F_1 \cup F_2 \;|\; F_1 \cap F_2 = \emptyset \text{ and } F_1 \in               \Footprints{\psi_1}, F_2 \in \Footprints{\psi_2}\}\\
            &\subseteq \{F_1 \cup F_2 \;|\;  F_1 \in \Footprints{\psi_1}, F_2 \in                       \Footprints{\psi_2}\}\\
            &\subseteq \{F_1 \cup F_2 \;|\;  F_1 \in \FFootprints{\psi_1}, F_2 \in                       \FFootprints{\psi_2}\}\\
            &= \FFootprints{\phi}.
        \end{align*}

    \end{itemize}
\end{itemize}
\end{proof}

\newpage
\enlargethispage{6mm}
\begin{proof}[Theorem \ref{theorem:correctness}]
We reformulate the original claim to obtain a stronger inductive hypothesis. We prove that for arbitrary SL formula $\psi$ and corresponding models $(s,h)$ and $\M$, it holds that:
$$\M \models \translate{\psi}{F} \land D = F \iff (s,h) \models \psi.$$

\noindent We proceed by structural induction on $\phi$:
\begin{itemize}
\item For a pure formula $\psi \triangleq x \bowtie y$ where $\bowtie \;\in \{=, \neq\}$, we have:
\begin{align*}
(s,h) \models \psi
    & \iff s(x) \bowtie s(y) \land \dom(h) = \emptyset && \text{(SL semantics)}\\
    & \iff \M \models x \bowtie y \land D = \emptyset  && \text{(Model correspondence)}\\
    & \iff \M \models \translate{\psi}{F} \land D = F  && \text{(Formula translation)}
\end{align*}

\item For a pointer assertion $\psi \triangleq x^S \pto \langle \f_i : f_i \rangle_{i \in I}$, we have that $(s,h) \models \psi$
\begin{align*}
    & \iff \dom(h) = \{s(x)\} \land h(s(x)) = \langle \f_i : f_i \rangle_{i \in I} 
        && \text{(SL semantics)}\\
    & \iff \M \models D = \{x\} \land x \in D_S \land \bigwedge\!_{i \in I} \; h_{\f_i}[x] = f_i
        && \text{(Model correspondence)}\\
    & \iff \M \models \translate{\psi}{F} \land D = F                                                   
        && \text{(Formula translation)}
\end{align*}

Notice that the condition $x \in D_S$ is needed to correctly reconstruct $h(x)$ and in our translation, it is ensured by axioms $\mathcal{A}_\phi$.

\vspace{4pt}

\item The cases of inductive predicates follow from Lemma \ref{lemma:path-semantics} and Lemma \ref{lemma:auxiliary-predicates}.

\item For a boolean connective $\psi \triangleq \psi_1 \bowtie \psi_2$ where $\bowtie \;\in \{\land, \lor, \gneg\}$, we have:
\begin{align*}
(s,h) \models \psi 
    & \iff (s, h) \models \psi_1 \bowtie (s,h) \models \psi_2 && \text{(SL semantics)}\\
    & \iff \M \models \translate{\psi_1}{F} \bowtie \translate{\psi_2}{F} \land D = F && \text{(Induction hypothesis)}\\
    & \iff \M \models \translate{\psi}{F} \land D = F  && \text{(Formula translation)}
\end{align*}

\item For a separating conjunction $\psi_1 \star \psi_2$, we proceed as follows. Let $\mathcal{F}_i = \ffootprints{\psi_i}$. From Lemma \ref{lemma:star_simplified} and Lemma \ref{lemma:footprints}, we have that $(s,h) \models \phi$ iff
    $$\bigvee_{F_1 \in \mathcal{F}_1} \;
        \bigvee_{F_2 \in \mathcal{F}_2} \;
        \bigwedge_{i = 1, 2} (s, h|_{F_i}) \models \psi_i \;\land\; F_1 \cap F_2 = \emptyset \;\land\; F_1 \cup F_2 = \dom(h).$$

    Let $\M_i$ be the corresponding model to $(s,h|_{F_i})$ for $i = 1, 2$. By IH:
    $$\bigvee_{F_1 \in \mathcal{F}_1} \;
        \bigvee_{F_2 \in \mathcal{F}_2} \;
        \bigwedge_{i = 1, 2} (\M_i\models \translate{\psi_i}{F_i} \land D = F_i)  \;\land\; F_1 \cap F_2 = \emptyset \;\land\; F_1 \cup F_2 = D^\M.$$
    Using Corollary \ref{corollary:model-composition}, we have that $\M = \M_1 \oplus \M_2$. Applying Lemma \ref{lemma:star-lemma}, we have an equivalent claim:
    $$\bigvee_{F_1 \in \mathcal{F}_1} \;
        \bigvee_{F_2 \in \mathcal{F}_2} \;
        \M \models \translate{\psi_1}{F_1} \land \translate{\psi_2}{F_2} \;\land\; F_1 \cap F_2 = \emptyset \;\land\; F_1 \cup F_2 = D^\M.$$
    Finally, this is equivalent to the following claim:
    $$\M \models \bigvee_{F_1 \in \mathcal{F}_1} \;
        \bigvee_{F_2 \in \mathcal{F}_2} \;
        \bigwedge_{i = 1, 2} \translate{\psi_i}{F_i} \;\land\; F_1 \cap F_2 = \emptyset \;\land\; F_1 \cup F_2 = D,$$
    
    which is the case iff $\M \models \translate{\psi}{F} \land D = F$.
    \qed
\end{itemize}
\end{proof}

    \newpage
\section{Complexity of BSL Satisfiability}
\label{appendix:complexity}

\enlargethispage{6mm}

In this section, we give details for the complexity results from Section \ref{section:complexity}. We use $n = \bound(\phi)$ to denote the location bound of the formula $\phi$ and note that $n$ is linear w.r.t. the size of  $\phi$.
\begin{proof}[Theorem \ref{theorem:complexity}]
We prove that $\mathtt{SatQuantif}$ produces formulae of polynomial size, in particular $\mathcal{O}(n^3)$ for SLLs and DLLs, and $\mathcal{O}(n^5)$ for NLLs. We proceed by analysis of asymptotic sizes of auxiliary predicates used for the translation of inductive predicates. Let $x$ be a term of the size $\mathcal{O}(n)$ and let $h, h', y, z$ be terms of the size $\mathcal{O}(1)$. This assumption is justified by the fact that parameters of all predicates in the translation except the roots of considered paths (denoted by $x$) are variables. The root $x$ can be a term of the form $h^k_\n[v]$ or $h^k_\n[h^\ell_\t[v]]$ (where $v$ is a variable) and is therefore always of linear size because the numbers $k$ and $\ell$ are never greater than the location bound $n$. The size of auxiliary predicates is then:
\begin{itemize}
    \item $|\reach^{=k}(h, x, y)| \in \mathcal{O}(n)$ when $k \leq n$ because its size is dominated by $x$.
    \item $|\reach^{[0, n]}(h, x, y)| \in \mathcal{O}(n^2)$ as it contains $n$-times $\reach^{=k}(h, x, y)$ for \mbox{$k \leq n$}.
    \item $|\path^{=k}_S(h, x, y)| \in \mathcal{O}(n^2)$ when $k \leq n$ because it is a union of $k$ terms of the form $\{h^i[x]\}$ for $i \leq k \leq n$ and thus of the size $\mathcal{O}(n)$.
    \item $|\path^{[0, n]}_S(h, x, y)| \in \mathcal{O}(n^3)$ because it consists of $\mathcal{O}(n)$ branches, each of them dominated by the size of $\path^{=k}_S(h, x, y)$ for $k \leq n$ which is $\mathcal{O}(n^2)$.
    \item $|\path^{=k}_N(h, h', x, y, z)| \in \mathcal{O}(n^4)$ when $k \leq n$ because it is a union of $k$ terms of the form $\path^{[0, n]}_S(h', h^i[x], y)$ for $i \leq k \leq n$ and thus of the size $\mathcal{O}(n^3)$.
    \item $|\path^{[0, n]}_N(h, h', x, y, z)| \in \mathcal{O}(n^5)$ because it consists of $\mathcal{O}(n)$ branches, each of them dominated by the size of $\path^{=k}_N(h, h', x, y, z)$ for $k \leq n$ which is $\mathcal{O}(n^4)$.
    
\end{itemize}
Translation of SLLs and DLLs is dominated by the size of the $\path_S$ term and translation of NLLs is dominated by the size the $\path_N$ term. Note, that this holds for the translation version using universal quantifiers over locations. When the optimised version with path quantifiers is used for NLLs, we may obtain asymptotically larger formulae, but still polynomial.

For all the other atomic formulae, the size of translated formula is constant. The translation of all binary connective is linear in the size of their operands (also in the case of separating conjunction when the method $\mathtt{SatQuantif}$ is used). The size of axioms is also linear. Thus, the final size of the translated formula is dominated by the translation of inductive predicates. \qed

\end{proof}

\begin{proof}[Theorem \ref{theorem:complexity}]
We give the full definition of the reduction from QBF based \mbox{on~\cite{CC_results}}. Let $F \triangleq Q_1 x_1 \ldots Q_m x_m.\; F'$ be a ground formula with variables $X = \{x_1, \ldots, x_m\}$ (w.l.o.g., they are all distinct), quantifiers $Q_i \in \{\forall, \exists\}$, and propositional body $F'$. Recall that for a set of variables~$Y$, we define $\mathsf{arbitrary}[Y] \triangleq \bigstar_{y \in Y} (y \pto \nil \lor \emp)$. The reduction of~$F$ is defined as $\red{F} \hspace{1pt}\land\hspace{1pt} \bigstar_{x \in X} x \mapsto \nil$ where $\red{\cdot}$ is defined as:
\begin{align*}
\red{x}                   &\triangleq \mathsf{arbitrary}[X] \star x \mapsto \nil
    &&\hspace{-40pt}\red{F \land G}      \triangleq \red{F} \land \red{G}\\
\red{\neg x}              &\triangleq \mathsf{arbitrary}[X \setminus \{x\}]
    &&\hspace{-40pt}\red{F \lor G}       \triangleq \red{F} \lor \red{G}\\
\red{\exists x.\; F}      &\triangleq (x \mapsto \nil \lor \emp) \star \red{F}
    &&\hspace{-40pt}\red{\neg F}         \triangleq \mathsf{arbitrary}[X] \gneg \red{F}\\
\red{\forall x.\; F}      &\triangleq \mathsf{arbitrary}[X] \gneg ((x \mapsto \nil \lor \emp) \star \red{\neg F})
\end{align*} 
\end{proof}



    \newpage
\section{Bitvector Encoding}
\label{appendix:bitvectors}

In this section, we sketch the bitvector encoding that we have used for the experimental evaluation in Section \ref{section:experiments}. We assume a subset of the theory of \textit{fixed sized bitvectors}\footnote{\url{https://smtlib.cs.uiowa.edu/theories-FixedSizeBitVectors.shtml}} with the signature containing the following symbols: $\bvor$ (bit-wise or), $\bvand$ (bit-wise and), $\overline{\;\cdot\;}$~(bit-wise negation), $\shiftLeft$ (left shift), and $b[i]$ (extraction of the $i$-th least significant bit of the bitvector~$b$, interpreted as a boolean value). We further write $k_w$ to denote the unsigned representation of integer~$k$ as a bitvector of width $w$.

\vspace*{-1mm}\paragraph{Set encoding.}

The main idea of the bitvector encoding is to represent location sets as bitvectors where the boolean value of $i$-th least significant bit represents the membership of $i$-th location in the set. We therefore define the width of considered bitvectors $w = n$, where $n$ is the location bound of the input formula. To reduce the state space, one could consider computing a lower width by excluding locations that cannot be allocated (such as $\nil$), but we have not done this so far. We encode the set operations used in our encoding as follows.
\begin{align*}
\begin{split}
     \emptyset &\triangleq 0_w\\
     X_1 \cup X_2 &\triangleq X_1 \bvor X_2\\
     X_1 \cap X_2 &\triangleq X_1 \bvand X_2
\end{split}
\begin{split}
     \{x_1, \ldots, x_m\} &\triangleq 1_w \shiftLeft x_1 \bvor \cdots \bvor 1_w \shiftLeft x_m\\
     x \in X &\triangleq X[x] \\
     X_1 \subseteq X_2 &\triangleq \overline{X_1} \bvor X_2 = \overline{0_w}
\end{split}
\end{align*}

\vspace*{-1mm}\paragraph{Memory model axioms.}

We represent locations also as bitvectors of the same width~$w$. For consistency, we need to ensure that each location used in a model is interpreted as a bitvector lesser than~$n$ (greater bitvectors would not fit into sets represented as bitvectors which have size $w = n$). We first define $\L = \{0_w, 1_w, \ldots, (n-1)_w\}$ and axioms defining sets representing location sorts from Section \ref{section:encoding-memory-model} as:
\begin{align*}
    \mathcal{A}_{D_\lsSort} \triangleq
        D_\lsSort  &= \{0_w\} \cup \{1_w \ldots, (n_\lsSort)_w\},\\
    \mathcal{A}_{D_\dlsSort} \triangleq
        D_\dlsSort &= \{0_w\} \cup \{(n_\lsSort + 1)_w, \ldots, (n_\lsSort + n_\dlsSort)_w\},\\
    \mathcal{A}_{D_\nlsSort} \triangleq
        D_\nlsSort &= \{0_w\} \cup \{(n_\lsSort + n_\dlsSort + 1)_w, \ldots, (n_\lsSort + n_\dlsSort + n_\nlsSort)_w\}.
\end{align*}
\noindent Notice that those axioms will ensure that all locations referenced by some variable are interpreted as bitvectors lesser that $n$ (since $n = n_\lsSort + n_\dlsSort + n_\nlsSort + 1)$. To ensure the same for anonymous locations, we add the following axiom to restrict all heap images to be lesser than $n$:
\begin{align*}
\mathcal{A}_\mathrm{heap} \triangleq
    \bigwedge_{\f \in \Fields} \;
    \bigwedge_{\ell \in \L} \;
    h_\f[\ell] < n_w.
\end{align*}

\noindent Putting it all together, the axioms in the bitvector encoding are defined as:
$$\mathcal{A}_\phi \triangleq 
    \nil = 0_w \;\land\;
    \nil \not \in D \land\; \mathcal{A}_\mathrm{heap} \;\land \bigwedge_{S \in \Sort}\!\!\! \bigl( \mathcal{A}_{D_S} \land
\!\!\!\!\!\bigwedge_{x \in \vars_S(\phi)} \!\!\!\!\!x \in D_S\bigl).$$

\fi

\end{document}